\documentclass[conference]{IEEEtran}
\IEEEoverridecommandlockouts
\usepackage{cite}
\usepackage{amsmath,amssymb,amsfonts}
\usepackage{algorithmic}
\usepackage{graphicx}
\usepackage{textcomp}
\usepackage{xcolor}
\def\BibTeX{{\rm B\kern-.05em{\sc i\kern-.025em b}\kern-.08em
    T\kern-.1667em\lower.7ex\hbox{E}\kern-.125emX}}

\newcommand{\hide}[1]{}
\newtheorem{Axiom}{Axiom}

\usepackage[linesnumbered,ruled,vlined]{algorithm2e}
\usepackage{url}
\usepackage{verbatim}
\usepackage{graphicx}
\usepackage{multirow}
\usepackage{balance}
\usepackage{lscape}
\usepackage{float}
\usepackage{xcolor}
\usepackage{paralist}
\usepackage{booktabs}
\usepackage{xspace}
\usepackage{listings}
\usepackage{lipsum}
\usepackage{subcaption}

\usepackage{enumitem}
\usepackage[table]{xcolor} 
\usepackage[most]{tcolorbox}  

\definecolor{lightblue}{RGB}{227, 242, 253}

\usepackage{mdframed}

\newtheorem{innerexample}{Example}[section]

\newenvironment{example}
  {\begin{mdframed}[
      backgroundcolor=lightblue,
      linecolor=lightblue,
      roundcorner=5pt,
      leftmargin=0pt,
      rightmargin=0pt,
      innerleftmargin=0pt,
      innerrightmargin=0pt,
      innertopmargin=2pt,
      innerbottommargin=2pt,
      skipabove=5pt,
      skipbelow=0pt
    ]
    \begin{innerexample}\itshape}
  {\end{innerexample}\end{mdframed}}

\newtheorem{definition}{Definition}
\newcommand{\qed}{$\square$}
\newtheorem{proof}{Proof}
\newtheorem{thm}{Theorem}
\usepackage{xcolor}

\definecolor{ForestGreen}{rgb}{0.0, 0.66, 0.47}
\definecolor{RubineRed}{rgb}{1.0, 0.0, 0.31}
\usepackage{hyperref}

\DeclareMathOperator*{\argmax}{\arg\!\max}

\usepackage{dsfont}
\newcommand{\AMOU}{\Cup}

\newcommand{\sTable}{table\xspace}
\newcommand{\cTable}{Table\xspace}
\newcommand{\degreeOfDilution}{degree of dilution \xspace}
\newcommand{\dod}{$\Delta$\xspace}
\newcommand{\Qschema}{$Q(B_1, \dots, B_z)$\xspace}
\newcommand{\Tschema}{$T(A_1, \dots, A_n)$\xspace}
\newcommand{\Tprimeschema}{$T'(A'_1, \dots, A'_m)$\xspace}
\newcommand{\nullval}{\text{$\perp$}\xspace}
\newcommand{\estimatename}{\ensuremath{\mathsf{ANTs}}\xspace}
\newcommand{\DUST}{\ensuremath{\mathsf{DUST}}\xspace}
\newcommand{\Baseline}{\ensuremath{\mathsf{Baseline}}\xspace}

\newcommand{\StarmieGold}{\textcolor{black}{\ensuremath{\mathsf{StarmieGold}}}\xspace}
\newcommand{\StarmieOne}{\ensuremath{\mathsf{StarmieOne}}\xspace}
\newcommand{\blockone}{\ensuremath{\mathsf{ScaNN}}\xspace}
\newcommand{\blocktwo}{\ensuremath{\mathsf{kNN-Join}}\xspace}
\newcommand{\blockthree}{\ensuremath{\mathsf{DeepBlocker}}\xspace}
\newcommand{\ERNov}{\ensuremath{\mathsf{ER}}\xspace}
\newcommand{\GMC}{\ensuremath{\mathsf{GMC}}\xspace}
\newcommand{\Starmie}{\ensuremath{\mathsf{Starmie}}\xspace}

\newcommand{\name}{\ensuremath{\mathsf{NTS}}\xspace}
\newcommand{\semnov}{\ensuremath{\mathsf{SemNov}}\xspace}
\newcommand{\GMM}{\ensuremath{\mathsf{GMM}}\xspace}

\newcommand{\bigA}{\mathcal{A}}

\newcommand{\bigT}{\mathcal{T}}

\usepackage{xcolor}      
\usepackage[most]{tcolorbox}  
\newif\iftechreport
\newcommand{\ifnottechreport}[2]{\iftechreport #2\else #1\fi}

\begin{document}



\title{Novel Table Search [Technical Report]}

\author{\IEEEauthorblockN{1\textsuperscript{st} Besat  Kassaie}
\IEEEauthorblockA{
\textit{David R. Cheriton School of Computer Science}\\
\textit{University of Waterloo}\\
Canada\\
bkassaie@uwaterloo.ca}
\and
\IEEEauthorblockN{2\textsuperscript{nd} Ren\'ee J. Miller}
\IEEEauthorblockA{
\textit{David R. Cheriton School of Computer Science}\\
\textit{University of Waterloo}\\
Canada\\
rjmiller@uwaterloo.ca}
}
\techreporttrue
\maketitle

\begin{abstract}
Avoiding redundancy in query results has been extensively studied in relational databases and information retrieval, yet its implications for data lakes remain largely unexplored. We bridge this gap by investigating how to discover unionable tables that contribute new information for a given query table in large-scale data lakes.
We formally define Novel Table Search (\name) as the problem of finding tables that are novel with respect to a given query table and identify two desirable properties that any  scoring function for \name should satisfy.
We introduce a concrete scoring mechanism designed to maximize syntactic novelty, prove that it satisfies the proposed properties, and show that the associated optimization problem is NP-hard. To address this challenge, we develop an efficient approximation technique based on penalization, i.e., Attribute-Based Novel Table Search (\estimatename).
We propose three additional \name variants to achieve syntactic novelty and  introduce two evaluation metrics for syntactic novelty. 
Through extensive experiments, we demonstrate that \estimatename  outperforms other methods in capturing syntactic novelty across evaluation metrics and various benchmarks, while also achieving the lowest execution time. 
\end{abstract}

\begin{IEEEkeywords}
Data lakes, Table Search, Novelty
\end{IEEEkeywords}

\section{Introduction}\label{sec:intro}
There is a well-established line of research focused on developing techniques for discovering and retrieving targeted information from vast collections of datasets, i.e., data lakes~\cite{DBLP:journals/pvldb/NargesianZPM18,santos23,       DBLP:conf/icde/SantosBMF22, DBLP:journals/pvldb/CasteloRSBCF21, DBLP:conf/icde/DongT0O21, DBLP:conf/icde/KoutrasSIPBFLBK21, DBLP:conf/sigmod/ZhangI20, DBLP:conf/www/BrickleyBN19, DBLP:conf/sigmod/ZhuDNM19, DBLP:journals/debu/MillerNZCPA18, DBLP:journals/pvldb/FanWLZM23,DBLP:journals/vldb/ChapmanSKKIKG20}. 
Some methods allow users to search for relevant datasets by expressing requirements through keywords, while others enable users to expand their query tables either vertically (by adding more attributes) or horizontally (by adding more tuples) to support downstream data analysis tasks. These methods use a variety of relevance metrics to evaluate how relevant a dataset or table is for a query and typically do not consider novelty or diversity of results. 
Diversity is typically understood as a coverage or breadth requirement while novelty means avoiding redundancy or duplication in the results~\cite{DBLP:conf/sigir/ClarkeKCVABM08}.  
This oversight may 
have negative impact on both data utilization and decision-making quality.  As an  example, consider a physician analyzing a dataset to study the side effects of a medication on a patient population. If the discovered tables from the data lake are relevant, but only contain patients with the same characteristics as the patients in a query table, the analysis may become skewed by neglecting other segments of the population.
This arises because relevance metrics are typically based on a variety of similarity metrics (including syntactic measures like set-overlap  and semantic measures like word-embedding or table-embedding similarity)~\cite{DBLP:journals/vldb/ChapmanSKKIKG20}.
In many applications, users want relevant answers that contain novel information and where different answers are diverse.  Consider a data market~\cite{DBLP:journals/isr/MehtaDJM21, DBLP:journals/corr/abs-2411-07267} where 
a data buyer wants unionable tables. Since each table has an acquisition cost, she will often prefer to spend money on tables that are unionable, but that contain the most new information and when buying multiple tables, she will want a diverse or varied set of tables, meaning they are novel with respect to the query table and each other.

Given a query table, the goal of unionable table search is to identify the most unionable tables from a data lake~\cite{DBLP:journals/pvldb/NargesianZPM18}. Two tables are considered unionable if they share  attributes with similar semantics indicating the union of the two tables is semantically meaningful~\cite{DBLP:journals/pvldb/NargesianZPM18}.   More recent work considers tables unionable if they share attributes with similar semantics and the relationship between attributes or table context is similar~\cite{santos23,DBLP:journals/pvldb/FanWLZM23}. A common approach is to compute similarity between attributes using syntactic features (e.g., observed values) and between semantic representations (e.g., embeddings)~\cite{DBLP:journals/pvldb/NargesianZPM18,DBLP:conf/icde/BogatuFP020}. Consequently, the most trivially unionable table for a query table would be the table itself. However, identifying novel unionable tables introduces a new twist: we seek tables that are sufficiently similar to support union operations, yet dissimilar enough to introduce novelty.  In this work, we formally define Novel
Table Search (\name) as the problem of finding unionable tables that are
novel with respect to a given query table.

We define \name as a reranking step (Step~2 in Figure~\ref{fig:novelTabSearch}) that operates on the output of a unionable-table search system. Given an initial set of candidate tables that are unionable with the query table, \name selects those that are most novel. We define a specific novelty measure for \name that is  based on syntactic novelty  that penalizes tables that contain duplicate (redundant) data (while still having novelty). We propose several techniques for solving \name, each employing a different strategy, and evaluate both their ability to capture syntactic novelty and their efficiency.

\estimatename is designed expressly to maximize novelty and minimize duplicates in tables; \GMC is a method originally developed for query-result diversification~\cite{DBLP:conf/icde/VieiraRBHSTT11} that we adapt to target syntactic novelty; \ERNov an entity-centric approach that maximizes entity-resolution distance with the goal of returning tables containing novel entities; and 
 \semnov an approach that maximizes table semantic distance.

\begin{figure}[ht]
    \centering
    \includegraphics[width=0.99\linewidth]{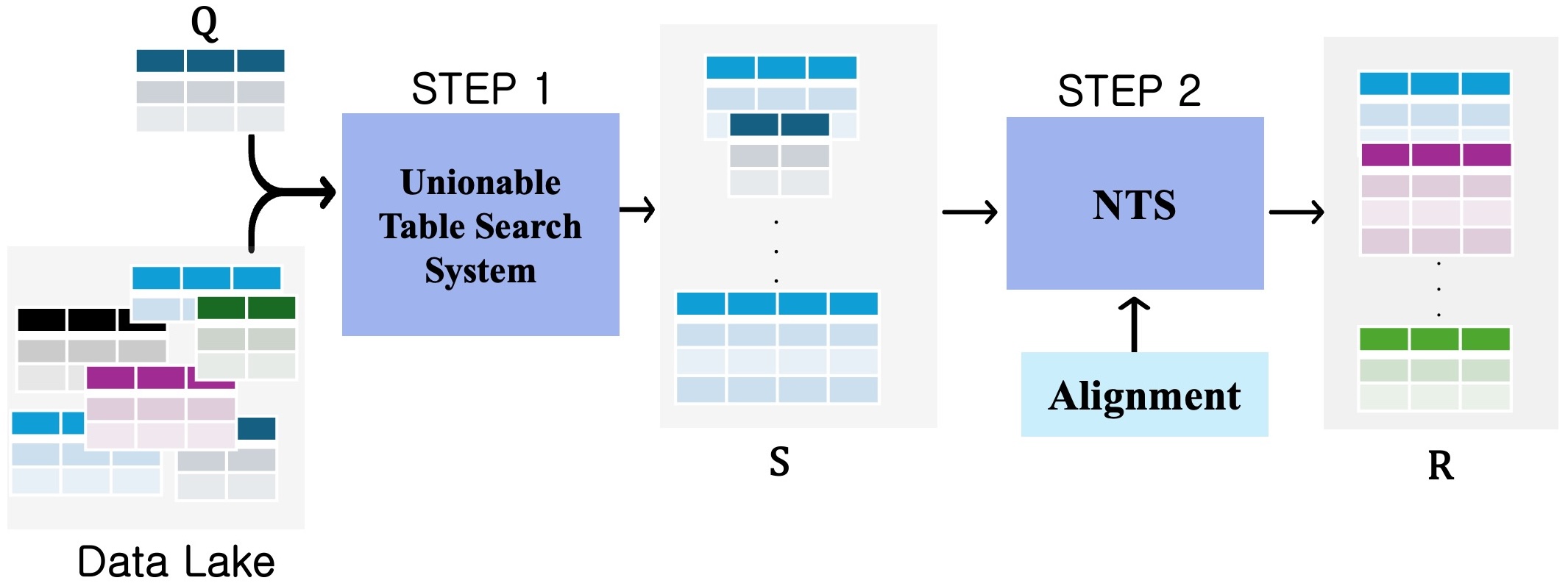}
    \vspace{-1.5 em}
    \caption{\small Given a query table $Q$, \name takes as input  $k$ unionable tables from Step~1 ($|S| = k$) and selects the $l$ most novel tables ($|R|\!=\!l$) using a novelty-aware scoring function. If the  search method lacks built-in attribute alignment, an external aligner is applied.}
    \label{fig:novelTabSearch}
\end{figure}

To illustrate, we walk through an example.

\begin{example}\label{exm:runing_exmple}
Figure~\ref{fig:mtvexample}(a-c) illustrates portions of three tables in the ugen-v2 dataset~\cite{DBLP:conf/vldb/PalKSM24}. Ugen-v2 is a recent table unionability benchmark. The user wishes to expand the query table $Q$ (Figure~\ref{fig:query1}), which contains information about artwork, by adding additional rows. She submits $Q$ to a unionable table search system and retrieves a ranked list of $k$ candidate tables. Among the benchmark candidates, the most unionable table has significant overlap with the query table. Our reranking component (\estimatename) then ranks $T_1$ (Figure~\ref{fig:u1}) as the most novel unionable table and $T_2$ (Figure~\ref{fig:u2_t2}) in seventh place.  \estimatename (Section~\ref{sec:ANTs}), identifies unionable tables by balancing high unionability with strong novelty. Notice that $T_1$ is more unionable with the query table (sharing five attributes) and novel (while the values in {\tt Medium} overlap with the query table, other values in {\tt Artwork} and {\tt Artist} are syntactically 
different.  Here $T_2$ is less unionable (only two unionable attributes) and has more {\tt Artist} overlap than $T_1$.
\end{example}

\begin{figure*}[t!]
\centering
\resizebox{.95\textwidth}{!}{%
 \begin{tabular}{c}
    \begin{subfigure}[t]{\textwidth}
      \centering
\begin{tabular}{llllll}
\textbf{ID} & \textbf{Artwork} & \textbf{Artist} & \textbf{Date Created} & \textbf{Medium} & \textbf{Style} \\
\hline
$t_1$ & The Mona Lisa & Leonardo da Vinci & 1503–1506 & Oil on poplar panel & High Renaissance \\

$t_2$ & The Hay Wain & John Constable & 1821 & Oil on canvas & Romanticism \\

$t_3$ & The Burial at Ornans & Gustave Courbet & 1849–1850 & Oil on canvas & Early Netherlandish \\
\end{tabular}
      \caption{\small Query table $Q$. Tuples are sampled from Art-History\_YZMEPGTH.csv}
      \label{fig:query1}
    \end{subfigure} \\

    \begin{subfigure}[t]{\textwidth}
      \centering
\begin{tabular}{lllllll}
\textbf{ID} & \textbf{Artwork} & \textbf{Artist} & \textbf{Date Created} & \textbf{Medium} & \textbf{Style} & \textbf{Condition} \\
\hline
$t_1$ & Water Lilies & Claude Monet & 1897–1926 & Oil on canvas & Nature & Good \\
$t_2$ & The Swing & Jean-Honoré Fragonard & 1767 & Oil on canvas & Figurative & Excellent \\
$t_3$ & The Fighting Temeraire & J.M.W. Turner & 1839 & Oil on canvas & Historical & Good \\
\end{tabular}
      \caption{\small Unionable table $T_1$. Tuples are sampled from Art-History\_UPFR2P3Y.csv}
      \label{fig:u1}
    \end{subfigure} \\ 

    \begin{subfigure}[t]{\textwidth}
      \centering
\begin{tabular}{lllll}
\textbf{ID} & \textbf{Artwork} & \textbf{Artist} & \textbf{Subject Matter} & \textbf{Cultural Context} \\
\hline
$t_1$ & Mona Lisa & Leonardo da Vinci & Portrait of Lisa Gherardini & General ... on later portraiture \\

$t_2$ & The Persistence of Memory & Salvador Dalí & Melting clocks & Permanent collection ... New York \\
$t_3$ & The Persistence of Memory & Salvador Dalí & Melting clocks & Museum of Modern Art, New York \\
\end{tabular}
      \caption{\small Unionable table $T_{2}$. Tuples are sampled from Art-History\_CW81XE6V.csv}
      \label{fig:u2_t2}
    \end{subfigure} \\ 

    \begin{subfigure}[t]{\textwidth}
      \centering
\begin{tabular}{llllll}
\textbf{ID} & \textbf{Artwork} & \textbf{Artist} & \textbf{Date Created} & \textbf{Medium} & \textbf{Style} \\
\hline
$t_1$ & The Mona Lisa & Leonardo da Vinci & 1503–1506 & Oil on poplar panel & High Renaissance \\

$t_2$ & The Hay Wain & John Constable & 1821 & Oil on canvas & Romanticism \\
$t_3$ & The Burial at Ornans & Gustave Courbet & 1849–1850 & Oil on canvas & Early Netherlandish \\
$t_4$ & Mona Lisa & Leonardo da Vinci & \nullval & \nullval & \nullval \\
$t_5$ & The Persistence of Memory & Salvador Dalí & \nullval & \nullval & \nullval \\
$t_{6}$ & The Persistence of Memory & Salvador Dalí & \nullval & \nullval & \nullval \\
\end{tabular}
      \caption{\small The result of left-outer-union between $Q$ and $T_2$, i.e.,  $T'=\AMOU(Q,T_{2})$.}
      \label{fig:amou}
    \end{subfigure} \\ 

    \begin{subfigure}[t]{\textwidth}
      \centering
\begin{tabular}{lllllll}
\textbf{ID} & \textbf{Artwork} & \textbf{Artist} & \textbf{Date Created} & \textbf{Medium} & \textbf{Style} & \textbf{Condition} \\
\hline
$t_1$ & Water Lilies & Claude Monet & 1897–1926 & Oil on canvas & Nature & Good \\
$t_2$ & The Swing & Jean-Honoré Fragonard & 1767 & Oil on canvas & Figurative & Excellent \\
$t_3$ & The Fighting Temeraire & J.M.W. Turner & 1839 & Oil on canvas & Historical & Good \\
$t_4$ & \textit{The Burial at Ornans} & \textit{Gustave Courbet} & \textit{1849–1850} & \textit{Oil on canvas} & \textit{Early Netherlandish} & \nullval \\
\end{tabular}
      \caption{\small Diluted unionable table $\tau(T_1, Q, \Delta)$ where $\Delta=0.2$. Tuple from $Q$ is \textit{italicized}.}
      \label{fig:u1Diluted}
    \end{subfigure} \\ 
\end{tabular}
       }
   \vspace{-0.6em}
\caption{\small Motivating example. Each tuple  has been augmented by  a unique identifier ID to distinguish it from others.} 
\label{fig:mtvexample}
\end{figure*}

This paper makes the following key contributions:

\noindent $\bullet$ We formally define the problem of identifying novel unionable tables, denoted \name, and specify two general properties that any novelty-aware scoring function $gscore$ should satisfy in the context of unionable table discovery.\\
\noindent $\bullet$ We propose a concrete syntactic scoring function, denoted $nscore$, that satisfies the above properties.  For this scoring function, we
prove that the associated optimization problem is NP-hard.\\
\noindent $\bullet$ We propose a new attribute-based approximation algorithm, \estimatename, to efficiently address \name.\\
\noindent $\bullet$  We develop three additional techniques.  Two methods are based on the syntactic novelty measure: \GMC, an adaptation of an existing method, and \ERNov, an entity-based technique, and \semnov, a new semantic novelty measure that we empirically evaluate to assess how well maximizing semantic novelty yields high syntactic novelty.\\
\noindent $\bullet$ 
We demonstrate that \estimatename also improves performance on a downstream machine learning task.

The remainder of the paper is organized as follows. Section~\ref{sec-preliminaires} introduces  notation.
Section~\ref{sec:overview} presents our scoring framework and outlines key properties essential for effective novelty scoring. In Section~\ref{sec:ANTs}, we describe our proposed approximation algorithm  \estimatename for efficiently identifying syntactically novel tables. Section~\ref{sec:baselines} introduces the baselines and the various methods developed for comparison. Section~\ref{sec:evaluationts} discusses the evaluation metrics developed to assess syntactic novelty and performance.  Section~\ref{sec:experiments} provides details on the experimental setup, datasets, and a comprehensive analysis of results.\footnote{Code and datasets are available at~\url{https://github.com/Besatkassaie/NTS/tree/master}}
Section~\ref{sec:ANTsMLTask_summary} presents the application of \estimatename to a downstream machine learning task.
Section~\ref{sec:related} reviews related work, and finally, Section~\ref{sec:conclusion} concludes the paper and outlines future research directions. \ifnottechreport{Detailed proofs and extended results are provided in Appendix of the technical report~\footnote{\url{https://github.com/Besatkassaie/NTS/blob/master/NTS_Technical_Report.pdf}}.}{Detailed proofs and extended results are provided in the Appendix.}

\section{Preliminaries}\label{sec-preliminaires}

We represent a data lake as a collection (set) of tables denoted as  $\bigT$.
We represent the schema of a \sTable $ T \! \in \! \bigT$  as \Tschema. With a slight abuse of notation, for a table \Tschema, we use  $T$  (without explicitly listing attributes) to refer to a \sTable instance  that conforms to the schema,  
$A_k \! \in \! T$ to refer to an attribute (column) of $T$ (equivalent to projecting the table $T$ on attribute $A_k$), and $X \!\subseteq\!$ \Tschema to refer to a subset of attributes in \Tschema. We denote the domain of attribute $A_i$ by $D_i$, that is, $T \! \subseteq\! D_1 \! \times\! \dots \!\times\! D_{n}$. We write $t \! \in \! T$ for a tuple in $T$  which is denoted as  $t= \langle v_1,\dots ,v_n \rangle$.  Throughout the paper, we assume that all data lake tables and query tables are non-empty.

We adopt a slight deviation from the traditional relational model by allowing tables to be multisets (bags) instead of strict sets since this is common in real data lakes.  Throughout this paper, unless explicitly stated otherwise, all set operations and relational operations over tables are interpreted in the context of multisets. For example, the projection operator $\pi$ is multiset projection and can therefore produce duplicate tuples.  Specifically, when referring to the union of two relations $ T$  and  $T'$, we assume it to be the multiset union (bag union), denoted as  $T \cup T'$. In a multiset, it is possible for two tuples to have identical values for all attributes. For notational convenience, we may assume that each tuple has a unique identifier to distinguish it from others.  

To perform the union on two unionable data lake tables (which may have different attribute names), we use alignments as defined by Nargesian et al.~\cite{DBLP:journals/pvldb/NargesianZPM18}.

\begin{definition}
    An  alignment $\mathcal{A}(T, T')$ is a one-to-one mapping from $X$ to $Y$ where $X\! \subseteq\!$ \Tschema and $Y \!\subseteq\!$ \Tprimeschema,  and $|X| \!= \!|Y|\! \geq \!1$.\footnote{When unambiguous, $\mathcal{A}(T, T’)$ is abbreviated as $\mathcal{A}$.}
\end{definition}

Tuples may have null (\nullval) values.  Since we assume that no additional information is available regarding the relations, such as details about their creation, schema, or possible constraints, we do not differentiate among the various interpretations of \nullval~\cite{DBLP:conf/edbt/GlavicMMPSV24}. Unlike SQL, we adopt a simple two-valued logic for comparisons with null. 
 That is, comparisons involving \nullval are treated as if  \nullval were a specific value, hence (  \nullval ==  \nullval is true and  \nullval == \textit{anything else} is false).  Furthermore, given  two tables \Tschema and \Tprimeschema and their corresponding  alignment $\mathcal{A} (T, T')$ we introduce a new ordered operator  denoted as $\AMOU(T, T', \mathcal{A})$ which is an asymmetric multiset  extension to the classic outer union introduced by Codd~\cite{DBLP:journals/tods/Codd79}.\footnote{$\mathcal{A}$ might be omitted for brevity; $T \AMOU T’$, $\AMOU(T, T’, \mathcal{A})$, and $\AMOU(T, T’)$ denote the same operation.}
This operation, which we call left-outer-union, computes the union of tuples from two relations, $T$  and  $T'$, that are only partially aligned, retaining all attributes of $T$ even if they have no aligned attribute in $T'$. Unaligned attributes from $T'$ are removed. \ifnottechreport{}{Let $X$ be the aligned set of attributes from T and $\mathcal{A}(X)$ the aligned attributes from $T'$.  Let $X'$ be the remaining (unaligned) attributes of T.  We project $T'$ on the aligned attributes $\mathcal{A}(X)$ and add to $T'$ additional attributes containing \nullval,  call this new relation $T''$.  Then $T \AMOU T'$ is the multiset union of $T$ and $T''$. Unlike SQL, even if all attributes align, we use multiset semantics and do not eliminate duplicates.}

\begin{example}
Consider the tables $Q$ and $T_2$ from Figure~\ref{fig:mtvexample}. Assuming attribute alignment is based on matching attribute names, the result of  $\AMOU(Q, T_2)$ is illustrated in Figure~\ref{fig:amou}. Notice all attributes from $Q$ are retained and notice this operator uses multiset semantics. For attributes of $Q$ that are not present in $T_2$, e.g., {\tt Style}, the corresponding entries for tuples originating from $T_2$ are padded with \nullval.  
\end{example}

\section{Novel Table Search}\label{sec:overview}
	
\subsection{\name Problem Definition}\label{subsec-problem}
Given a query \sTable  $Q$, the unionability search problem involves identifying the $k$ most unionable tables from a data lake~\cite{DBLP:journals/pvldb/NargesianZPM18,DBLP:journals/pvldb/FanWLZM23}. We adopt the definition of the unionability search problem from Fan et al.~\cite{DBLP:journals/pvldb/FanWLZM23}. That is, for two tables $T$ and  $T'$, a unionability scoring function $U(T, T')$ maps the input tables to a score in the range $[0,1]$. Note the unionability scoring function assumes an alignment has been computed between $T$ and $T'$.

\begin{definition}{\bf(\cTable Union Search)}\label{def-tus}
Given a collection of data lake tables $\bigT$, a query \sTable $Q$, and a unionability scoring function $U$ over pairs of tables,  \sTable union search determines a subset $S \subseteq \bigT$ where $|S| = k$ and $\forall T \in S$ and $T' \in \bigT-S $, we have $U(Q, T') \leq U(Q, T)$.
\end{definition}

 In contrast, Novel \cTable Search (\name)  aims to identify the  $l$  most novel and unionable tables from a data lake. In this work, we define \name as the second step (Step~2 in Figure~\ref{fig:novelTabSearch}) that operates on the output of a unionable-table search system. Specifically, we assume that, given a query \sTable  $Q$, a \sTable union search step has already identified a set  $S$  of unionable tables, where $|S| = k$. The second step then identifies an $R \subseteq S$  based on  novelty with respect to $Q$ where $|R|=l \leq k$. 
To do this, we assume a novelty scoring function that takes as input a query table $Q$ and a set of tables $R$ and outputs a score indicating the novelty of $R$ with respect to $Q$.
\begin{definition}{(\bf Novel \cTable Search)} \label{def-nts}
Given a query \sTable $Q$,   a set  of unionable tables $S$ where $|S| = k$,  and a novelty scoring function 
$\mathrm{gscore}$, Novel \cTable Search (\name) finds a subset $R \subseteq S$ where $|R| = l$ and $\forall S' \subseteq S$ where $|S'|=l$ and $R \neq S'$, we have $ \mathrm{gscore}(Q, S') \leq \mathrm{gscore}(Q, R)$.
\end{definition}

\subsection{Properties of Novelty Scoring Functions}\label{sec:generalproperties}
Given a query \sTable $Q$, we wish to  identify $l$ unionable tables (from a set $S$ of unionable tables) that exhibit maximum novelty  with $Q$. 
We  identified two  properties that a novelty scoring function should satisfy. 

\begin{Axiom} \label{axm:blDup}
(Blatant Duplicate Axiom)  
\vspace{-0.5em}
$$S'\!=\!S \cup \{Q\} \land Q \notin S \Rightarrow \mathrm{gscore}(S') \!<\! \mathrm{gscore}(S)$$
\end{Axiom}

\vspace{-0.5em}
This axiom states that if the query \sTable is included in the result set, the novelty score of the search result must be lower in comparison to a set without $Q$.

Given a query table  $Q$  and a unionable table  $T$, any tuple in  $T$  that already exists in  $Q$ should not contribute to the overall novelty of the search result, independent of the other tuples present in  $T$. 
To formally define this, we first define a \textit{diluted version} of $T$.
       
\begin{definition}
   Given a query $Q\in~$\Qschema, a unionable \sTable $T \in$~\Tschema, their corresponding alignment $\mathcal{A}$, and \degreeOfDilution $0<\! \Delta\! \leq 1$, a diluted version of $T$ with respect to $Q$ is denoted as $\tau(T,Q, \Delta)$ and contains $T$ outer-unioned with $\lceil \Delta \cdot |Q| \rceil$ tuples from $Q$.
   \begin{equation*}
       \tau(T,Q, \Delta)=\AMOU (T,Q',\mathcal{A} )
   \end{equation*}
   where   
   $Q' \subseteq Q$, $|Q'|\!=\!\lceil \Delta \cdot |Q| \rceil$.
  
\end{definition}

\begin{example}\label{exm:dilution}
Assume that Table $T_1$ in Figure~\ref{fig:u1} contains only the tuples shown, a possible dilution of $T_1$ with respect to $Q$, denoted $\tau(T_1, Q, \Delta)$ for $\Delta = 0.2$, is shown in Figure~\ref{fig:u1Diluted}.  
\end{example}

\begin{Axiom}(Dilution Axiom)
\vspace{-0.5em}
     $$S'\!\!=\!S \cup \{\tau(T_m, Q, \Delta)\} \land T_m \in S \! \Rightarrow\! \mathrm{gscore}(S')\! < \!\!\mathrm{gscore}(S) $$ 
 \end{Axiom}

Notice that both axioms are syntactic in that they state that the novelty score should penalize the inclusion of exact copies of tuples.  They do not say anything about semantically similar tuples. Next, we introduce a concrete syntactic scoring function, $nscore$, as a specific instance of $gscore$, and prove that it satisfies these axioms.

\subsection{Novelty Score}\label{subsec:NoveltyScore}
To build a novelty score over a table, we  use a tuple novelty score. Given a pair of tuples $t=\langle v_1, \cdots v_n\rangle$ and $t'=\langle v'_1, \cdots, v'_n\rangle$  we expect the tuple novelty score to satisfy the following conditions: \textbf{C1)} The tuple novelty score is  minimum  when two tuples are identical; \textbf{C2)} The tuple novelty score is maximum when both tuples have all non-null values and no corresponding attribute values are the same.  With this intuition, we define a tuple novelty score.


\begin{definition} \label{def:tup_pair_novelty} {\bf(Tuple Pair Novelty Score)}
Given two tuples with the same arity $n$, 
$t=\langle v_1, \cdots v_n\rangle$,  and $t'=\langle v'_1, \cdots, v'_n\rangle$, their tuple pair novelty score is defined as:
\vspace{-0.5em}
\begin{equation} \label{eq:pair_nscore}
\mathrm{tuple\_nscore}(t,t')=\frac{1}{n} \sum^{n}_{i=1} \textit{novelty}(v_i, v'_i) \end{equation} 
\vspace{-0.5em}
s.t. 
\vspace{-1.5em}
\begin{equation*}
 \textit{novelty}(v_i, v'_i)=\begin{cases}
1 &  v_i \neq v'_i \land  v_i, v'_i \neq \nullval, \\
\beta_{i} &   v_i \neq v'_i  \land (v_i = \nullval \lor v'_i = \nullval), \\
0& \textit{else.} \\
\end{cases}
\end{equation*}
where  $0 \!\leq \!\beta_{i} \! \leq  \!1$. Note that $0 \! \leq  \!\mathrm{tuple\_nscore}(t,t') \leq 1$. 
\end{definition} 

We assign a novelty  of $1$ to every pair of distinct constants.  
When one attribute value is null, we  assign a different reward $\beta_{{i}}$ with the intuition that a null value could be equal to the given constant.  
The likelihood that this would happen depends on the number of repetitions in the domain of attribute $i$.  
Let $P_i$ be the probability that two random tuples share the same (non-null) value on attribute $i$, then we set $\beta_i\!=\!1\!-\! P_i$.  
Finally, when both values are null, there is no novelty.



\begin{definition} \label{def:novelty_support}
    {\bf({Tuple Novelty})} 
    We define the novelty of a tuple $t$ in table $T$ ({\em Tuple Novelty})  as the minimum $\mathrm{tuple\_nscore}$: 
    \vspace{-0.8em}
$$N(t)=min \{ \mathrm{tuple\_nscore}(t, t'): t' \in T \wedge t \neq t' \}$$
\end{definition}

\begin{example}
For the table  in Figure~\ref{fig:u1Diluted},  the tuple novelty of $t_3$ is the minimum of the tuple pair novelty of 
$\{(t_3,t_1), (t_3, t_2), (t_3, t_4)\}$.  
The attribute \texttt{Condition} contains one null value and two equal values, so its corresponding $\beta$ is computed from the data as
$1\!-\!\frac{1}{3}\!=\!\frac{2}{3}$.  For \texttt{Medium}, there is a single value so $P_{Medium}\!=\! 1$, hence its $\beta$ is $0$ meaning nulls in this attribute would not contributing to novelty.  All other attributes in this example have only unique values meaning $\beta\!=\!1$.  However, since the only null value is on \texttt{Condition}, we only need $\beta_{Condition}$ for this simple example.   Hence,
\vspace{-0.7em}  
$$ N(t_3)=\min\{\frac{4}{6}, \; \frac{5}{6}, \; \frac{4+ 1\times\frac{2}{3}}{6}\}=\frac{4}{6}$$

The intuition is that $t_3$ provides the least novelty when compared to $t_1$ (since they have the most overlap), the most when compared to $t_2$.  And the parameter $\beta$ models  that the novelty of the null is between that of an identical value ($t_1$) and a distinct value ($t_2$).
\end{example}

 \begin{definition}\label{def:table_nov_support}
{\bf ({\cTable Novelty Score})} Given a \sTable $T$ with $y$ tuples $|T|=y$ the novelty score of $T$  is defined as: \vspace{-0.5em}
\begin{equation} \label{eq:table_nov_support}
\mathrm{table\_nscore}(T)=\frac{1}{y} \sum^{}_{t \in T } N(t)
\end{equation}
\vspace{-0.5em}
\end{definition} 
Therefore, the novelty score of a \sTable is defined as   the sum of novelty  of all  tuples of the \sTable divided by the total number of tuples.  We carefully account for both the number of tuples  and the degree of novelty each introduces. 
For a high novelty, we expect each tuple to contribute meaningful added value in terms of novelty, i.e., $N(t)\!\!>\!\!0$. In a data-market setting where each table (and therefore tuple) has an acquisition cost, the novelty score measures novelty per unit cost, penalizing tables with many redundant tuples (those with $N(t)=0$).

To compute the novelty score of a query table $Q$ with respect to $l$  unionable tables, we take the left outer-union of the query table with the $l$ unionable tables and compute the table novelty of the result.

\begin{definition}  \label{df:nscore_resultset}
{\bf (Search Novelty Score)} Given  a query \sTable $Q$ and a set of $l$ unionable tables $R=\{T_1, \cdots T_l\}$, and a set of alignments from $Q$ to each of the $T_i$ denoted as $\mathbb{A}=\{\mathcal{A}_1, \cdots, \mathcal{A}_l\}$; 
the novelty score  of the search, $\mathrm{nscore}(Q,R, \mathcal{A})$, is defined:
\vspace{-0.5em}
$$\mathrm{nscore}(Q, R, \mathbb{A})= \mathrm{table\_nscore}(\Breve{T})$$
where $\Breve{T}$ is:
\vspace{-1em}
\begin{equation} \label{eq:resTable}
    \Breve{T}= Q \cup \bigcup \limits_{T_i \in R}^{} \AMOU(Q_{\emptyset},T_i, \mathcal{A}_i)
\end{equation}
where $Q_{\emptyset}=Q \setminus Q$ represents an empty table that retains the same schema as $Q$.
\hfill \qed
\end{definition}

By incorporating $Q$ into the computation of the search novelty score ($\mathrm{nscore}$), we reflect the novelty over the query table, in the overall novelty score of the results. Next, we show that the search novelty score function satisfies the axioms introduced earlier.

\begin{thm}\label{clm:bdupl}\label{clm:dil}
The Search Novelty Score  (Definition~\ref{df:nscore_resultset}) satisfies the blatant duplicate axiom and the dilution axiom. \hfill \qed
\end{thm}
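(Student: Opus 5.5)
The plan is to show that, in both axioms, the added table contributes copies of tuples of $Q$ to $\Breve{T}$. Each such copy has tuple novelty $0$, while no tuple's novelty can increase. Hence the average in Definition~\ref{def:table_nov_support} strictly decreases.

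The first step is a general monotonicity fact. If $\Breve{T}' \supseteq \Breve{T}$ as multisets, then for every $t \in \Breve{T}$ the minimum in Definition~\ref{def:novelty_support} is taken over a superset of partners, so $N_{\Breve{T}'}(t) \le N_{\Breve{T}}(t)$. I will read the condition $t\neq t'$ as ``distinct tuple identifiers'', following the paper's convention that every tuple carries a unique ID.

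One caveat concerns the $\beta_i$. They are estimated from the data, so adding tuples could in principle change them. I will treat the $\beta_i$ as fixed parameters in $[0,1]$ for the comparison, and note this explicitly as an assumption.

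The second step is to identify the zero-novelty tuples.

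For the \emph{blatant duplicate axiom}, take $S'=S\cup\{Q\}$ with $Q\notin S$ and the identity alignment. Then $\Breve{T}'=\Breve{T}\cup Q_c$, where $Q_c$ is a second copy of $Q$ appended via $\AMOU(Q_\emptyset,Q)$. Each $t\in Q$ now has an identical partner with a different ID, so $\mathrm{tuple\_nscore}=0$ and hence $N(t)=0$ by condition C1. The same holds for its copy in $Q_c$.

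For the \emph{dilution axiom}, with $T_m\in S$, the tuples of $Q'\subseteq Q$ inside $\tau(T_m,Q,\Delta)=\AMOU(T_m,Q',\mathcal{A}_m)$ are projected onto the attributes aligned with $T_m$ and then, in $\AMOU(Q_\emptyset,\cdot,\mathcal{A}_m)$, mapped back to $Q$'s schema. I must check that composing the alignment with itself returns these tuples to the $Q$-aligned positions, padding the remaining attributes with \nullval. The resulting tuple $\hat t$ then agrees with the original $t\in Q$ on aligned attributes and has \nullval elsewhere. Also, $\hat t$ coincides with the $Q$-aligned projection of $T_m$'s own contributions, namely $\tau$'s copy of $T_m$, which duplicates the existing $\AMOU(Q_\emptyset,T_m)$ block. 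So every tuple contributed by $\tau$ that comes from $T_m$ has an exact duplicate already in $\Breve{T}$, giving $N=0$ for those tuples and their twins. The $Q'$-derived tuples must also be handled; these are exact duplicates of one another only if $Q'$ meets $T_m$'s content, so I will lean on the $T_m$-copy block, which always exists since $T_m$ is non-empty. I expect this bookkeeping of the alignment composition to be the main obstacle.

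The third step is the averaging inequality. Write $\Breve{T}'=\Breve{T}\uplus D$ with $|D|=d\ge 1$. Let $Z$ denote the set of tuples whose novelty becomes $0$, which includes all of $D$'s duplicates. Then
\[
\mathrm{table\_nscore}(\Breve{T}')\le \frac{1}{|\Breve{T}|+d}\sum_{t\in\Breve{T}} N_{\Breve{T}}(t) \le \mathrm{table\_nscore}(\Breve{T}),
\]
with strict inequality whenever $\mathrm{table\_nscore}(\Breve{T})>0$, because $d\ge1$ enlarges the denominator.

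The degenerate case $\mathrm{table\_nscore}(\Breve{T})=0$ would give equality, not strict decrease. I will therefore state the result under the implicit non-degeneracy assumption that the original search result has positive novelty. If the authors intend strictness unconditionally, this is the point where an extra hypothesis is needed, and I would flag it rather than force it.
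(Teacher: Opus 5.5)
Your blatant-duplicate argument is correct under the caveats you state: fixed $\beta_i$, and $\mathrm{table\_nscore}(\Breve{T})>0$ for strictness. It follows the paper's route, but more carefully. The paper says the numerator is unchanged, whereas the original query tuples also drop to $N=0$, and your monotonicity lemma handles that.

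The gap is in the dilution half, at the point you flagged and then stepped around. Your displayed inequality needs every tuple of $D$ to have novelty $0$ in $\Breve{T}'$, or compensating losses elsewhere. But $D=\AMOU(Q_\emptyset,T_m,\mathcal{A}_m)\uplus\{\hat q: q\in Q'\}$. When $\mathcal{A}_m$ leaves a set $U$ of query attributes unaligned, $\hat q$ is a copy of nothing, and $\mathrm{tuple\_nscore}(\hat q,q)=\frac1n\sum_{i\in U}\beta_i\,[q_i\neq\perp]$, which is positive in general. The $T_m$ copy only guarantees $d\ge 1$; it does not bound the positive mass the $\hat q$ add to the numerator. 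No better accounting fixes this, because the dilution claim is false in this generality.

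\emph{Counterexample.} Let $Q$ have schema $(B_1,B_2)$ and consist of four distinct tuples $\langle a_j,x_j\rangle$, each occurring twice. Let $S=\{T_m\}$ with $T_m=\{\langle c\rangle\}$, its one attribute aligned to $B_1$, and all constants distinct. Take $\Delta=\frac12$, let $Q'$ hold one copy of each $\langle a_j,x_j\rangle$, and set $\beta_2=1$.
\begin{itemize}
\item $\Breve{T}$ has $9$ tuples, all with $N=0$ except $\langle c,\perp\rangle$, which has $N=1$. So $\mathrm{nscore}=\frac19$.
\item After adding $\tau(T_m,Q,\frac12)$, $\Breve{T}'$ has $14$ tuples, all with $N=0$ except the four $\langle a_j,\perp\rangle$.
\item Each $\langle a_j,\perp\rangle$ is at distance $\frac12$ from $\langle a_j,x_j\rangle$, from $\langle c,\perp\rangle$ and from the other $\langle a_k,\perp\rangle$, and at distance $1$ from $\langle a_k,x_k\rangle$. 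So each has $N=\frac12$, giving $\mathrm{nscore}=\frac17>\frac19$.
\item If you recompute $\beta_2=1-P_2=\frac67$ from the data, as in the paper's example, the scores are $\frac{13}{126}$ and $\frac{6}{49}$, still an increase. So the failure does not depend on your fixed-$\beta$ assumption.
\end{itemize}

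\emph{Missing hypothesis.} You need $\mathcal{A}_m$ to cover every attribute of $Q$. Then $\hat q=q$, every tuple of $D$ is an exact copy of a tuple already in $\Breve{T}$, and your blatant-duplicate argument applies verbatim. The paper's one-line ``similarly'' tacitly assumes the same thing when it says every newly added tuple has an identical counterpart in $\Breve{T}$. That holds in its figure example, where $T_1$ aligns with all of $Q$'s attributes, but not in general, so you should state the hypothesis explicitly.
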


We define the Novel Table Search (\name) problem as selecting  $l$  tables from  the $k$  unionable tables such that the search novelty score of the search result is maximized. 
\begin{definition} \label{def:objective_main}
{\bf (Novel \cTable Search)} Given a query \sTable $Q$, a  set of $k$ unionable tables $S=\{T_1, \cdots T_k\}$, and a set of alignments from $Q$ to $S$ denoted as $\mathbb{A}=\{\mathcal{A}_1, \cdots, \mathcal{A}_k\}$, \name finds a set of  $l$ tables $R$  where: 
 \vspace{-0.5em}
\begin{equation}
R=\argmax_{S' \subseteq  S, |S'|=l} \; \mathrm{nscore}(Q,S',  \mathbb{A})
\end{equation}
\end{definition}

To find the exact solution for \name, we must evaluate all subsets of size  $l$  from  $S$. For each subset, we compute its novelty score, leading to an exponential number of evaluations. As a result, finding the exact solution is  computationally infeasible, especially for large  $k$  and  $l$. 
\begin{thm}\label{clm:nphard} 
   Finding an optimal solution to the Novel Table Search problem
is NP-Hard.   \hfill \qed
\end{thm}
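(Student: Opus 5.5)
The plan is a polynomial-time reduction from \textsc{Independent Set}: given a graph $G=(V,E)$ and an integer $l$, decide whether $G$ has an independent set of size $l$. The idea is to build an instance of \name (Definition~\ref{def:objective_main}) whose optimal search novelty score equals $1$ exactly when such an independent set exists. Since any optimal solution decides this, computing one is NP-hard.

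\textbf{Construction.} All tables share one schema with a fixed arity $n$ (for example $n=1$), and every alignment $\mathcal{A}_i$ is the identity, so $\AMOU(Q_{\emptyset},T_i,\mathcal{A}_i)$ simply contributes the tuples of $T_i$ and no nulls ever arise.
\begin{itemize}
\item For every edge $e\in E$, create a tuple $t_e$ whose attribute values are fresh constants, distinct from every other constant used.
\item Let $D$ be the maximum degree of $G$. For every vertex $v$, create table $T_v$ containing the $\deg(v)$ edge tuples $t_e$ with $e\ni v$, plus $D-\deg(v)$ private tuples, also with fresh constants. Every table therefore has exactly $D$ tuples.
\item Let $Q$ be a single tuple with fresh constants.
\item Set $S=\{T_v : v\in V\}$, so $k=|V|$, and keep the same $l$.
\end{itemize}
The construction is clearly polynomial. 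If $D=0$, use $D'=\max(D,1)$ so that every table is nonempty.

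\textbf{Score computation.} By Definition~\ref{def:tup_pair_novelty}, any two tuples built from pairwise distinct non-null constants have $\mathrm{tuple\_nscore}=1$, and two identical copies have score $0$. Fix a choice $S'$ of $l$ vertices and form $\Breve{T}$ by Equation~\eqref{eq:resTable}.
\begin{itemize}
\item Every tuple of $\Breve{T}$ is either unique in $\Breve{T}$, or is a copy of some $t_e$ with both endpoints of $e$ in $S'$. Each such $t_e$ appears exactly twice, since it is placed only in the tables of its two endpoints.
\item A unique tuple differs in every attribute from every other tuple, so its tuple novelty (Definition~\ref{def:novelty_support}) is $N(t)=1$. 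Each copy of a duplicated $t_e$ has $N(t)=0$, because the two copies are distinct tuples by identifier but equal in value.
\end{itemize}
By Definition~\ref{def:table_nov_support} this gives
\[
\mathrm{nscore}(Q,S',\mathbb{A})=\frac{1+lD-2\,|E(S')|}{1+lD},
\]
where $E(S')$ is the set of edges with both endpoints in $S'$. This equals $1$ iff $S'$ is an independent set, and is otherwise strictly smaller. Hence the optimum of \name equals $1$ iff $G$ has an independent set of size $l$.

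\textbf{Main obstacle.} The delicate step is ensuring the denominator $1+lD$ does not depend on which tables are chosen, so that the score depends only on the number of induced edges. The padding to a uniform size $D$ handles this. Without it, choosing large-degree tables could raise the score by dilution even when duplicates are present.

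A second point to check is that the minimum in Definition~\ref{def:novelty_support} ranges over tuples distinct by identifier, so that a duplicate copy really yields $N(t)=0$. Under the paper's multiset semantics with unique tuple identifiers this holds. If instead one prefers to avoid relying on identifiers, an alternative is to let edge tuples share all but one attribute with each other, but I expect the identifier-based reading to suffice.
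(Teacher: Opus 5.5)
Your reduction is correct, and it takes a genuinely different route from the paper. The paper's proof models $S$ as a complete graph and argues by analogy with the Garey--Johnson problem ``induced subgraph with property $\Pi$.'' It takes $\Pi$ to be ``has maximum $\mathrm{nscore}$ among all $l$-subsets'' and concludes NP-hardness because verifying $\Pi$ takes exponential time. That is an analogy, not a reduction. The hardness result behind that problem (Lewis--Yannakakis) is about nontrivial hereditary graph properties. On a complete graph every $l$-vertex induced subgraph is $K_l$, so the paper's property depends on table contents rather than graph structure, and it is not hereditary. Moreover, the exponential cost of brute-force verification does not by itself imply hardness. Your construction supplies what is missing: a polynomial-time map from a known NP-complete problem into \name. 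It in fact makes the paper's intuition rigorous, since \textsc{Independent Set} is exactly the case $\Pi=$ ``edgeless,'' and you encode the graph in the data through shared edge tuples instead of a structureless complete graph. Your route also gives more. Hardness already holds for arity-one tables, identity alignments and no nulls, and even deciding whether the optimum equals $1$ is hard. You read $t\neq t'$ in Definition~\ref{def:novelty_support} as inequality of tuple identifiers. This is the same reading the paper uses in its proof of Theorem~\ref{clm:bdupl}, where copied query tuples get $N(t)=0$, so that step is consistent with the paper. One cosmetic point: when $D=1$, two adjacent degree-one vertices receive tables with identical contents. Padding every table to $D+1$ tuples gives each $T_v$ a private tuple, which makes the tables distinct and also covers the $D=0$ case. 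It only changes the denominator to $1+l(D+1)$.
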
 

To address this complexity, we propose an approximate algorithm in the next section.

\section{Attribute-Based Novel \cTable Search}\label{sec:ANTs}

To approximate the $\mathrm{nscore}$, we use an attribute-based approach that estimates the novelty of attributes rather that iterating over tuples individually.
Our solution (\estimatename) estimates the syntactic dissimilarity (novelty) scores between  aligned attributes.  
\estimatename  returns an ordered list of tables, and hence can be used to rank (or rerank) a list of tables produced by a table union search system.
In this computation, we maximize syntactic dissimilarity of attributes while still prioritizing unionable (semantically similar) attributes.
The building blocks of our approach are two functions that, for each pair of aligned attributes, compute their syntactic and semantic similarities. Two attributes are semantically similar if their domains contain the same or closely related concepts. Since we want to prioritize more unionable attributes, we follow the unionability literature and represent each attribute by an embedding in a semantic vector space and use the distance between their embeddings as a measure of semantic similarity.

\subsection{Syntactic Similarity of Attributes}\label{sec:syn}

Abstractly, we need a function $\mathrm{syn\_sim}(B_i, A_j) \in [0,1]$  that represents the {\bf syntactic similarity} of two attributes $\langle B_i, A_j \rangle$.
Two attributes are syntactically similar if their domains contain many identical
values. We characterize each attribute by its observed value domain and use the overlap or distribution of these values as our measure of syntactic similarity. Note that it is common to use set-similarity measures (such as Jaccard) to measure syntactic similarity.  This works well for large domains.  However, for small domains, simple set overlap may not be enough.  Consider, for example, two attributes containing days of the week (at most seven values).  One may represent online sales and have a distribution over all seven days.  Another, representing band-tour dates may contain all seven days but be strongly biased to include mostly weekends.  The Jaccard between these attributes is one, but the distributions are quite different.  So for small domains, we instead exploit the  distribution of  values and use discrepancies between them as a signal of syntactic novelty.

We define a hyperparameter called $s$, to distinguish whether an attribute belongs to a small or large domain. This value is determined empirically based on the characteristics of a  data lake, as it  influences the effectiveness of the ranking algorithm. Section~\ref{subsec:alignment_intialresult_impact} discusses our study of this parameter.

\noindent{\bf Syntactic Similarity for Large Domains}
Let $\mathcal{V}(B_i)$ and $\mathcal{V}(A_j)$, be the domains of attributes $B_i$ and $A_j$, respectively. If the size of the union of the domains is larger than $s$, we use the Jaccard similarity~\cite{DBLP:journals/fcsc/YuLDF16} to compute the attribute syntactic similarity.

\noindent{\bf Syntactic Similarity for Small Domain}
For small-domain attributes, we compute 
the distribution of observed (non-$\nullval$) values for each attribute and use the distance between these distributions as an indication of novelty.  We denote the  discrete probability distribution  associated with attributes $B_i$ and $A_j$  by $\mathbb{B}_i$ and $\mathbb{A}_j$, respectively. These distributions are computed over the union of the domains, and so each of them may have some missing values. We assign probability $0$ for missing values.  

We then require a measure of difference, or a divergence metric, between two probability distributions that satisfies the following properties: i) The metric has a bounded range; ii) It remains defined in the presence of missing values; iii) It is symmetric, as no reference distribution exists and both attributes ($B_i$ and $A_j$) are treated equally.
The \textit{Jensen-Shannon} divergence measure (JSD) has these desirable properties which align with our problem~\cite{DBLP:journals/tit/Lin91, DBLP:conf/kdd/DhillonMK02, DBLP:journals/siamrev/Kieffer94}.\footnote{\ifnottechreport{More information on Jensen-Shannon can be found in our technical report including its formal definition.}{See Appendix~\ref{apx:JSD_desc} for additional details on the Jensen–Shannon divergence.}}
\begin{example}\label{exmp:domainsize}
Assume that the tables in Figure~\ref{fig:mtvexample}(a–b) contain only the tuples shown.
Consider the query table $Q$ in Figure~\ref{fig:mtvexample} and the unionable tables $T_1$ (Figure~\ref{fig:u1}) and $T_2$ (Figure~\ref{fig:u2_t2}), where in both cases attribute \texttt{Artist} is aligned with $Q.\texttt{Artist}$. Intuitively, the JSD between $Q.\texttt{Artist}$ and $T_1.\texttt{Artist}$ ($\mathrm{JSD}\!=\!1$)  is larger than between $Q.\texttt{Artist}$ and $T_2.\texttt{Artist}$ ($\mathrm{JSD}\!=\!0.82$) because in the former pair, the empirical distributions place all their probability mass on \emph{different} artists, whereas in the latter pair both distributions assign the same mass to \texttt{Leonardo da Vinci}. This shared mass makes the distributions closer and yields a smaller JSD.
\end{example}

We define $\mathrm{syn\_sim}(B_i, A_j)$ for small domains using one minus the Jensen Shannon Divergence of their distributions.

\begin{definition}\label{def:syn} {\bf (Attribute Syntactic Similarity)}  For a pair of aligned attributes $\langle B_i, A_j \rangle \in \mathcal{A}(Q,T)$, let  
$\mathcal{V}(B_i)$ and $\mathcal{V}(A_j)$ be their domains and $ \mathcal{D} = |\mathcal{V}(B_i)  \cup \mathcal{V}(A_j) |$. Furthermore, let $\mathbb{B}_i$ and $\mathbb{A}_j$ denote their discrete probability distribution  over  $\mathcal{V}(B_i)  \cup \mathcal{V}(A_j)$. 
 \vspace{-0.5em}
\begin{equation} \label{eq:syn}
\mathrm{syn\_sim}(B_i, A_j)= \begin{cases}  \frac{|\mathcal{V}(B_i) \cap \mathcal{V}(A_j)|}{|\mathcal{V}(B_i) \cup \mathcal{V}(A_j)|} & {\cal D} > s \\
1-\mathrm{JSD}(\mathbb{B}_i,\mathbb{A}_j) & {\cal D} \leq s 
\end{cases}
\end{equation}
\end{definition}

\subsection{Semantic Similarity of Attributes}\label{sec:sem}

To model the semantic similarity between pair of  aligned attributes, we compute cosine similarity  on their semantic representation as proposed in Fan et al.~\cite{DBLP:journals/pvldb/FanWLZM23}, which has achieved state-of-the-art performance in unionable \sTable discovery. For an attribute $A$, let $\mathrm{Sem(}A)$ be the \Starmie learned attribute embedding.

\begin{definition}\label{def:sem} {\bf (Attribute Semantic Similarity)} For a pair of aligned attributes $\langle B_i, A_j \rangle \in \mathcal{A}(Q,T)$, $B_i \in Q$, $A_j \in T$, the semantic similarity is defined as: 
\vspace{-0.5em}
\begin{equation} \label{eq:sem}
\mathrm{sem\_sim}(B_i, A_j)= \mathrm{cosSim}(\mathrm{Sem}(B_i), \mathrm{Sem}(A_j))
\end{equation}
\end{definition}

\subsection{Novelty-Aware Unionability}
We now define the {\em attribute novelty} as a score that when maximized,  the syntactic similarity is minimized and the semantic similarity is maximized.

\begin{definition} \label{def:att_based_score}
\textbf{(Attribute Novelty)} For a  pair of aligned attributes $\langle B_i, A_j \rangle \in \mathcal{A}(Q,T)$,  $B_i \in Q$, $A_j \in T$, the {\em attribute novelty}, $\mathrm{AttNovelty}(B_i, A_j)$, is defined as:  
 \vspace{-0.5em}
\begin{equation}\label{eq:attr_level_score}
(1\!-\!\mathrm{syn\_sim}(B_i, A_j))^b \!\times\! \mathrm{sem\_sim}(B_i, A_j)
    \end{equation}

where $b \geq 0$ is a hyperparameter.   \hfill \qed
\end{definition}
 
Attribute novelty is designed to capture both the unionability and novelty of attribute pairs. The score is maximized for pairs with high unionability and high novelty, and minimized when both are low.  The hyperparameter $b$ allows us to adjust the relative importance of the two signals.  When $b\!=\!1$, we are giving equal weight to the syntactic and semantic signals.  Lower $b$ gives more weight to semantic similarity, while larger $b$ gives more weight to syntactic dissimilarity (novelty).

The novelty of a table is then the sum of the attribute novelty over attributes aligned with a query table.

\begin{definition}\label{def:tablenovelty}\textbf{(Table Novelty)}
 Given a query table $Q$ and a unionable table $T$, the {\em table} novelty function  is defined as an aggregation over the attribute novelty of their aligned attributes:
 \vspace{-0.5em}
\begin{equation}\label{eq:aggregatescores}
\mathrm{TableNovelty}(Q, T)= \!\!\!\sum^{}_{ \langle B_i, A_j \rangle \in \mathcal{A}(Q,T) } \!\!\!\!\!\mathrm{AttNovelty}(B_i, A_j) 
 \end{equation}
\end{definition}

Intuitively, we prefer highly unionable tables that are semantically  similar to the query table that also contain syntactically different (novel) values. Therefore, we \textit{penalize} the semantic similarity between aligned attributes based on their syntactic similarity.

\begin{example}
Let us continue Example~\ref{exmp:domainsize} with $b\!=\!1$ and $s\!=\!5$, and assume that all aligned attributes have semantic similarity $1$. $T_1$ has an aligned counterpart for every attribute of $Q$. Except for \texttt{Medium}, these aligned pairs have large and disjoint domains, so their $\mathrm{AttNovelty}$ is maximal $1$. For \texttt{Medium}, the two distributions overlap on \texttt{Oil on canvas}, yielding JSD of $0.44$. Hence
\vspace{-0.6em}  
\[
\mathrm{TableNovelty}(Q, T_1)\!=\!1\!+\!1\!+\!1\!+\!0.44\!+\!1\!=\!4.44
\]
By contrast, $T_2$ has only two attributes aligned with $Q$. $T_2.\texttt{Artwork}$ is syntactically novel and semantically similar to $Q.\texttt{Artwork}$ and its $\mathrm{AttNovelty}$ is maximal $1$. $T_2.\texttt{Artist}$ shares probability mass with $Q.\texttt{Artist}$, giving a JSD of $0.82$ and thus lower syntactic novelty:
\vspace{-0.6em}
\[
\mathrm{TableNovelty}(Q, T_2)\!=\!1\!+\!0.82\!=\!1.82
\] 
\end{example}

\subsection{The \estimatename Solution}\label{sec: solution}
In summary, \estimatename takes as input a query table $Q$, a set $S$ of $k$ unionable tables, attribute alignments between $Q$ and each table $T \in S$, and the number of tables to return, $l$. It exhaustively scans all unionable tables $T\!\in\!S$, computes their novelty scores using $\mathrm{TableNovelty}(Q,T)$, sorts the candidates in descending order of novelty, and returns the top-$l$ most novel unionable tables (Algorithm~\ref{alg-ANTs}).

\SetKwComment{tcp}{\textasteriskcentered\ }{}
\definecolor{darkblue}{RGB}{0,0,139} 

\newcommand\mycommfont[1]{\small\color{darkblue}#1}
\SetCommentSty{mycommfont}

\begin{algorithm}
\caption{\estimatename} \label{alg-ANTs}
\KwIn{$Q$: query table; $S$: set of $k$ unionable tables; $A$: attribute alignments; $l$: number of tables to return}
\KwOut{$R$: list of $l$ most novel unionable tables}

$\text{scores} \gets \{\}$ \tcp{table to score map}

\ForEach{$T \in S$}{
  $\mathcal{A} \gets A[T]$ \tcp{get alignment}
  
$\text{scores}[T] \gets \text{getTableNovelty}(Q, T, \mathcal{A})$ \tcp{Def.~\ref{def:tablenovelty}}
}
$\text{scores} \gets \text{sortByVal}(\text{scores})$\tcp{descending sort}
$R \gets$ \text{top}($\text{scores}$, $l$)  \tcp{rank by novelty}

\Return $R$
\end{algorithm}

\normalsize
\section{Baselines and Compared Techniques}\label{sec:baselines}
In our evaluation, we consider five techniques: one baseline, three alternative \name variants, and a max–min diversification heuristic that serves as an approximate $\mathrm{nscore}$ baseline. Below, we summarize the key characteristics of each technique.
\subsubsection*{\Starmie} \Starmie~\cite{DBLP:journals/pvldb/FanWLZM23} is  a state-of-the-art  integrated framework designed for dataset discovery in data lakes, that can be used for table union search. It employs a contrastive learning approach to train attribute encoders entirely in an unsupervised fashion. These attribute encoders capture the  contextual semantics within tables. \Starmie computes a unionability score for attributes by measuring the cosine similarity between their embedding vectors, and aggregates these to derive an overall unionability score for tables based on their aligned attributes. \Starmie employs its own built-in alignment mechanism, which is based on maximum bipartite graph matching.	
We use \Starmie as a baseline for ranking the top  k  unionable tables based on their unionability scores relative to a given query.
The details of the hyperparameter settings used for \Starmie are provided in Section~\ref{sec:experiments}. 

\subsubsection*{Greedy with Marginal Contribution (\GMC)}\label{subsec:GMC}

While our work is the first to incorporate diversity (novelty) into table search, diversity in query results has been studied in prior research~\cite{DBLP:conf/icde/VieiraRBHSTT11}.  Viera et al.~\cite{DBLP:conf/icde/VieiraRBHSTT11} compare a number of approaches for query result diversification of which they show that an approach called \GMC achieves the best trade-off between result quality (precision and optimality) and execution time. We are interested in investigating how this method would perform if adapted to solve \name. Given a query (which may be a document, an image, or in our case a table), Vieira et al.~\cite{DBLP:conf/icde/VieiraRBHSTT11}  aim to  retrieve a set of $l$ elements that are highly similar to the query while ensuring that the selected set is mutually diverse, thereby forming a \textit{k-similar diversification set}. 
Let $S’ \subseteq S$, where $S$ is the set of candidate results. Let $\delta_{\text{sim}}(q, S’)$ denote the similarity between the query $q$ and the elements in $S’$, and let $\delta_{\text{div}}(S’)$ measure the diversity among the elements in $S’$. Given a parameter $\lambda \in [0,1]$, Vieira et al.~\cite{DBLP:conf/icde/VieiraRBHSTT11} formulate the objective as a \textit{Max-Sum optimization} problem to identify $R$:
    \vspace{-0.5em}
\begin{equation} \label{eq:GMC}
    R = \arg \max \limits_{S' \subseteq S, l=|S'|} \mathcal{F}(q,S')
\end{equation}
\noindent s.t. 
    \vspace{-0.8em}
$$\mathcal{F}(q,S')\!=\!(l\!-\!1)\!\times\!(\!1\!-\!\lambda)\!\times\!\delta_{sim}(q,S')\!+\!2\!\times\! \lambda\! \times\! \delta_{div}(S')$$

Since finding the optimal set $R$ is NP-hard, Vieira et al.~\cite{DBLP:conf/icde/VieiraRBHSTT11}  proposed \GMC, a deterministic approximation algorithm for identifying $R$. \GMC was evaluated against all known methods for diversifying query results that rely solely on relevance and diversity scores, and was shown to consistently outperform others in achieving higher values of the objective function $\mathcal{F}(q,S')$. For this reason, we compare our approach only against \GMC. Next, we describe how we have adapted \GMC for our setting.  

Given a query table $Q$ and a set of unionable tables $S$,  we employ the \GMC algorithm to select the $l$ most unionable and novel   tables. To do so, we supply \GMC with two functions: \textbf{1)} a function to measure the similarity (i.e., unionability) between the query table $Q$ and a data lake table $T\!\in\! S$ denoted as $\delta_{sim}(Q, T)\!: \!Q\!\times\!S\!\rightarrow\!\mathbb{R}^+$ for which we use $\mathrm{sem\_sim}$ (Definition~\ref{def:sem});
\textbf{2)} a function to quantify  the syntactic novelty between two data lake tables $T, T'\! \in \!S$  which is denoted as $\delta_{div}(T, T')\!:\!S\! \times\! S \!\rightarrow\! \mathbb{R}^+$ and for which we use $1\!-\!\mathrm{syn\_sim}$ (Definition~\ref{def:syn}).
Following the recommendation of Vieira et al.~\cite{DBLP:conf/icde/VieiraRBHSTT11}, we set $\lambda\! =\! 0.7$ in all experiments.

\subsubsection*{\GMM}\label{baseline:gmm}
\normalsize
We also used an approximation called \GMM, a greedy algorithm with a theoretical approximation guarantee for \textit{Max-Min diversification}. 
The Max-Min Diversification problem  selects a subset of items such that the minimum pairwise distance among them is maximized~\cite{DBLP:journals/ior/RaviRT94}. Formally, given a universe $\mathcal{U}$ of $k$ elements and a subset size $l$, the goal is to find a subset $S$ of size $l$ that maximizes $div(S)$:
    \vspace{-1em}
$$
S \!= \arg\max_{\substack{S' \subseteq \mathcal{U} \\ |S'| = l}}div(S')
$$
\normalsize
where $div(S) \!\! =\!\! \min \limits_{s, s’ \in S, s  \neq s'} \delta(s, s’)$ measures the smallest pairwise distance among elements of $S$. Ravi et al.~\cite{DBLP:journals/ior/RaviRT94}
 prove that Max-Min Diversification objective can be achieved by  \GMM with $\frac{1}{2}$-approximation guarantees, i.e., the $div(S)$ achieved is provably at least 50\% of the optimal (best possible) value. Therefore, \GMM provides an approximate $\mathrm{nscore}$ baseline focused \textit{only} on syntactic novelty (diversity).
We use the variant of \GMM\ proposed by Moumoulidou et al.~\cite{DBLP:conf/icdt/Moumoulidou0M21}, which allows initialization with a given element.  We use our previously defined syntactic distance function, $\delta_{div}$, to measure the pairwise syntactic  distance between tables (see Section~\ref{sec:ANTs}).

\subsubsection*{Semantic Representation for Syntactic Novelty (\semnov)}\label{subsec:semNovBaseline} 

Next, as an additional baseline, we investigate whether using the distance between table embeddings as a semantic novelty signal can result in  syntactic novelty, and we compare its effectiveness to that of \estimatename. Table embeddings such as TABBIE~\cite{DBLP:conf/naacl/IidaTMI21} capture the underlying semantics of tables and the distance between them serves as a proxy for semantic differences, and consequently for semantic novelty between tables. We propose to penalize the semantic similarity between two attributes by the semantic similarity between their corresponding table embeddings.
Note that in this case we do not distinguish between different domain sizes. 
Therefore, for an arbitrary  pair of aligned attributes $\langle B_i, A_j \rangle \in A(Q,T)$ where $B_i \in Q$, $A_j \in T$, $\mathrm{AttrSemNovelty}(B_i, A_j)$   will be:
\vspace{-0.5em}
\begin{equation} \label{eq:ants_bothdomain}
\begin{aligned}
\mathrm{AttrSemNovelty}(B_i, A_j)
&= (1-\mathrm{table\_sim}(Q, T))^{b} \\
&\times\, \mathrm{sem\_sim}(B_i, A_j)
\end{aligned}
\end{equation}
\noindent where  $\mathrm{table\_sim}(Q, T)$   denotes the cosine similarity between the vectors representing  tables    $Q$  and  $T$  extracted from the TABBIE trained model and takes values in  $[0,1]$. Again table novelty here is just the sum of the semantic $\mathrm{AttrSemNovelty}$ scores.

\subsubsection*{Entity Resolution (ER) for  Novel Table Search}\label{subsec:EntityResultion} 
\estimatename adopts an attribute-based strategy to reduce the time complexity of the novel table search problem. Alternatively, one could consider a tuple-based approximation approach. We apply an entity resolution technique  to estimate the degree of overlap, that is, the number of matching entities, between a query table and candidate tables in the data lake, and then rank the tables such that those with less (approximate tuple) overlap receive a higher score. An entity resolution process comprises two main steps: blocking and matching. When a query table arrives, its tuples are first blocked with those of each unionable table, and matching is then performed only within the shared blocks. As our work is based on syntactic novelty, we adopt a syntactic matching approach for the entity resolution task~\cite{DBLP:journals/pvldb/KondaDCDABLPZNP16}.  For blocking, we experimented with several techniques from the literature such as \blockone~\cite{DBLP:conf/icml/GuoSLGSCK20}, \blocktwo~\cite{DBLP:journals/vldb/SilvaALPA13}, and \blockthree~\cite{DBLP:journals/pvldb/Thirumuruganathan21} and adopted \blockone which was the fastest method.  
\ifnottechreport{A detailed explanation of our choice of these blocking techniques  is provided in the technical report.}{A detailed explanation of our choice of these blocking techniques is provided in Appendix~\ref{apnx:entityResultion}.} We refer to this system as \ERNov in our experiments.

\section{Evaluation Metrics}\label{sec:evaluationts}
We will use the search novelty score ($\mathrm{nscore}$) (Definition~\ref{df:nscore_resultset})
to evaluate the quality of NTS solutions.  However,  $\mathrm{nscore}$ is computationally expensive to compute, so we will  report it only for small benchmarks and employ several computationally cheaper   metrics.
Recall that the search novelty score decreases when the query table is unioned with the result table (Equation~\eqref{eq:resTable}). Thus, intuitively a technique is less desirable if it ranks an exact copy of the query table (when such a copy exists in the data lake) among the top results. We capture this tendency using the \textit{Blatant-Duplicate} metric, which measures, for each query, whether an exact copy of the query table appears in the top-$l$ results. Similarly, including tuples of the query table in any unionable table among the top-$l$ results (dilution) is detrimental to the search novelty score. Heuristically, we  prefer methods that prioritize  undiluted unionable tables and we quantify this behavior using the \textit{Syntactic Novelty Measure} (SNM).

\subsection{Blatant Duplicates}\label{sec:bltndup} 
The term blatant duplicate in Axiom~\ref{axm:blDup} refers to a  unionable table that is a duplicate of the query table. Motivated by this axiom, we introduce a metric to evaluate whether a system  ranks such blatant duplicates of query tables among the top-l results, when they exist within the data lake. 
Given a query  $Q$  and a set of tables, $R$, returned by a novel table search method  where $|R|\!=\!l$,  we define the metric as follows.
    \vspace{-0.5em}
\begin{equation}
\textit{blatant-duplicate}(Q,R)\! =\!
\begin{cases}
    1,  & \text{if } Q \!\in \!R, \\
    0 , &\text{else}.
\end{cases}
\end{equation}
To use this metric in our experiments, we include the query itself in the set of  $k$  unionable tables considered for \name.

\subsection{Measuring Syntactic Novelty}\label{sec:datadilution} 
Building on the concept of \textit{binary preference} (bpref)~\cite{DBLP:conf/sigir/Sakai07}, an established evaluation metric in the Information Retrieval community, we propose two new metrics to assess a search algorithm’s effectiveness in identifying novel tables from a data lake. Binary preference is designed for evaluation settings with incomplete relevance assessments. The bpref metric penalizes a system when a judged non-relevant document is ranked above a judged relevant one, while being unaffected by the retrieval of unjudged documents.

Similarly, for a query $Q$ and a set of $k$ unionable tables  $S\!=\!\{T_1, \cdots, T_k\}$, with their corresponding alignments $\bigA$, and a dilution degree $\Delta$ we construct a new set $\mathcal{S}=\!S\! \cup \!S_{dil}$ where $S_{dil}\!=\!\{\tau(T_i) | T_i \! \in \!S\}$.
We then evaluate how effectively a search algorithm ranks each unionable table $T_i$ above its counterpart $\tau (T_i)$ when ranking the elements in $\mathcal{S}$ and selecting top $l$ tables using a measure we call Syntactic Novelty Measure (SNM).

\begin{definition}\label{dfn:snm} {\bf ({S}yntactic {N}ovelty {M}easure (SNM))} Let $S$ be a set of $k$ unionable tables for a query $Q$, and $S_{dil}\!=\!\{\tau(T_i) | T_i \! \in \!S\}$ a set of diluted tables with degree $\Delta$.  Suppose a novel search algorithm returns an ordered list $S' \subset S \cup S_{dil}$  of size $l$. 
    \vspace{-0.8em}
\begin{equation}
    \textit{SNM}\!=\!1\!-\! \frac{|O| \! +\! |Y| }{l}
\end{equation}
    \vspace{-1.7em}
  s.t.   
\begin{equation*}
\begin{aligned}
O &= \{T_i \mid (\tau(T_i) \in S' \land T_i \notin S')  \\
  &\qquad \vee (T_i = Q \land \tau(T_i) \notin S' \land T_i \in S')\}, \\
Y &= \{T_i \mid (\mathcal{I}(S', T_i) > \mathcal{I}(S', \tau(T_i)))  \\
  &\qquad \vee (T_i = Q \land \mathcal{I}(S', T_i) < \mathcal{I}(S', \tau(T_i)))\}.
\end{aligned}
\end{equation*}

\normalsize $\mathcal{I}(S', T)$ denotes the position of table $T$ in the ordered list $S’$. A lower index indicates a better rank for the table.   Intuitively, $O$ denotes the set of diluted tables that appear without their corresponding original counterparts in the top l results while $Y$ represents the number of table pairs in the top $l$ where the diluted version is ranked lower (i.e., considered superior) to its original counterpart. To avoid rewarding systems that return blatant duplicates, we impose a penalty for retrieving blatant duplicates in the formulation of $O$ and $Y$.  Note that $0 \leq \mathrm{SNM} \leq 1$.  The minimum SNM value of 0 occurs when, for a given query, only diluted versions of the unionable tables are returned.
\end{definition}
SNM applies to ranking-based \name methods. As our \GMC method returns a set rather than a ranked list, we also define a rank-oblivious variant.

\begin{definition}\label{dfn:ssnm} \textbf{(Simplified SNM (SSNM))}  The Simplified SNM is defined as:
\vspace{-1.5em}
\begin{equation}
   \textit{SSNM}=1- \frac{|O|}{l}
\end{equation}
\vspace{-0.8em}
\end{definition}
\vspace{-1em}

\subsection{Additional  Metrics}
Given that \GMC (Section~\ref{sec:baselines}) is computationally expensive, we aim to investigate to what extent  \estimatename can optimize the objective defined for \GMC. 	As an additional evaluation metric, we compare the systems based on their $\mathcal{F}(Q, S’)$ or simply $\mathcal{F}$ (Equation~\eqref{eq:GMC}).

 \section{Experiments}\label{sec:experiments}
We implemented \estimatename, \ERNov, \GMM, and \GMC using Python 3.8.5, while \semnov required Python 3.6 due to library dependencies. All experiments were conducted on a server with five AMD EPYC 9554 64-core processors and 512 GB RAM.  We use the publicly available \Starmie codebase.\footnote{https://github.com/megagonlabs/starmie}

\subsection{Datasets}\label{subsubsec:datasets}
We begin with a detailed description of the three datasets and outline their characteristics and how we adapt them for our experimental setup. We have two main considerations in preparing datasets: i) We tackle the novel table search problem as a step on top of a unionable-table search system (see Definition~\ref{def-nts}). To  eliminate the influence of the initial  search system’s performance, in each dataset and for each query table, we only consider  $k$  unionable tables (as identified in the ground truth). We then apply various  approaches  solely to these unionable tables to find top $l$ final tables.  We set $k\!=\!20$;  ii) A key component of our evaluation is to systematically include both diluted versions of data lake tables and the query table itself (as a blatant duplicate) in the pool of unionable candidates.  In our dataset preparation, we set the \degreeOfDilution (\dod) to $0.4$ across all datasets. Note that all tables in the data lake have a diluted version.  Also note that we use the ground truth alignment for all experiments (except our ablation study of the influence of alignment quality in Section~\ref{subsec:alignment_intialresult_impact}) and all the tables are non-empty.  
 \vspace{0.5em}
 
\noindent \textbf{\textit{{T}able {U}nion {S}earch}}. The TUS dataset was synthesized from 10 base tables obtained from UK and Canadian open data~\cite{DBLP:journals/pvldb/NargesianZPM18}. A sequence of projection and selection operations were applied to these base tables to derive unionable tables. We use the same TUS version as that used by Khatiwada. et al.~\cite{santos23}.\footnote{\url{https://zenodo.org/records/14151800}} In this dataset, queries share common unionable tables, that is, a single data lake table may be unionable with multiple distinct query tables. Since diluting the data lake tables is part of our dataset preparation, we use a subset of queries that do not share unionable data lake tables, to simplify the implementation process. 
The resulting TUS dataset contains 42 query tables and 1008 unionable data lake tables, including 504 diluted tables. 

 \vspace{0.5em}
\noindent \textbf{\textit{Santos.}} Leveraging a benchmark creation technique similar to that proposed in Nargesian et al.~\cite{DBLP:journals/pvldb/NargesianZPM18}, Santos Small was also constructed  using open data~\cite{santos23}. 	Similar to TUS, we retain only a set of queries that do not share unionable tables with others. Our repurposed Santos Small dataset consists of 35 queries and 948 data lake tables, including 474 diluted versions. We refer to this dataset simply as Santos.

 \vspace{0.5em}\noindent \textbf{\textit{Ugen-v2}}.
 The most recent unionable table benchmark was generated by LLMs, \textbf{\textit{Ugen-v2}}~\cite{DBLP:conf/vldb/PalKSM24}.  This benchmark has been shown to be more challenging for all unionable table search solutions compared to hand-curated benchmarks like TUS and Santos. We created a smaller version of Ugen-v2, approximately 40\% of the  Ugen-v2 queries with the smallest number of tuples and 19 queries
 with two considerations. First, to eliminate any potential negative influence from automatic alignment,  we selected a subset of queries along with their corresponding unionable tables and manually aligned their attributes. Secondly, computing the search novelty score (Definition~\ref{df:nscore_resultset}) for our medium-sized datasets proved to be time-consuming. To enable computation of the nscore,
we created the smaller version of this challenging dataset. For brevity, we call our smaller version of this benchmark simply Ugen-v2.

\subsection{Experiment Setup}\label{subsec-setup}
We provide a detailed description of each system in our experimental comparison, including its configuration. 

\noindent \Starmie:
We use \Starmie~\cite{DBLP:journals/pvldb/FanWLZM23} for two purposes: (1) in offline mode, to generate contextualized attribute embeddings that are subsequently used by all methods; and (2) in online mode, as a baseline method for ranking unionable tables given a query table.
We represent attributes using embeddings generated by the \Starmie model, which are used to compute semantic similarity between attributes in  \estimatename, \semnov, and \GMC. \Starmie  was fine-tuned on the Santos 
dataset~\cite{DBLP:journals/pvldb/KhatiwadaSGM22}, and the resulting model was subsequently used to generate attribute embeddings with vector size $768$ for all datasets.\footnote{Starmie was fine-tuned on Santos arbitrarily, as it is expected to generalize across datasets without further tuning. Because all methods rely on the same fine-tuned model, the comparison is fair.} We configured the fine-tuning process with the following hyper-parameters:  $\mathrm{batch\;size}\!=\!64$, $\mathrm{learning\; rate}\! =\! 5e-5$, $\mathrm{max.\; seq.\; length}\! =\!128$,  $\mathrm{proj.\; dim.} \!=\! 768$,  $\mathrm{augment\_op}\! =\! \mathrm{drop\_col}$,  $\mathrm{sample\_meth}\! = \!\mathrm{tfidf\_entity}$, $\mathrm{table\_order} \!= \!\mathrm{column}$ and $\mathrm{num.\;of\;epochs}\! =\! 3$.  Refer to the official \Starmie codebase for hyperparameter details.
In the online mode, scalability is not a major concern when evaluating only the ground-truth unionable tables. Therefore, to rank the top-$l$ unionable tables, we employ a straightforward linear scan approach with a similarity threshold of $\tau \!\! =\!\!  0.7$ (see the Online Query Processing section in Fan et al.~\cite{DBLP:journals/pvldb/FanWLZM23}).

\noindent {TABBIE:}
We use the pre-trained TABBIE model~\cite{DBLP:conf/naacl/IidaTMI21} in inference mode from its public codebase.\footnote{\url{https://github.com/SFIG611/tabbie}} It represents each table with a $768$-dimensional embedding. Recall that TABBIE is used to create semantic embeddings for \semnov.

\noindent \ERNov: Experiments use the \blockone Python package (v1.4.2) with default settings and  $\mathrm{num\_neighbors}\!=\!5$\normalsize, determined empirically for fast blocking.\footnote{\url{https://pypi.org/project/scann/},} For entity matching, we use the rule-based technique shared in the publicly available \textit{py-entitymatching} \normalsize(version~0.4.2) codebase.\footnote{\url{https://pypi.org/project/py-entitymatching/}}  We compare  values of aligned attributes.  Values are considered similar if their Levenshtein similarity is $\geq 0.85$.

\subsection{Results for Effectiveness and Scalability}\label{subsec-eff}
\estimatename returns a ranked list of novel unionable tables. To evaluate ranking quality, we require at least two results and vary $l$ within $[2,10]$. 	Note that in the Ugen-v2 dataset, 79\% (15/19) of queries have fewer than 12 unionable tables, with two of them being blatant duplicates of  queries and their diluted versions and we account for this in the results.

\noindent \textbf{Blatant Duplicate Measure.} 
Introducing novelty through penalization, as in \estimatename and \semnov, yields better results (Table~\ref{tab:baltant_dup_avg}) on the Santos dataset, while \estimatename, \semnov, and \ERNov perform best on TUS (Table~\ref{tab:baltant_dup_avg}). On the Ugen-v2 dataset, \ERNov, which relies on syntactic similarity for matching, achieves the lowest duplicate rate. 

\begin{table}[!htbp]
\caption{\small  The average percentage of queries per dataset yielding a blatant duplicate within the top $l$ results ($l\!\in\![2,10]$). Lower values indicate better performance.  In all tables, \textbf{Bold} numbers show the best performance per row; the second best numbers \underline{underlin}\underline{ed}.}
\label{tab:baltant_dup_avg}
 \vspace{-0.5em}
\centering
{\setlength{\tabcolsep}{7pt} 
\renewcommand{\arraystretch}{1.1} 
\begin{tabular}{@{}r|ccccc} 
  & \Starmie  & \GMC & \ERNov & \semnov & \estimatename \\
\hline
\textit{Santos}          & 99.4\%  & 81.9\% & \underline{8.6\%} & {\textbf{0\%}} & {\textbf{0\%}} \\
\textit{TUS}             & 100\%   & 99.5\% & \textbf{0\%} & {\textbf{0\%}} & \textbf{0\%} \\
\textit{Ugen-v2}   & 100\%   & 98.8\% & \textbf{43.6\%} & {52.6\%} & \underline{43.9\%} \\
\hline
\textit{Ugen12}  & 100\%  & 100\% & \textbf{0\%} & \textbf{0\%} &\textbf{0\%}  \\
\textit{Ugen $l\!=\!2$}   & 100\%  &89.5\%  &\textbf{0\%}  & \textbf{0\%} & \textbf{0\%}\\
\hline
\end{tabular}
}
\end{table}

As previously noted, the Ugen-v2  dataset contains a limited number of unionable tables for 79\% of the queries. To enable a more equitable comparison, we restrict the evaluation to queries with at least 12 unionable tables (\textit{Ugen12}).   Under this condition, 
\estimatename, \semnov, and \ERNov have identical behavior.
Since all queries have at least four unionable tables, we also report results for $l\!=\!2$ (\textit{Ugen $l\!=\!2$}). Under this setting, \estimatename, \semnov, and \ERNov again show identical behavior and  outperform the other methods. Considering these conditions and \ERNov’s performance on Santos, \estimatename and \semnov remain the top performers overall.
\normalsize


\noindent \textbf{Syntactic Novelty Measure}. 
The SNM metric evaluates a system’s ability to rank an original table above its diluted (less novel) versions and applies only to rank-based methods; hence, \GMC is excluded. Figures~\ref{fig:snm_santos}, ~\ref{fig:snm_tus}, and ~\ref{fig:snm_ugen} report the average SNM across all queries for each dataset.  As shown, \estimatename consistently achieves superior performance across all datasets. Its best performance is observed on the TUS dataset, where it achieves the highest scores compared to the other datasets. The second  best performing system is \ERNov across all dataset. Furthermore, unlike the other two datasets, we observe a drop in SNM  for \estimatename and \ERNov on the Ugen-v2 dataset at $l\!=\!3$. A closer examination reveals that the scarcity of unionable tables contributes to the observed behaviors. 


\begin{figure}[!htbp]
\centering

\begin{subfigure}[t]{0.485\columnwidth}
  \includegraphics[width=\linewidth]{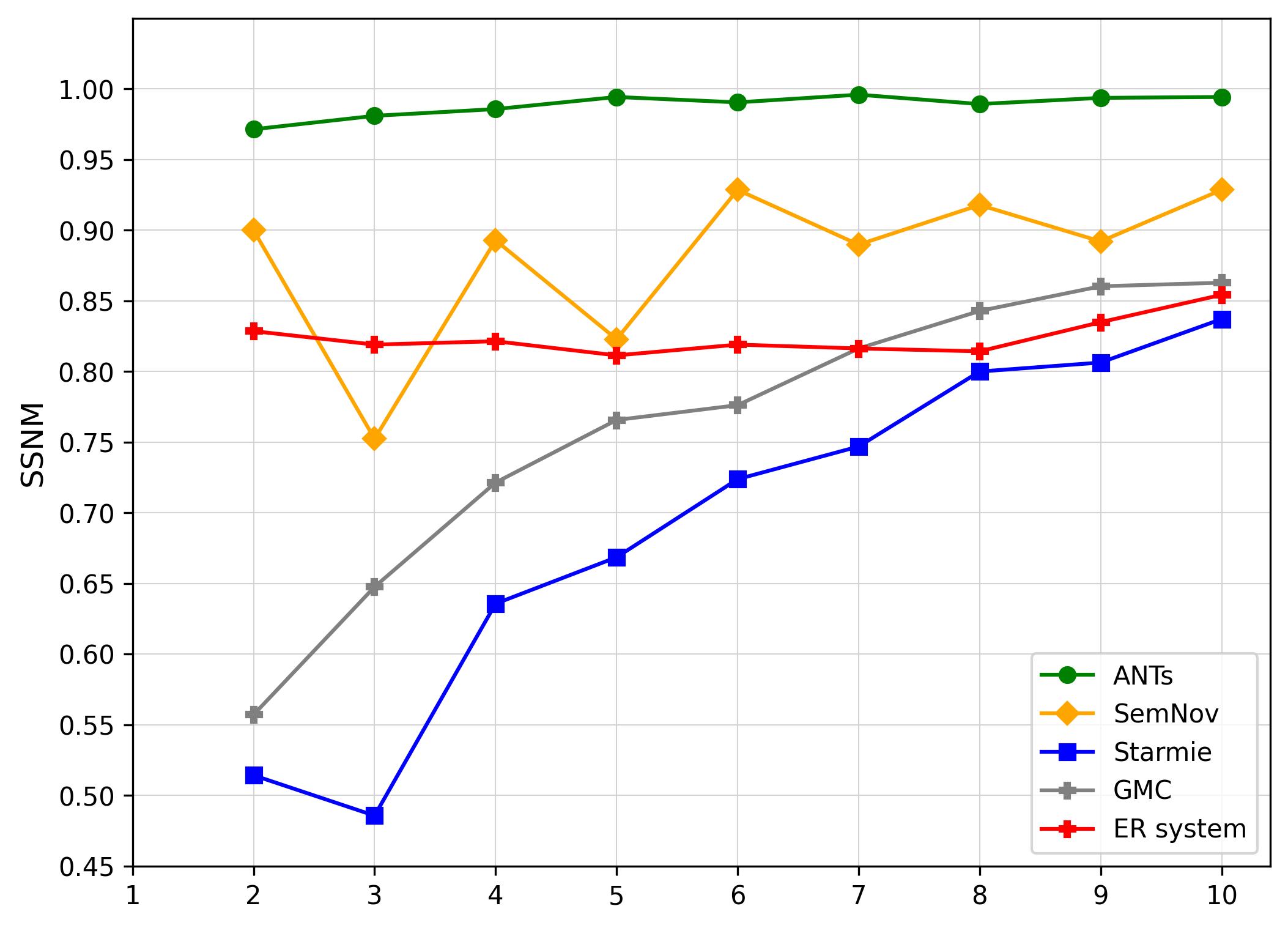}
  \caption{\small Santos}
  \label{fig:ssnm_santos}
\end{subfigure}
\hfill
\begin{subfigure}[t]{0.485\columnwidth}
  \includegraphics[width=\linewidth]{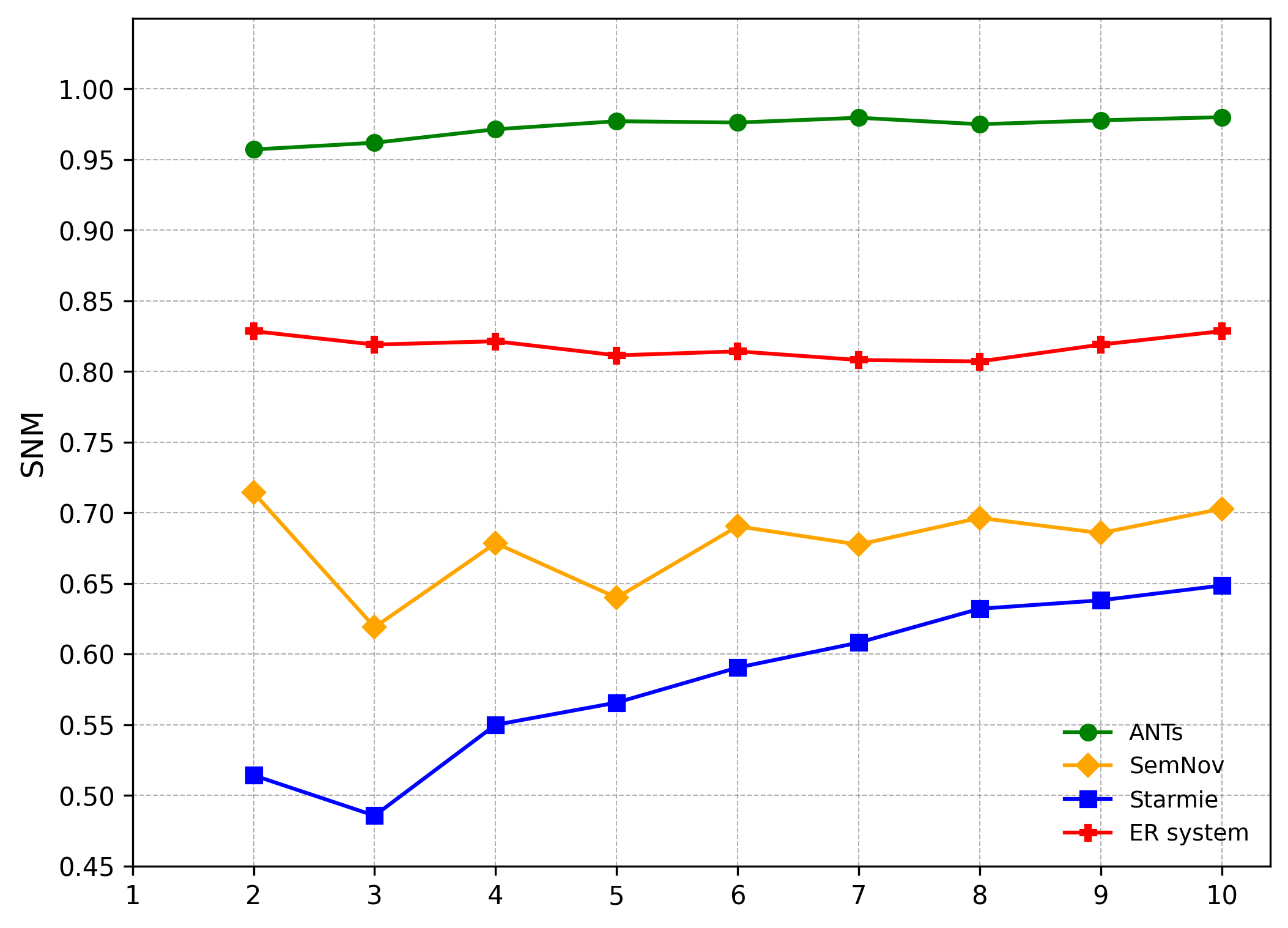}
  \caption{\small Santos}
  \label{fig:snm_santos}
\end{subfigure}

\begin{subfigure}[t]{0.485\columnwidth}
  \includegraphics[width=\linewidth]{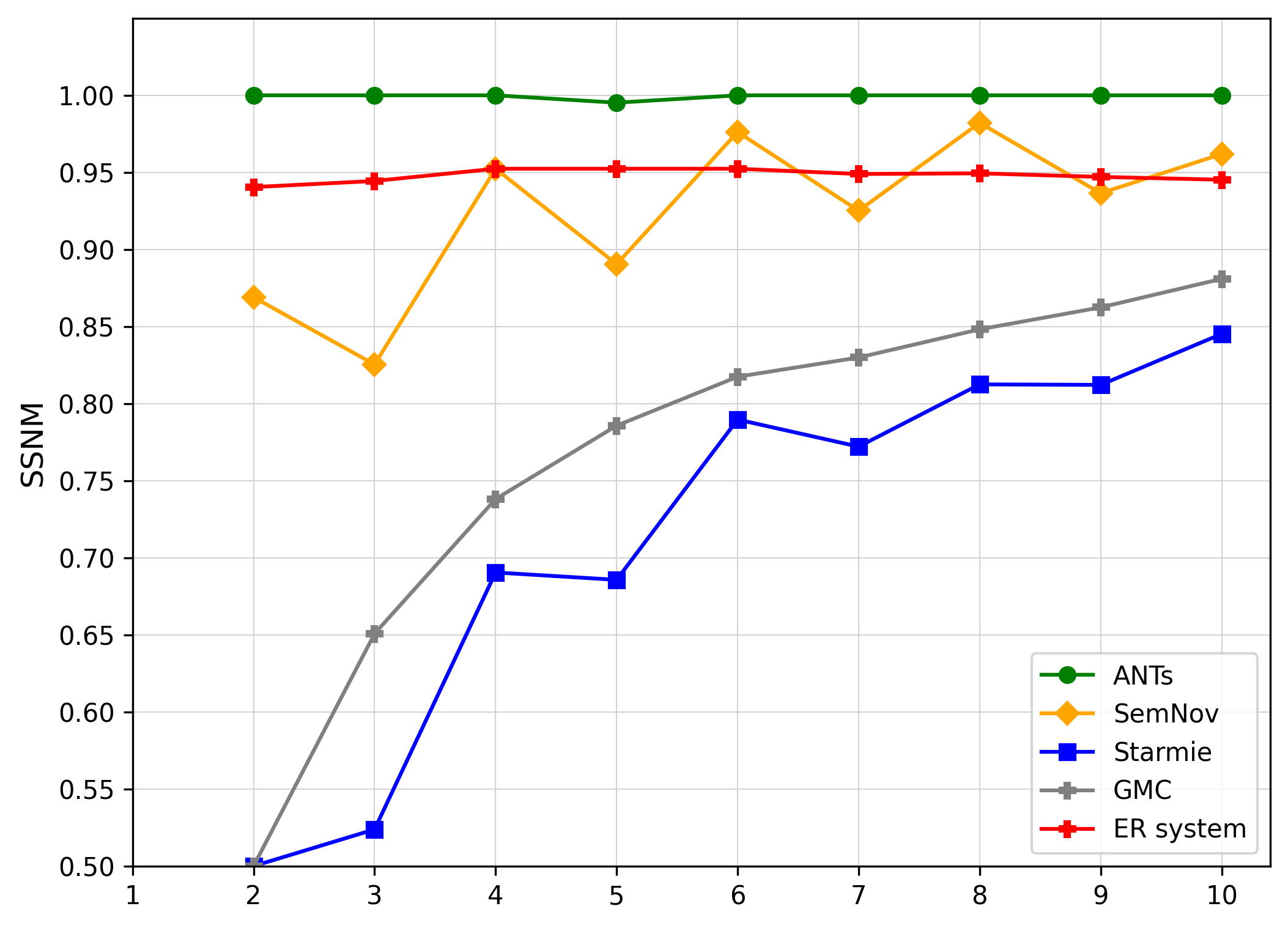}
  \caption{\small TUS}
  \label{fig:ssnm_tus}
\end{subfigure}
\hfill
\begin{subfigure}[t]{0.485\columnwidth}
  \includegraphics[width=\linewidth]{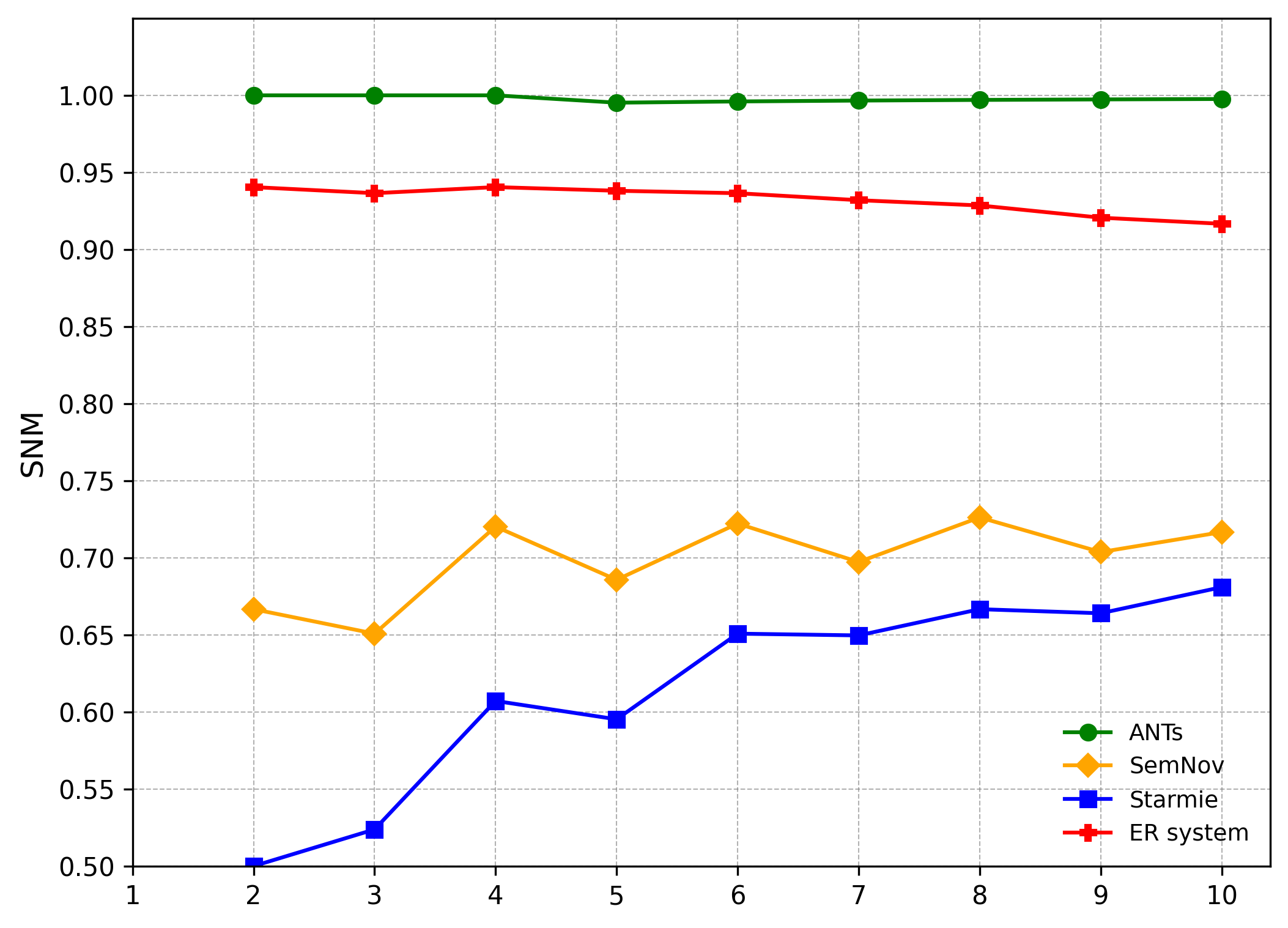}
  \caption{\small TUS}
  \label{fig:snm_tus}
\end{subfigure}

\begin{subfigure}[t]{0.485\columnwidth}
  \includegraphics[width=\linewidth]{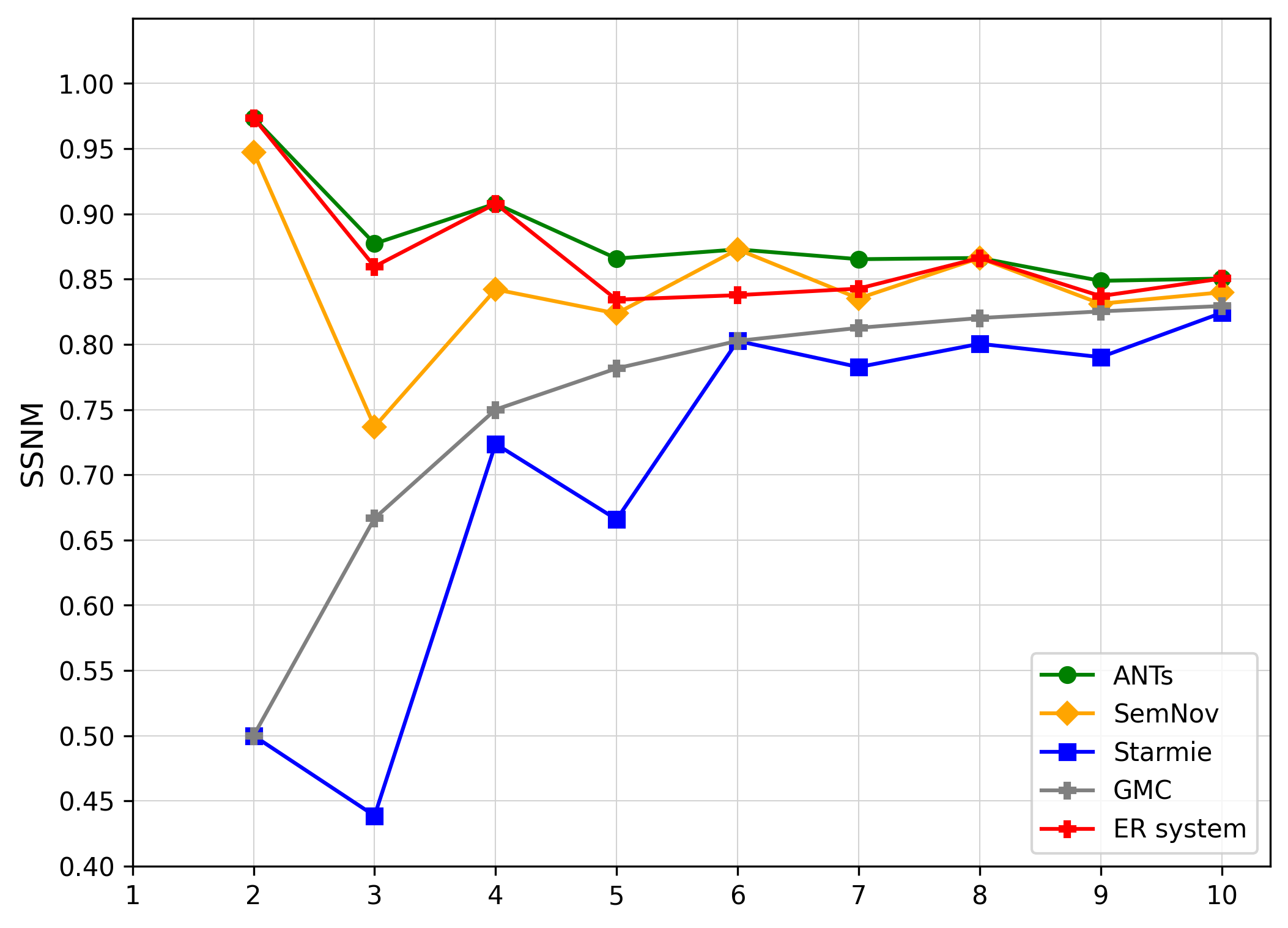} 
  \caption{\small Ugen-v2}
  \label{fig:ssnm_ugen}
\end{subfigure}
\hfill
\begin{subfigure}[t]{0.485\columnwidth}
  \includegraphics[width=\linewidth]{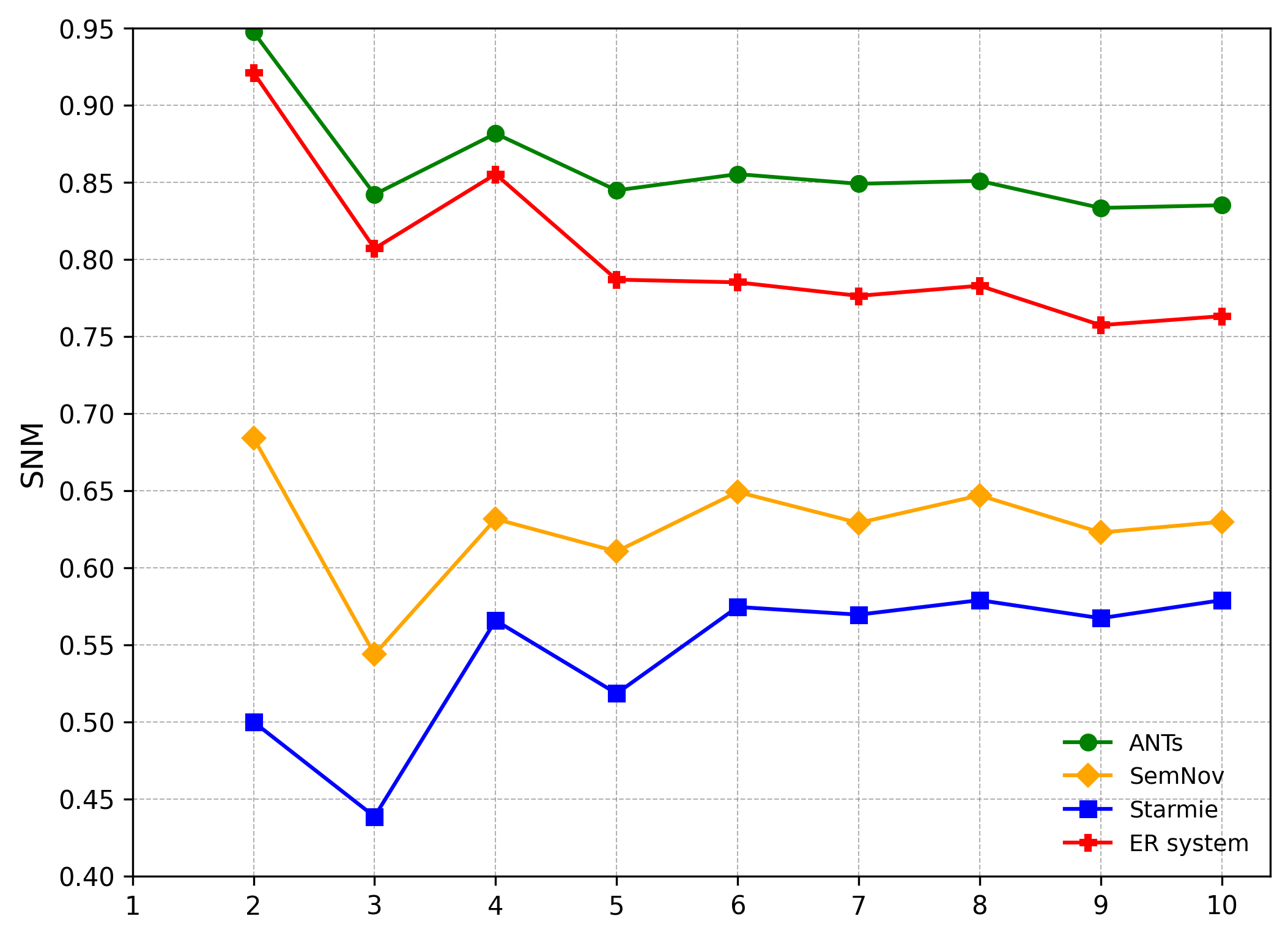} 
  \caption{\small Ugen-v2}
  \label{fig:snm_ugen}
\end{subfigure}
  \vspace{-0.5em}
\caption{\small System performance across datasets for SSNM (left) and SNM (right), $l\!\in\![2,10]$. Y-axis truncated.}
\label{fig:ssnm_vs_snm_all}
\end{figure}
\noindent \textbf{Simplified Syntactic Novelty Measure.}
Compared to SNM, SSNM provides less information, as it measures only the system’s ability to retrieve novel tables not their ranking. Figure~\ref{fig:ssnm_vs_snm_all} (left column (a), (c), (e)) shows the average SSNM across all queries for each dataset. \estimatename again outperforms others on TUS and Santos. 
Given its stable performance, \ERNov ranks as the second-best system. 
As $l$ increases, all methods converge, an expected result given the limited number of unionable tables per query (at most 20). For the Ugen-v2  dataset, a notable drop in most system's performance is observed,  due to the scarcity of unionable tables. 

\noindent \textbf{Search Novelty Score (nscore).}
Computing the search novelty  on our medium-sized datasets (TUS and Santos) proved computationally intensive, even with parallelization using 20 CPUs. 
Therefore, we limited the computation of  search novelty  to $l\! \in\! [2, 3]$ and focused solely on the Ugen-v2 dataset. 

Table~\ref{tb:nscore} demonstrates that \estimatename  outperforms all other methods, achieving the highest average novelty. We assessed statistical significance via paired t-tests for $l\!=\!2$ and $l\!=\!3$. 
Improvements of \estimatename over \Starmie, \GMC, and \semnov are statistically significant ($p\!\leq\!0.05$). \semnov ranks second and statistically significantly outperforms \Starmie\ and \GMC. \estimatename and \semnov achieve novelty scores comparable to \GMM, with no statistically significant differences observed. This indicates that the penalization-based methods achieve novelty scores comparable to those of a provably $\tfrac{1}{2}$-approximate baseline. 

\begin{table}[t]
\caption{\small Average novelty score across Ugen\_v2 queries.}
\label{tb:nscore}
 \vspace{-0.5em}
\centering
\small
{
\setlength{\tabcolsep}{3pt}%
\renewcommand{\arraystretch}{1.05}%
\resizebox{0.9\columnwidth}{!}{%
\begin{tabular}{@{}l| l|cccccc@{}} 
     & $l$ & \Starmie   & \GMC & \GMM & \ERNov & \semnov & \estimatename \\
 \hline
\multirow{2}{*}{$\mathrm{nscore}(T)$} 
  & $2$  & 0.0&   0.004 &  0.3702&0.3704&0.3829  & \textbf{0.3900} \\
  & $3$   & 0.0073 & 0.0065& 0.1816 &0.2383& 0.2412 & \textbf{0.2474} \\
\hline
\end{tabular}%
}
}
\end{table}
Another important observation is that \estimatename is the best-performing system overall in terms of the novelty score (Table~\ref{tb:nscore}). The next best methods are \semnov and \ERNov, which compete for second place on the novelty score, followed by \GMC, with \Starmie performing worst. This ranking is consistent with their relative behavior on SNM, SSNM, and Blatant-Duplicate, indicating that our evaluation metrics effectively capture the intended objectives.

\noindent \textbf{Additional Metrics}.  Table~\ref{tb:gmcmetrics_transposed} compares $\mathcal{F}$ values of \estimatename against the baselines across all benchmarks. Due to space limitations, we report only the average percentage of queries for which \estimatename achieves an equal or higher $\mathcal{F}$ value at $l \!=\! 10$. \ifnottechreport{The results for $l \in [2, 10]$ are provided in the technical report.}{The results for $l \in [2, 10]$ are provided in Table~\ref{tab:extendedofvalue} of Appendix.} \estimatename consistently underperforms \GMC on the Santos and TUS datasets across all levels of $l$, as well as on the Ugen-v2 dataset for $l \in [2, 4]$, highlighting \GMC’s effectiveness in achieving its optimization objective.
However, \estimatename consistently outperforms \semnov and \ERNov across all datasets and levels of $l$, achieving higher performance on at least 54.3\% of the queries. Given that \GMC is computationally expensive, \estimatename serves as a more efficient and effective alternative than others when runtime performance is a concern.

\begin{table}[!htbp]
\caption{\small Percentage of queries where \estimatename achieves equal or higher $\mathcal{F}$ at $l\!=\!10$. Values $\ge\!50$ are italicized.}
\label{tb:gmcmetrics_transposed}
 \vspace{-0.5em}
\centering
{
\setlength{\tabcolsep}{12pt} 
\renewcommand{\arraystretch}{1.1} 
\begin{tabular}{l|cccc}
& \Starmie  &  \GMC & \ERNov & \semnov \\
\hline
\textit{Santos} & 48.6\% &  2.9\% &\textit{54.3\%}& \textit{82.9\%} \\
\textit{TUS}    & \textit{66.7}\% &  2.4\% &\textit{73.8\%}& \textit{66.7\%} \\
\textit{Ugen-v2} & \textit{73.7\% }&  \textit{73.7\%} & \textit{84.2\% }&\textit{94.7\%} \\
\hline
\end{tabular}}
\end{table}

\begin{figure} [!htbp]
    \centering
 \includegraphics[width=0.75\columnwidth]{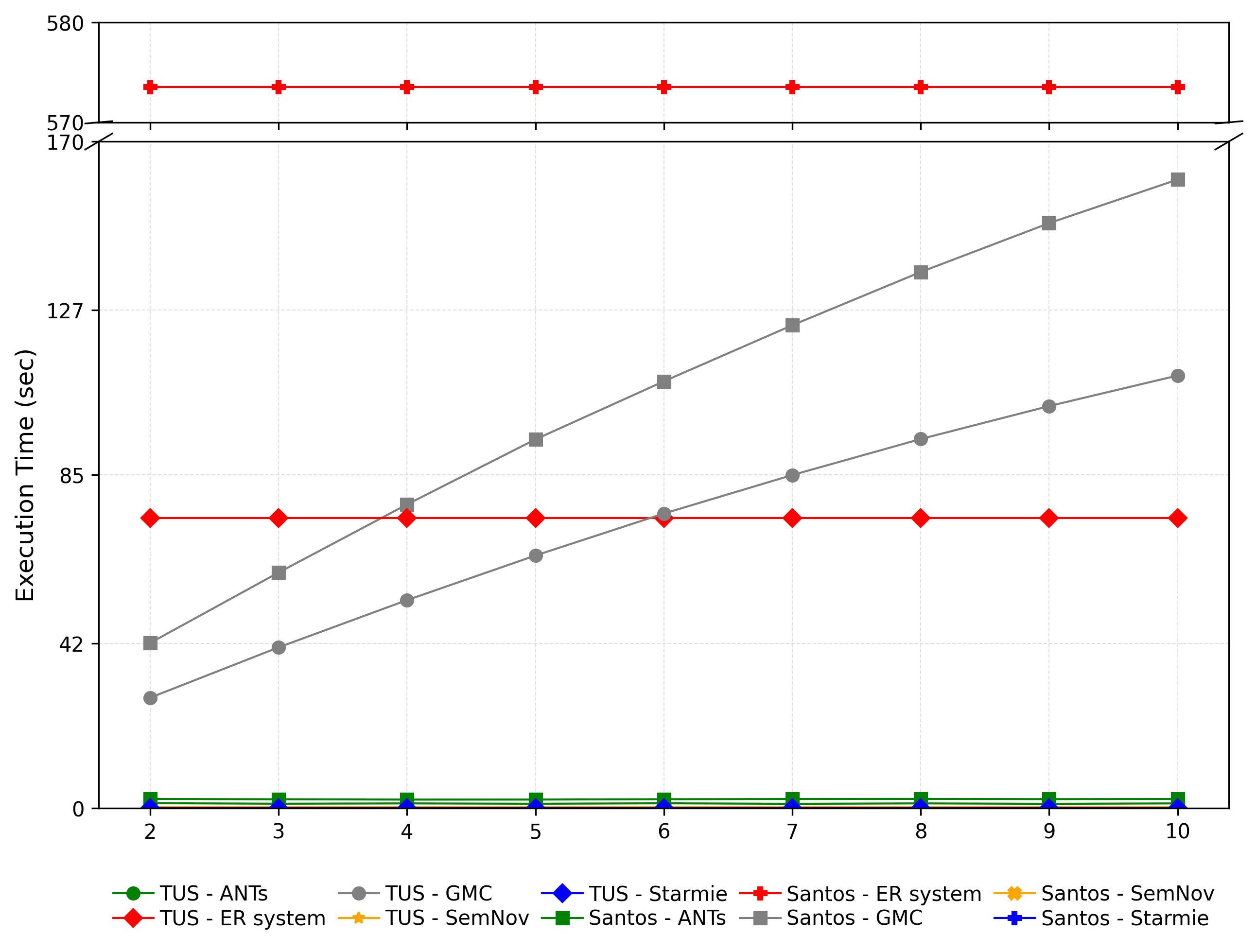}
 \vspace{-0.5em}
    \caption{\small Exec. time across datasets, averaged over all queries ($l\!\in\![2,10]$). Y-axis truncated.}
    \label{fig:exetime1}
\end{figure}
\noindent \textbf{Scalability.} We compare the runtime performance of various systems for the re-ranking task. The execution time for all systems is negligible ($\approx\mathbf{0}$) on Ugen-v2. Therefore, we report the average runtime over all queries for TUS and Santos in Figure~\ref{fig:exetime1}.  \estimatename and \semnov have small execution time across these datasets, less than $2.4$ seconds. In contrast, \GMC and \ERNov incur substantial overhead. This significant latency makes \GMC and \ERNov less suitable particularly for interactive data analysis scenarios over data lakes, where low-latency responses are essential for effective user interaction.

\subsection{Comparison with \DUST}\label{subsec:dust} 
In an upcoming paper, Khatiwada et al.~\cite{KhatiwadaSM26} study a related problem and propose the \DUST system.  Given a query \sTable $Q$, \DUST aims to identify $l$ unionable tuples that are maximally dissimilar to both $Q$ and to one another. Like our approach, their method builds on existing unionable search techniques, but unlike ours, it does not re-rank tables.  Rather,  it operates at the tuple level, returning a set of tuples drawn from multiple unionable tables that are the most diverse.
In addition,  Khatiwada et al.~\cite{KhatiwadaSM26} focus on semantic dissimilarity, whereas our work  
aims to develop a table reranking solution that maximizes syntactic dissimilarity (while maintaining high unionability or semantic similarity).

To adapt \DUST\ for our problem, we evaluate the novelty of its results (a set of diverse tuples which we model as a single table) by computing the search novelty score of this table and comparing this to the search novelty score of the top-1 table returned by \estimatename.  To do this, we find the size of  \estimatename' top ranked table ($l$), and run \DUST\ to find the most diverse $l$ tuples.  We would expect \DUST\ to out-perform \estimatename since it is drawing tuples from many tables. Our experiments are run on Ugen-v2 and  show that, although \DUST achieves a slightly higher novelty score than \estimatename, it does so only after retrieving many more tables and tuples and incurring substantially higher running time (Table~\ref{tb:dust-ants_main}).
We report the average size of the top-1 \estimatename table over all queries along with the total number of acquired tuples (and tables used in the search (\texttt{\#Acq.tup}).  \texttt{\#Acq.tup} is just the size of the Top-1 for \estimatename and is the sum of the tuples in $k$ table from which \DUST\ is drawing its diverse tuples.   Hence, in data-market settings with tuple-acquisition costs or in latency-sensitive scenarios, \estimatename offers a better cost–benefit trade-off, whereas when novelty is the sole objective and resources are unconstrained, \DUST is preferable.

\begin{table}[t]
\caption{\small The novelty score, the number of acquired tuples and tables, the final number of novel tuples, and the execution time are shown. The best values are shown in \textbf{bold}.}
\label{tb:dust-ants_main}
\centering
 \vspace{-0.5em}
\setlength{\tabcolsep}{1.5pt}%
\renewcommand{\arraystretch}{1.1}%
\begin{tabular}{l|ccccc}
 & $\mathrm{nscore}(T)$ & \#Acq.tup. & \#Nov.tup. & \#Acq.tbl. & Exec.time(s) \\
\hline
\estimatename & 0.4614          & \textbf{23} &\textbf{23} & \textbf{1} & $\approx\mathbf{0}$ \\
\hline
\DUST         & \textbf{0.4966} & 416         & \textbf{23} & 5         & 101 \\
\hline
\end{tabular}
\end{table}

\subsection{Ablation Studies: Constituent Techniques and Parameters}\label{subsec:alignment_intialresult_impact}
We report the main findings from three ablation studies on key components of our approach. 

\noindent{\em Alignment Quality:}  Using an automatic alignment algorithm~\cite{KhatiwadaSM26} (instead of the ground truth alignment) is found to be detrimental in 92\% of all (system X, dataset Y) settings for SSNM and in 89\% of settings for SNM. However, even with poor quality alignments, \estimatename continues to out-perform other \name approaches in 48\% and 89\% of all possible ($l$, dataset Y) settings for SSNM and SNM, respectively.

\noindent{\em Novelty of Search Result:}  Next, we study how the novelty of the initial search results affects the effectiveness of our novelty-aware reranker. Our experiments show that increasing the redundancy of the initial result  is detrimental to \estimatename's performance on all datasets in terms of SSNM, with a similar pattern observed for SNM.

\noindent{\em Hyperparameters:} An analysis of the ablation results on the domain-size threshold $s$ and penalization degree $b$ indicates that our method is robust to the choice of these parameters on the datasets considered: SNM and SSNM exhibit only limited variation across the tested values. Thus, \estimatename is not sensitive to the exact choice of $s$ and $b$: its performance remains stable even when these parameters are not set to their empirically optimal values, $s = 10$ and $b \in {4,5}$ across all datasets.

\section{\estimatename for Downstream Task}\label{sec:ANTsMLTask_summary}
We evaluate the impact of \estimatename, on a downstream rating-prediction task using the freely available IMDb Extensive Dataset.\footnote{\url{https://www.kaggle.com/datasets/simhyunsu/imdbextensivedataset}} We compare three configurations for constructing the training data:
(i) \Baseline, no data discovery is applied and we use only the query table;
(ii) \Starmie, we augment the query table with the tables returned by \Starmie; and
(iii) \estimatename,  we augment the query table with the tables returned by \Starmie but reranked by \estimatename.  We  build a regression model to predict the average user rating (\texttt{avg\_vote}) from movie metadata.\footnote{This task design is a modified version of a task used to show the efficacy of other data discovery approaches~\cite{DBLP:journals/pvldb/FanWLZM23}.}  The IMDb movie
table is randomly split into training (80\%) and test (20\%) sets. From the training set, we randomly select 1\% of the tuples and partition them into three distinct query tables, each retaining all predictive attributes (features). A synthetic data lake of unionable tables is created from the remaining training tuples following Nargesian et al.~\cite{DBLP:journals/pvldb/NargesianZPM18}. For each query, we additionally create a diluted variant of the unionable tables (with a dilution degree of 0.4) and manually insert the query table into the  data lake.
We use three standard regressors: Gradient Boosting Machine (LGBM), Random Forest (RF), and XGBoost (XGB), and evaluate them using standard regression metrics (R\textsuperscript{2} and RMSE).\footnote{
We implemented and evaluated all models using scikit-learn’s APIs~\url{https://scikit-learn.org}. } 
Training regressors on random selections of $k$ unionable tables,  we observe that LGBM consistently outperforms the others across all $k \! \in \![2,10]$ for both R\textsuperscript{2} and RMSE; we therefore adopt LGBM as the model for comparing table discovery strategies. On the diluted dataset, where redundancy in the training dataset is guaranteed, \estimatename consistently outperforms both \Starmie and \Baseline for all values of $k$ and for both R\textsuperscript{2} and RMSE. A paired t-test ($p\!\leq\!0.05$) shows that improvements over \Baseline are significant;  all improvements over \Starmie are statistically significant except for $k{=}5$. On the non-diluted dataset, for both R\textsuperscript{2} and RMSE, \estimatename significantly outperforms \Baseline, but no significant difference is observed between \estimatename and \Starmie. Overall, we conclude that in the presence of redundancy, reranking the \Starmie results by \estimatename is consistently beneficial for downstream machine learning tasks. In the non-diluted setting, the reranking step is not detrimental; thus, the novelty-aware reranker can be safely applied always  even when the redundancy level of the dataset is unknown.

\section{Related Work}\label{sec:related}
\noindent \textit{Table Discovery}.
Given a query table and an expansion strategy, discovering useful tables from data lakes has been extensively studied in recent years~\cite{DBLP:journals/vldb/ChapmanSKKIKG20,DBLP:conf/sigmod/Fan00M23}, with joinable and unionable expansions receiving the most attention. Various aspects of the problem have been addressed, including scalability and accuracy. Focusing on unionable table search, Nargesian et al.~\cite{DBLP:journals/pvldb/NargesianZPM18} define two tables as unionable if they share unionable attributes.
To assess  attribute unionability, they use  attribute values, their mappings to an ontology, along with semantic embeddings of values.  D3L extends the notion of attribute unionability~\cite{DBLP:conf/icde/BogatuFP020}. Khatiwada et al.~\cite{santos23} further improved unionable table search by modeling relationships between attributes.
Recognizing that  attribute understanding is critical for discovering unionable tables, recent research has increasingly focused on enhancing attribute representations. Hu et al.~\cite{DBLP:conf/acl/HuWQLSFKKWY23} and Fan et al.~\cite{DBLP:journals/pvldb/FanWLZM23} incorporate contextualized information to develop richer and more effective representations of attributes. 

\vspace{0.2em}
\noindent \textit{Novel Data Discovery}. A rich body of literature explores methods for retrieving novel results across various types of systems, categorized by the forms of user queries they support, including direct user queries (ad hoc information retrieval IR)~\cite{DBLP:conf/sigir/ClarkeKCVABM08,DBLP:journals/tkde/WuZMLHML24}, indirect user queries (recommender systems), queries expressed as quasi-SQL statements~\cite{DBLP:conf/vldb/MirzaeiR23}, or queries represented as tables~\cite{KhatiwadaSM26}. Some systems also allow users to express preferences regarding the trade-off between result quality and retrieval speed~\cite{DBLP:journals/pvldb/VieiraRBHSTT11,DBLP:conf/vldb/MirzaeiR23}.
For a comprehensive overview of related work on novel search in IR and recommender systems, we refer the reader to Wu et al.~\cite{DBLP:journals/tkde/WuZMLHML24}.
\vspace{0.2em}

\noindent \textit{Tuple Similarity.}
Our approach is grounded in syntactic similarity between tuples from the query table and candidate unionable tables. Prior work has proposed various techniques for comparing tuples in other contexts.  In the absence of schema information and other metadata, a framework is developed for computing similarity between database instances even when they contain missing data (i.e., null values)~\cite{DBLP:conf/edbt/GlavicMMPSV24}.  Glavic et al.~\cite{DBLP:conf/edbt/GlavicMMPSV24} developed a similarity measure based on matching between tuples of database instances. They distinguish between different types of null values, which they leverage in developing their measure. In contrast, our approach assumes that all null values are treated uniformly.

\section{Conclusions and Future Work}\label{sec:conclusion}
We addressed the problem of identifying novel unionable tables from data lakes. To this end, we proposed two general properties that any scoring mechanism should satisfy when ranking tables based on their novelty. We propose and compare several novel table search (\name) methods.  The first, \estimatename, ranks tables with respect to relevance (semantic similarity) and diversity (syntactic dissimilarity), that is the
retrieved tables are as relevant as possible to the
query, and, at the same time, the result set is as 
diverse as possible.  \estimatename is a new, scalable, approximate algorithm and we compare this to \GMC, a known slower approximate algorithm for query diversification using the same relevance and diversity functions as \estimatename.   \semnov, uses semantic dissimilarity (specifically the distance between table embeddings) instead of syntactic dissimilarity.  The fourth, ER, uses an entity-based dissimilarity, where the inverse of entity overlap (after applying entity-resolution) is used as table
diversity. Our experiments show that \estimatename outperforms the other \name methods.
We also demonstrate its positive impact on a downstream data-intensive task and conduct several ablation studies of its components.
Several aspects of this work remain open for future exploration.

\noindent \textit{Query Table Quality.} If a query \sTable contains 
poor quality data, a search algorithm may lack sufficient information to identify truly novel tables. Drawing inspiration from IR query reformulation techniques~\cite{DBLP:conf/doceng/KassaieKT25}, one approach to enhance the quality of a query \sTable  is to enrich the query using hypothetical tuples or attributes generated by an oracle (e.g., LLMs). We plan to investigate the impact of query expansion in \name. 

\vspace{0.2em}
\noindent \textit{Novelty-Aware Table Embeddings.} In this work, we 
computed relevance using the similarity of attribute embeddings. This was used as a proxy for table unionability.
As a direction for future work, we aim to incorporate diversity (or novelty) directly into the model’s objective function for creating attribute embeddings, allowing us to address \name through a unified and end-to-end process.

\vspace{0.2em}
\noindent \textit{Benchmark.}
The benchmarks used in our experiments were not originally designed for the \name problem, and as a result, they may not fully capture the capabilities of our framework. As part of future work, we plan to develop new open benchmarks tailored specifically to the \name task. 

\section{Acknowledgment}

We acknowledge the
support of the Canada Excellence Research Chairs (CERC) program. Nous remercions le Chaires d’excellence en recherche du
Canada (CERC) de son soutien.

\section{AI-Generated Content Acknowledgment}
All ideas, methods, and results presented in this paper were conceived and developed entirely by the authors. The use of AI tools did not contribute to the formulation of any scientific ideas, methodologies, analyses, or conclusions. An AI-based writing assistant was used occasionally to suggest alternative wording and correct spelling or grammatical errors.


\bibliographystyle{IEEEtran}   
\bibliography{ref/core,ref/table,ref/other}

@article{santos23,
	author    = {Aamod Khatiwada and
	               Grace Fan and
	               Roee Shraga and
	               Zixuan Chen and
	               Wolfgang Gatterbauer and
	               Renée J. Miller and
	               Mirek Riedewald},
	title     = {SANTOS: Relationship-based Semantic Table Union Search},
	journal = {{Proc. ACM Manag. Data}},
    volume    = {1},
    number    = {1},
    pages     ={1--25},
	year      = {2023}
}

@article{DBLP:journals/pvldb/NargesianZPM18,
  author       = {Fatemeh Nargesian and
                  Erkang Zhu and
                  Ken Q. Pu and
                  Ren{\'{e}}e J. Miller},
  title        = {Table Union Search on Open Data},
  journal      = {Proc. {VLDB} Endow.},
  volume       = {11},
  number       = {7},
  pages        = {813--825},
  year         = {2018}}

@inproceedings{DBLP:conf/acl/HuWQLSFKKWY23,
  author       = {Xuming Hu and
                  Shen Wang and
                  Xiao Qin and
                  Chuan Lei and
                  Zhengyuan Shen and
                  Christos Faloutsos and
                  Asterios Katsifodimos and
                  George Karypis and
                  Lijie Wen and
                  Philip S. Yu},
  title        = {Automatic Table Union Search with Tabular Representation Learning},
  booktitle    = {{ACL}},
  pages        = {3786--3800},
  year         = {2023},
  doi          = {10.18653/V1/2023.FINDINGS-ACL.233},
  bibsource    = {dblp computer science bibliography, https://dblp.org}
}

@inproceedings{DBLP:conf/vldb/MirzaeiR23,
  author       = {Hamed Mirzaei and
                  Davood Rafiei},
  title        = {Table Union Search with Preferences},
  booktitle   = {Proc. {VLDB} Workshops},
  year         = {2023},
  bibsource    = {dblp computer science bibliography, https://dblp.org}
}

@inproceedings{DBLP:conf/edbt/GlavicMMPSV24,
  author       = {Boris Glavic and
                  Giansalvatore Mecca and
                  Ren{\'{e}}e J. Miller and
                  Paolo Papotti and
                  Donatello Santoro and
                  Enzo Veltri},
  title        = {Similarity Measures For Incomplete Database Instances},
  booktitle    = {{EDBT}},
  pages        = {461--473},
  year         = {2024},
  doi          = {10.48786/EDBT.2024.40},
  bibsource    = {dblp computer science bibliography, https://dblp.org}
}

@inproceedings{DBLP:conf/icde/SantosBMF22,
	author    = {A{\'{e}}cio S. R. Santos and
	Aline Bessa and
	Christopher Musco and
	Juliana Freire},
	title     = {A Sketch-based Index for Correlated Dataset Search},
	booktitle = {ICDE},
	pages     = {2928--2941},
	year      = {2022}
}

@article{DBLP:journals/pvldb/CasteloRSBCF21,
	author    = {Sonia Castelo and
	R{\'{e}}mi Rampin and
	A{\'{e}}cio S. R. Santos and
	Aline Bessa and
	Fernando Chirigati and
	Juliana Freire},
	title     = {Auctus: {A} Dataset Search Engine for Data Discovery and Augmentation},
	journal   = {Proc. {VLDB} Endow.},
	volume    = {14},
	number    = {12},
	pages     = {2791--2794},
	year      = {2021}
}

@inproceedings{DBLP:conf/icde/DongT0O21,
	author    = {Yuyang Dong and
	Kunihiro Takeoka and
	Chuan Xiao and
	Masafumi Oyamada},
	title     = {Efficient Joinable Table Discovery in Data Lakes: {A} High-Dimensional
	Similarity-Based Approach},
	booktitle =  {ICDE},
	pages     = {456--467},
	year      = {2021}
}

@inproceedings{DBLP:conf/icde/KoutrasSIPBFLBK21,
	author    = {Christos Koutras and
	George Siachamis and
	Andra Ionescu and
	Kyriakos Psarakis and
	Jerry Brons and
	Marios Fragkoulis and
	Christoph Lofi and
	Angela Bonifati and
	Asterios Katsifodimos},
	title     = {Valentine: Evaluating Matching Techniques for Dataset Discovery},
	booktitle = {ICDE},
	pages     = {468--479},
	year      = {2021}
}

@inproceedings{DBLP:conf/sigmod/ZhangI20,
	author    = {Yi Zhang and
	Zachary G. Ives},
	title     = {Finding Related Tables in Data Lakes for Interactive Data Science},
	booktitle = {SIGMOD},
	pages     = {1951--1966},
	year      = {2020}
}

@inproceedings{DBLP:conf/sigmod/ZhuDNM19,
  author       = {Erkang Zhu and
                  Dong Deng and
                  Fatemeh Nargesian and
                  Ren{\'{e}}e J. Miller},
  title        = {{JOSIE:} Overlap Set Similarity Search for Finding Joinable Tables
                  in Data Lakes},
  booktitle = {SIGMOD},
  pages        = {847--864},
  year         = {2019},
  doi          = {10.1145/3299869.3300065},
  bibsource    = {dblp computer science bibliography, https://dblp.org}
}

@article{DBLP:journals/pvldb/KhatiwadaSGM22,
  author       = {Aamod Khatiwada and
                  Roee Shraga and
                  Wolfgang Gatterbauer and
                  Ren{\'{e}}e J. Miller},
  title        = {Integrating Data Lake Tables},
  journal      = {Proc. {VLDB} Endow.},
  volume       = {16},
  number       = {4},
  pages        = {932--945},
  year         = {2022},
  doi          = {10.14778/3574245.3574274},
  bibsource    = {dblp computer science bibliography, https://dblp.org}
}

@article{DBLP:journals/tkde/WuZMLHML24,
  author       = {Haolun Wu and
                  Yansen Zhang and
                  Chen Ma and
                  Fuyuan Lyu and
                  Bowei He and
                  Bhaskar Mitra and
                  Xue Liu},
  title        = {Result Diversification in Search and Recommendation: {A} Survey},
  journal      = {{IEEE} Trans. Knowl. Data Eng.},
  volume       = {36},
  number       = {10},
  pages        = {5354--5373},
  year         = {2024},
  doi          = {10.1109/TKDE.2024.3382262},
  bibsource    = {dblp computer science bibliography, https://dblp.org}
}

@book{DBLP:books/fm/GareyJ79,
  author       = {M. R. Garey and
                  David S. Johnson},
  title        = {Computers and Intractability: {A} Guide to the Theory of NP-Completeness},
  year         = {1979},
  isbn         = {0-7167-1044-7},
  bibsource    = {dblp computer science bibliography, https://dblp.org}
}

@inproceedings{DBLP:conf/sigir/ClarkeKCVABM08,
  author       = {Charles L. A. Clarke and
                  Maheedhar Kolla and
                  Gordon V. Cormack and
                  Olga Vechtomova and
                  Azin Ashkan and
                  Stefan B{\"{u}}ttcher and
                  Ian MacKinnon},
  title        = {Novelty and diversity in information retrieval evaluation},
  booktitle    = {{SIGIR}},
  pages        = {659--666},
  year         = {2008},
  doi          = {10.1145/1390334.1390446},
  bibsource    = {dblp computer science bibliography, https://dblp.org}
}

@article{DBLP:journals/pvldb/VieiraRBHSTT11,
  author       = {Marcos R. Vieira and
                  Humberto Luiz Razente and
                  Maria Camila Nardini Barioni and
                  Marios Hadjieleftheriou and
                  Divesh Srivastava and
                  Caetano Traina Jr. and
                  Vassilis J. Tsotras},
  title        = {DivDB: {A} System for Diversifying Query Results},
  journal      = {Proc. {VLDB} Endow.},
  volume       = {4},
  number       = {12},
  pages        = {1395--1398},
  year         = {2011},
  bibsource    = {dblp computer science bibliography, https://dblp.org}
}

@article{DBLP:journals/fcsc/YuLDF16,
  author       = {Minghe Yu and
                  Guoliang Li and
                  Dong Deng and
                  Jianhua Feng},
  title        = {String similarity search and join: a survey},
  journal      = {Frontiers Comput. Sci.},
  volume       = {10},
  number       = {3},
  pages        = {399--417},
  year         = {2016},
  doi          = {10.1007/S11704-015-5900-5},
  bibsource    = {dblp computer science bibliography, https://dblp.org}
}

@inproceedings{DBLP:conf/naacl/IidaTMI21,
  author       = {Hiroshi Iida and
                  Dung Thai and
                  Varun Manjunatha and
                  Mohit Iyyer},
  title        = {{TABBIE:} Pretrained Representations of Tabular Data},
  booktitle    = {{NAACL-HLT}},
  pages        = {3446--3456},
  year         = {2021},
  doi          = {10.18653/V1/2021.NAACL-MAIN.270},
  bibsource    = {dblp computer science bibliography, https://dblp.org}
}

@article{DBLP:journals/program/Porter80,
  author       = {Martin F. Porter},
  title        = {An algorithm for suffix stripping},
  journal      = {Program},
  volume       = {14},
  number       = {3},
  pages        = {130--137},
  year         = {1980},
  doi          = {10.1108/EB046814},
  bibsource    = {dblp computer science bibliography, https://dblp.org}
}

@inproceedings{DBLP:conf/vldb/PalKSM24,
  author       = {Koyena Pal and
                  Aamod Khatiwada and
                  Roee Shraga and
                  Ren{\'{e}}e J. Miller},
  title        = {{ALT-GEN:} Benchmarking Table Union Search using Large Language Models},
   booktitle      = {{Proc. VLDB Workshops}},
  year         = {2024},
  bibsource    = {dblp computer science bibliography, https://dblp.org}
}

@inproceedings{DBLP:conf/icde/BogatuFP020,
	author    = {Alex Bogatu and
	Alvaro A. A. Fernandes and
	Norman W. Paton and
	Nikolaos Konstantinou},
	title     = {Dataset Discovery in Data Lakes},
	booktitle = {ICDE},
	pages     = {709--720},
	year      = {2020}
}

@inproceedings{DBLP:conf/www/BrickleyBN19,
	author    = {Dan Brickley and
	Matthew Burgess and
	Natasha F. Noy},
	title     = {Google Dataset Search: Building a search engine for datasets in an
	open Web ecosystem},
	booktitle =  {WWW},
	pages     = {1365--1375},
	year      = {2019}
}

@article{DBLP:journals/vldb/NeuhofFPNANK24,
  author       = {Franziska Neuhof and
                  Marco Fisichella and
                  George Papadakis and
                  Konstantinos Nikoletos and
                  Nikolaus Augsten and
                  Wolfgang Nejdl and
                  Manolis Koubarakis},
  title        = {Open benchmark for filtering techniques in entity resolution},
    journal      = {Proc. {VLDB} Endow.},
  volume       = {33},
  number       = {5},
  pages        = {1671--1696},
  year         = {2024},
  doi          = {10.1007/S00778-024-00868-7},
  bibsource    = {dblp computer science bibliography, https://dblp.org}
}

@inproceedings{DBLP:conf/icdt/Moumoulidou0M21,
  author       = {Zafeiria Moumoulidou and
                  Andrew McGregor and
                  Alexandra Meliou},
  title        = {Diverse Data Selection under Fairness Constraints},
  booktitle    = {{ICDT}},
  volume       = {186},
  pages        = {13:1--13:25},
  year         = {2021},
  doi          = {10.4230/LIPICS.ICDT.2021.13},
  bibsource    = {dblp computer science bibliography, https://dblp.org}
}

@article{DBLP:journals/pvldb/KondaDCDABLPZNP16,
  author       = {Pradap Konda and
                  Sanjib Das and
                  Paul Suganthan G. C. and
                  AnHai Doan and
                  Adel Ardalan and
                  Jeffrey R. Ballard and
                  Han Li and
                  Fatemah Panahi and
                  Haojun Zhang and
                  Jeffrey F. Naughton and
                  Shishir Prasad and
                  Ganesh Krishnan and
                  Rohit Deep and
                  Vijay Raghavendra},
  title        = {Magellan: Toward Building Entity Matching Management Systems},
  journal      = {Proc. {VLDB} Endow.},
  volume       = {9},
  number       = {12},
  pages        = {1197--1208},
  year         = {2016},
  doi          = {10.14778/2994509.2994535},
  bibsource    = {dblp computer science bibliography, https://dblp.org}
}

@inproceedings{KhatiwadaSM26,
  author       = {Aamod Khatiwada and
                  Roee Shraga and
                  Ren{\'{e}}e J. Miller},
  title        = {Diverse Unionable Tuple Search: Novelty-Driven Discovery in Data Lakes},
  booktitle    = {{EDBT}},
  pages        = {42--55},
  year         = {2026},
  doi          = {10.48786/EDBT.2026.04},
  bibsource    = {dblp computer science bibliography, https://dblp.org}
}

@inproceedings{DBLP:conf/icml/GuoSLGSCK20,
  author       = {Ruiqi Guo and
                  Philip Sun and
                  Erik Lindgren and
                  Quan Geng and
                  David Simcha and
                  Felix Chern and
                  Sanjiv Kumar},
  title        = {Accelerating Large-Scale Inference with Anisotropic Vector Quantization},
  booktitle    = {{ICML}},
  volume       = {119},
  pages        = {3887--3896},
  year         = {2020},
  bibsource    = {dblp computer science bibliography, https://dblp.org}
}

@article{DBLP:journals/vldb/SilvaALPA13,
  author       = {Yasin N. Silva and
                  Walid G. Aref and
                  Per{-}{\AA}ke Larson and
                  Spencer Pearson and
                  Mohamed H. Ali},
  title        = {Similarity queries: their conceptual evaluation, transformations,
                  and processing},
   journal      = {Proc. {VLDB} Endow.},
  volume       = {22},
  number       = {3},
  pages        = {395--420},
  year         = {2013},
  doi          = {10.1007/S00778-012-0296-4},
  bibsource    = {dblp computer science bibliography, https://dblp.org}
}

@article{DBLP:journals/corr/abs-2411-07267,
  author       = {Jiayao Zhang and
                  Yuran Bi and
                  Mengye Cheng and
                  Jinfei Liu and
                  Kui Ren and
                  Qiheng Sun and
                  Yihang Wu and
                  Yang Cao and
                  Raul Castro Fernandez and
                  Haifeng Xu and
                  Ruoxi Jia and
                  Yongchan Kwon and
                  Jian Pei and
                  Jiachen T. Wang and
                  Haocheng Xia and
                  Li Xiong and
                  Xiaohui Yu and
                  James Zou},
  title        = {A Survey on Data Markets},
  journal      = {CoRR},
  volume       = {abs/2411.07267},
  year         = {2024},
  doi          = {10.48550/ARXIV.2411.07267},
  eprinttype    = {arXiv},
  eprint       = {2411.07267},
  bibsource    = {dblp computer science bibliography, https://dblp.org}
}

@article{DBLP:journals/isr/MehtaDJM21,
  author       = {Sameer Mehta and
                  Milind Dawande and
                  Ganesh Janakiraman and
                  Vijay S. Mookerjee},
  title        = {How to Sell a Data Set? Pricing Policies for Data Monetization},
  journal      = {Inf. Syst. Res.},
  volume       = {32},
  number       = {4},
  pages        = {1281--1297},
  year         = {2021},
  doi          = {10.1287/ISRE.2021.1027},
  bibsource    = {dblp computer science bibliography, https://dblp.org}
}

@article{DBLP:journals/pvldb/Thirumuruganathan21,
  author       = {Saravanan Thirumuruganathan and
                  Han Li and
                  Nan Tang and
                  Mourad Ouzzani and
                  Yash Govind and
                  Derek Paulsen and
                  Glenn Fung and
                  AnHai Doan},
  title        = {Deep Learning for Blocking in Entity Matching: {A} Design Space Exploration},
  journal      = {Proc. {VLDB} Endow.},
  volume       = {14},
  number       = {11},
  pages        = {2459--2472},
  year         = {2021},
  doi          = {10.14778/3476249.3476294},
  bibsource    = {dblp computer science bibliography, https://dblp.org}
}

@article{DBLP:journals/vldb/LiLSDT23,
  author       = {Yuliang Li and
                  Jinfeng Li and
                  Yoshi Suhara and
                  AnHai Doan and
                  Wang{-}Chiew Tan},
  title        = {Effective entity matching with transformers},
    journal      = {Proc. {VLDB} Endow.},
  volume       = {32},
  number       = {6},
  pages        = {1215--1235},
  year         = {2023},
  doi          = {10.1007/S00778-023-00779-Z},
  bibsource    = {dblp computer science bibliography, https://dblp.org}
}

@article{DBLP:journals/debu/MillerNZCPA18,
	author    = {Ren{\'{e}}e J. Miller and
	Fatemeh Nargesian and
	Erkang Zhu and
	Christina Christodoulakis and
	Ken Q. Pu and
	Periklis Andritsos},
	title     = {Making Open Data Transparent: Data Discovery on Open Data},
	journal   = {{IEEE} Data Eng. Bull.},
	volume    = {41},
	number    = {2},
	pages     = {59--70},
	year      = {2018}
}

@article{DBLP:journals/siamrev/Kieffer94,
  author       = {John Kieffer},
  title        = {Elements of Information Theory (Thomas M. Cover and Joy A. Thomas)},
  journal      = {{SIAM} Rev.},
  volume       = {36},
  number       = {3},
  pages        = {509--511},
  year         = {1994},
  doi          = {10.1137/1036124},
  bibsource    = {dblp computer science bibliography, https://dblp.org}
}

@inproceedings{DBLP:conf/icde/VieiraRBHSTT11,
  author       = {Marcos R. Vieira and
                  Humberto Luiz Razente and
                  Maria Camila Nardini Barioni and
                  Marios Hadjieleftheriou and
                  Divesh Srivastava and
                  Caetano Traina Jr. and
                  Vassilis J. Tsotras},
  title        = {On query result diversification},
  booktitle = {ICDE},
  pages        = {1163--1174},
  year         = {2011},
  doi          = {10.1109/ICDE.2011.5767846},
  bibsource    = {dblp computer science bibliography, https://dblp.org}
}

@inproceedings{DBLP:conf/sigir/Sakai07,
  author       = {Tetsuya Sakai},
  title        = {Alternatives to Bpref},
  booktitle    = {{SIGIR}},
  pages        = {71--78},
  year         = {2007},
  doi          = {10.1145/1277741.1277756},
  bibsource    = {dblp computer science bibliography, https://dblp.org}
}

@inproceedings{DBLP:conf/kdd/DhillonMK02,
  author       = {Inderjit S. Dhillon and
                  Subramanyam Mallela and
                  Rahul Kumar},
  title        = {Enhanced word clustering for hierarchical text classification},
  booktitle    = {{SIGKDD}},
  pages        = {191--200},
  year         = {2002},
  doi          = {10.1145/775047.775076},
  bibsource    = {dblp computer science bibliography, https://dblp.org}
}

@article{DBLP:journals/tit/Lin91,
  author       = {Jianhua Lin},
  title        = {Divergence measures based on the Shannon entropy},
  journal      = {{IEEE} Trans. Inf. Theory},
  volume       = {37},
  number       = {1},
  pages        = {145--151},
  year         = {1991},
  doi          = {10.1109/18.61115},
  bibsource    = {dblp computer science bibliography, https://dblp.org}
}

@article{DBLP:journals/pvldb/FanWLZM23,
  author       = {Grace Fan and
                  Jin Wang and
                  Yuliang Li and
                  Dan Zhang and
                  Ren{\'{e}}e J. Miller},
  title        = {Semantics-aware Dataset Discovery from Data Lakes with Contextualized
                  Column-based Representation Learning},
  journal      = {Proc. {VLDB} Endow.},
  volume       = {16},
  number       = {7},
  pages        = {1726--1739},
  year         = {2023},
  doi          = {10.14778/3587136.3587146},
  bibsource    = {dblp computer science bibliography, https://dblp.org}
}

@article{DBLP:journals/tods/Codd79,
  author       = {E. F. Codd},
  title        = {Extending the Database Relational Model to Capture More Meaning},
  journal      = {{ACM} Trans. Database Syst.},
  volume       = {4},
  number       = {4},
  pages        = {397--434},
  year         = {1979},
  doi          = {10.1145/320107.320109}
}

@inproceedings{DBLP:conf/sigmod/Fan00M23,
  author       = {Grace Fan and
                  Jin Wang and
                  Yuliang Li and
                  Ren{\'{e}}e J. Miller},
  title        = {Table Discovery in Data Lakes: State-of-the-art and Future Directions},
  booktitle    = {{SIGMOD}},
  pages        = {69--75},
  year         = {2023},
  doi          = {10.1145/3555041.3589409},
}

@article{DBLP:journals/vldb/ChapmanSKKIKG20,
  author       = {Adriane Chapman and
                  Elena Simperl and
                  Laura Koesten and
                  George Konstantinidis and
                  Luis{-}Daniel Ib{\'{a}}{\~{n}}ez and
                  Emilia Kacprzak and
                  Paul Groth},
  title        = {Dataset search: a survey},
    journal      = {Proc. {VLDB} Endow.},
  volume       = {29},
  number       = {1},
  pages        = {251--272},
  year         = {2020},
  doi          = {10.1007/S00778-019-00564-X},
}

@article{DBLP:journals/ior/RaviRT94,
  author       = {S. S. Ravi and
                  Daniel J. Rosenkrantz and
                  Giri Kumar Tayi},
  title        = {Heuristic and Special Case Algorithms for Dispersion Problems},
  journal      = {Oper. Res.},
  volume       = {42},
  number       = {2},
  pages        = {299--310},
  year         = {1994},
  doi          = {10.1287/OPRE.42.2.299},
  bibsource    = {dblp computer science bibliography, https://dblp.org}
}

@inproceedings{DBLP:conf/doceng/KassaieKT25,
  author       = {Besat Kassaie and
                  Andrew Kane and
                  Frank Wm. Tompa},
  title        = {Exploiting Query Reformulation and Reciprocal Rank Fusion in Math-Aware
                  Search Engines},
  booktitle    = {{DocEng}},
  pages        = {7:1--7:10},
  year         = {2025},
  doi          = {10.1145/3704268.3742687},
  bibsource    = {dblp computer science bibliography, https://dblp.org}
}

@article{rousseeuw1987silhouettes,
  title={Silhouettes: a graphical aid to the interpretation and validation of cluster analysis},
  author={Rousseeuw, Peter J},
  journal={J. Comput. Appl. Math.},
  volume={20},
  pages={53--65},
  year={1987}
}

@misc{behrouz_babaki_2017_831850,
  author       = {Behrouz Babaki},
  title        = {COP-Kmeans version 1.5},
  year         = 2017,
  doi          = {10.5281/zenodo.831850}
}

@inproceedings{DBLP:conf/icml/WagstaffCRS01,
  author       = {Kiri Wagstaff and
                  Claire Cardie and
                  Seth Rogers and
                  Stefan Schr{\"{o}}dl},
 
  title        = {Constrained K-means Clustering with Background Knowledge},
  booktitle    = {{ICML}},
  pages        = {577--584},
  year         = {2001},
  bibsource    = {dblp computer science bibliography, https://dblp.org}
}

@article{DBLP:journals/corr/abs-2004-00584,
  author       = {Yuliang Li and
                  Jinfeng Li and
                  Yoshihiko Suhara and
                  AnHai Doan and
                  Wang Chiew Tan},
  title        = {Deep Entity Matching with Pre-Trained Language Models},
  journal      = {CoRR},
  volume       = {abs/2004.00584},
  year         = {2020},
  eprinttype    = {arXiv},
  eprint       = {2004.00584},
  bibsource    = {dblp computer science bibliography, https://dblp.org}
}

@inproceedings{DBLP:conf/edbt/BrunnerS20,
  author       = {Ursin Brunner and
                  Kurt Stockinger},
  title        = {Entity Matching with Transformer Architectures - {A} Step Forward
                  in Data Integration},
  booktitle    = {{EDBT}},
  pages        = {463--473},
  year         = {2020},
  doi          = {10.5441/002/EDBT.2020.58},
  bibsource    = {dblp computer science bibliography, https://dblp.org}
}

\ifnottechreport{}{\newpage
\appendix
\subsection{Proofs}\label{apx:proofs}
In this section, we present two proofs supporting the theorems made in the main text. 
\newenvironment{claimbdup}{\par\noindent\textbf{Theorem~\ref{clm:bdupl}.} \itshape}{\par}

\begin{claimbdup}
The Search Novelty Score  (Definition~\ref{df:nscore_resultset}) satisfies the blatant duplicate axiom and the dilution axiom.
\end{claimbdup}
\begin{proof}
Let $T$ be the search result table constructed as described in Definition~\ref{df:nscore_resultset}. For each tuple $t \in T$, we incorporate its tuple novelty computed according to Definition~\ref{def:novelty_support} into the overall novelty score of the result, as defined in Equation~\eqref{eq:table_nov_support}. By construction, all newly added tuples originate from the  query table have already an identical counter part tuple in $T$,  consequently:
\begin{compactitem}
   \item[i)] For newly introduced tuples, all resulting $N(t)$ evaluate to zero, since all tuple pairs involve an identical counter part tuple from the query table already in $T$: an increase in the denominator ($y$), with the numerator remaining unchanged in Equation~\eqref{eq:table_nov_support}.
   \item[ii)] For existing tuples, in fact  we add already-accounted-for tuples therefore this does not affect the minimum in (Definition~\ref{def:novelty_support}) while it increases the number of tuples  in (Definition~\ref{def:table_nov_support}) (i.e., a larger $y$). 
\end{compactitem}
Consequently, the addition of query tuples leads to a lower nscore. Similarly, we can show that the $nscore$ function (Definition~\ref{df:nscore_resultset}) satisfies the Dilution axiom.
\end{proof}



\newenvironment{claimbnp}{\par\noindent\textbf{Theorem~\ref{clm:nphard}.} \itshape}{\par}

\begin{claimbnp}
   Finding on optimal $R$ in Definition~\ref{def:objective_main} is NP-Hard. 
\end{claimbnp} 
\begin{proof}
If we represent the set $S$ as a complete graph, the problem resembles the induced subgraph problem with a property $\Pi$ and size $l$ \cite{DBLP:books/fm/GareyJ79}, where $\Pi$ in our case requires that the subgraph of size $l$ has the maximum nscore compared to all other subgraphs of the same size. Since  verifying  property $\Pi$ is exponential, our problem is NP-Hard. 
\end{proof}

\subsection{Jensen-Shannon divergence}\label{apx:JSD_desc}
We  briefly review  the Jensen-Shannon divergence~\cite{DBLP:conf/kdd/DhillonMK02}. Let the attributes $B_j$ and $A_i$ be  discrete random variables that take on values from the set $\mathcal{D}$ with probability distribution $\mathbb{B}_i$ and $\mathbb{A}_j$  respectively. The relative entropy or Kullback-Leibler (KL) divergence  between tow distributions $\mathbb{B}_i$ and $\mathbb{A}_j$ is computed as: 
\vspace{-0.5em}
$$\mathrm{KL}(\mathbb{B}_i,\mathbb{A}_j) \! =\! \sum_{v \in \mathcal{D}}^{} \mathbb{B}_i(v) \times \log \frac{\mathbb{B}_i(v)}{\mathbb{A}_j(v)} $$
The Kullback-Leibler (KL) divergence has several issues that make it unsuitable for our use case. It is not a proper distance measure as it is asymmetric and does not satisfy the triangle inequality. Additionally, it is unbounded for missing values: if $\mathbb{B}_i(v)\!=\!0 $ and $\mathbb{A}_j(v)\! \neq \! 0$, $KL(\mathbb{B}_i,\mathbb{A}_j) \!= \!-\infty$, making it impractical for our problem where probability mass can be missing in either of distributions.
To address these limitations, we consider the Jensen-Shannon divergence (JSd), which is derived from KL divergence but has desirable properties: it is symmetric, bounded within $[0,1]$, and can serves as a proper distance metric:
 \vspace{-0.5em}
$$\mathrm{JSd}(\mathbb{B}_i, \mathbb{A}_j)\!=\!\frac{1}{2} \times (\mathrm{KL}(\mathbb{B}_i, \mathbb{M}), \mathrm{KL}(\mathbb{A}_j, \mathbb{M}))$$ 
where $\mathbb{M}\! = \!\frac{\mathbb{B}_i+\mathbb{A}_j}{2}$ is a mixture distribution of  $\mathbb{B}_i$ and $\mathbb{A}_j$.  The Jensen-Shannon distance (JSD) between two probability distribution is the square root of JSd: 
\vspace{-0.5em}
$$\mathrm{JSD}(\mathbb{B}_i, \mathbb{A}_j)\!=\!\sqrt{\mathrm{JSd}(\mathbb{B}_i, \mathbb{A}_j)}$$

\subsection{Ablation Study: Effects of Dataset Scarcity}\label{apdx:ablationstudey}

In this section, we study the effect of dataset scarcity on our results.
\subsubsection{Syntactic Novelty Measure}
A closer examination reveals that, in the Ugen-v2 small dataset, 80\% of the queries for which \estimatename performs poorly at $l\!=\!3$ have only four unionable tables in the ground truth, two of which are blatant duplicates of the query table (including its diluted version). These low-quality matches negatively impact the SNM score. This scarcity of unionable tables contributes to the observed performance degradation. To test this hypothesis, we further restricted the Ugen-v2 small dataset to queries with at least 12 unionable tables, which resulted in trends consistent with the other datasets (Figure~\ref{fig:snmfig3_filtered}). 

\subsubsection{Simplified Syntactic Novelty Measure}
Again, for the Ugen-v2 Small dataset, an unusual drop in the performance of \estimatename is observed, is attributed to the limited number of unionable tables. To validate this hypothesis, we report SSNM scores for queries with at least 12 unionable tables. The results, shown in Figure~\ref{fig:ssnmfig3extendedand_filtered}, support our assumption and confirm the impact of data sparsity on performance.

\begin{figure}[!t]
  \centering

  \begin{subfigure}[t]{0.48\linewidth}
    \centering
    \includegraphics[width=\linewidth]{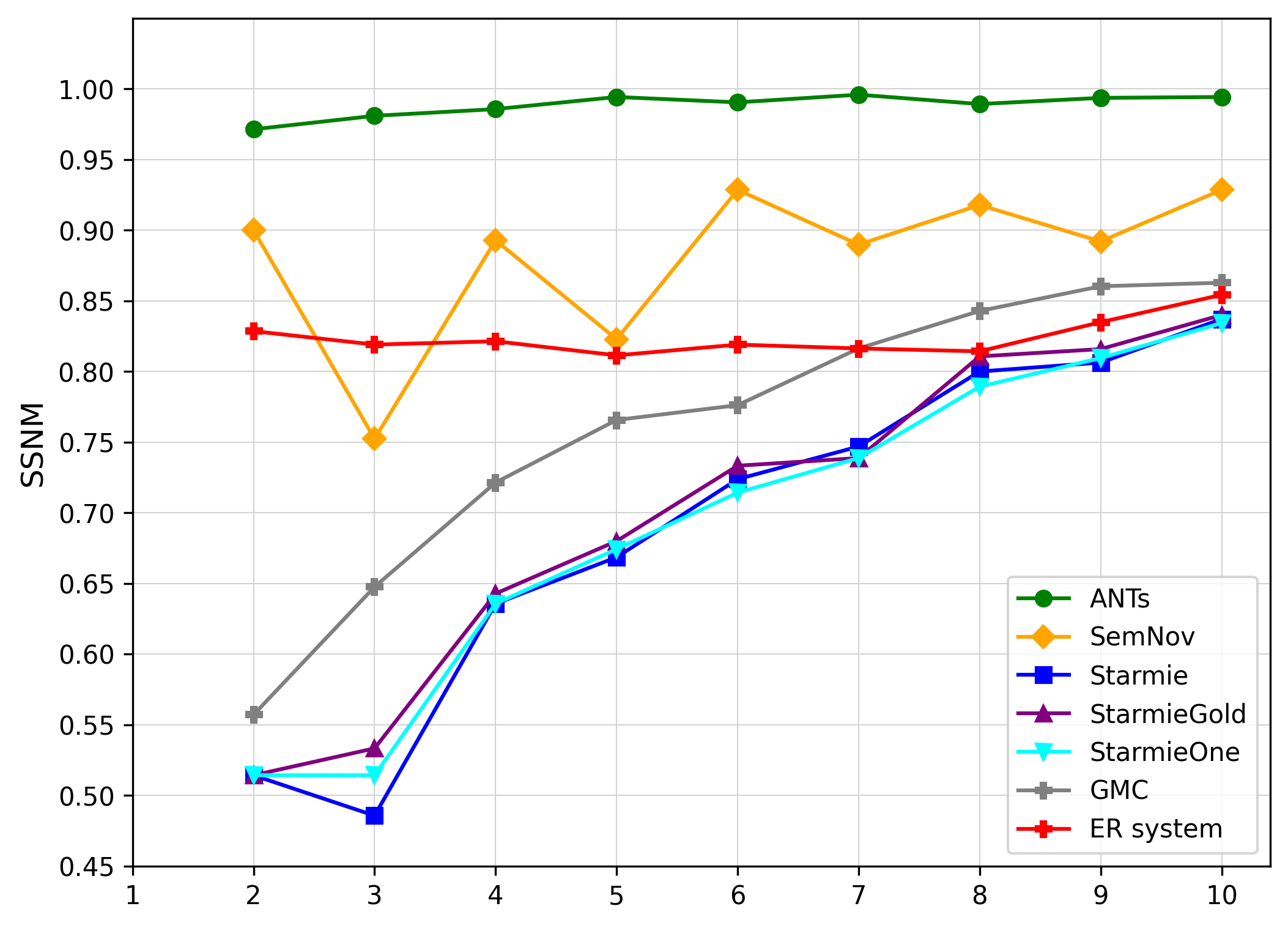}
    \caption{Santos (SSNM)}
    \label{fig:ssnm_santosex}
  \end{subfigure}
  \hfill
  \begin{subfigure}[t]{0.48\linewidth}
    \centering
    \includegraphics[width=\linewidth]{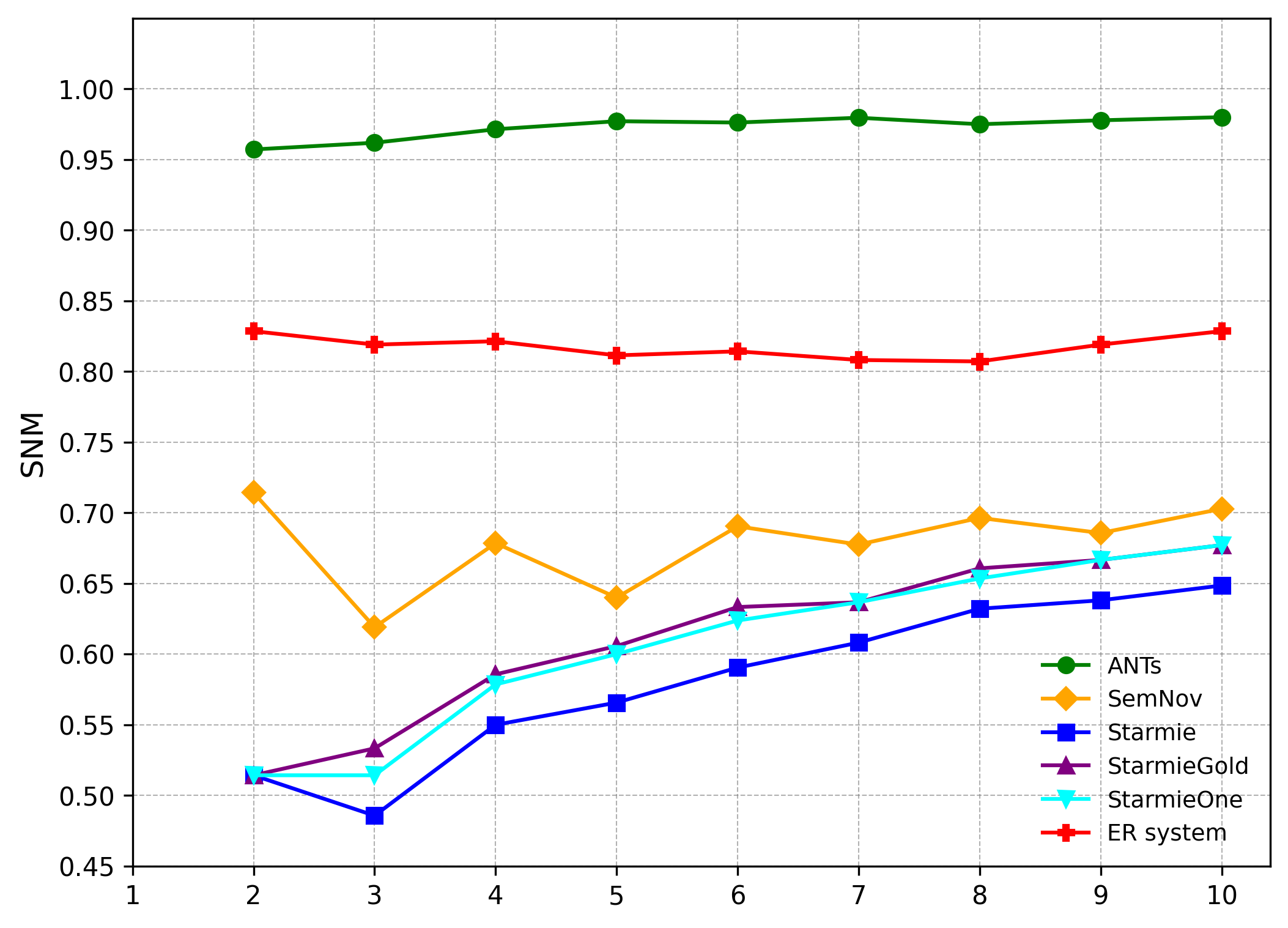}
    \caption{Santos (SNM)}
    \label{fig:snm_santosex}
  \end{subfigure}

  \vspace{0.6em}

  \begin{subfigure}[t]{0.48\linewidth}
    \centering
    \includegraphics[width=\linewidth]{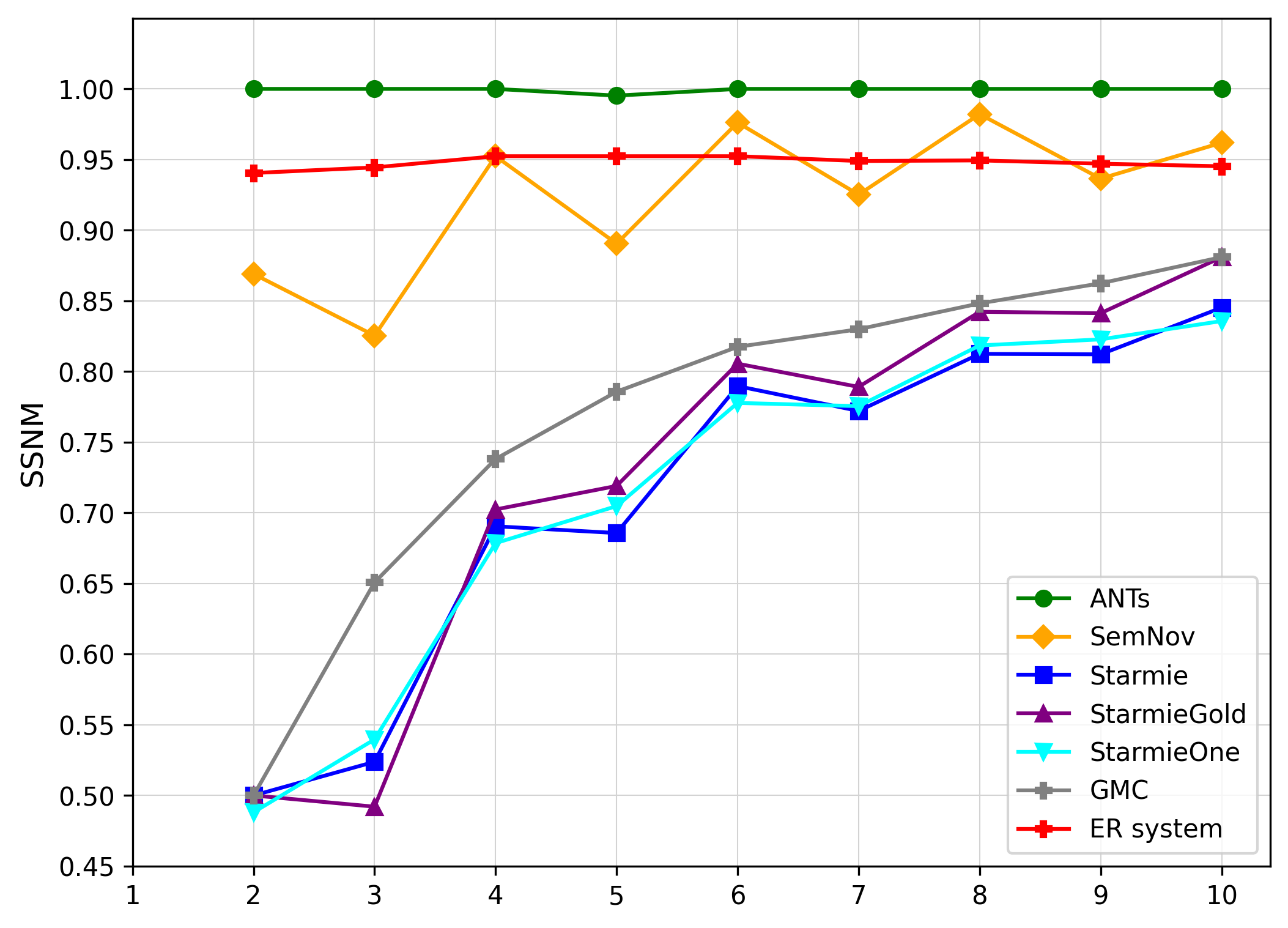}
    \caption{TUS (SSNM)}
    \label{fig:ssnm_tusex}
  \end{subfigure}
  \hfill
  \begin{subfigure}[t]{0.48\linewidth}
    \centering
    \includegraphics[width=\linewidth]{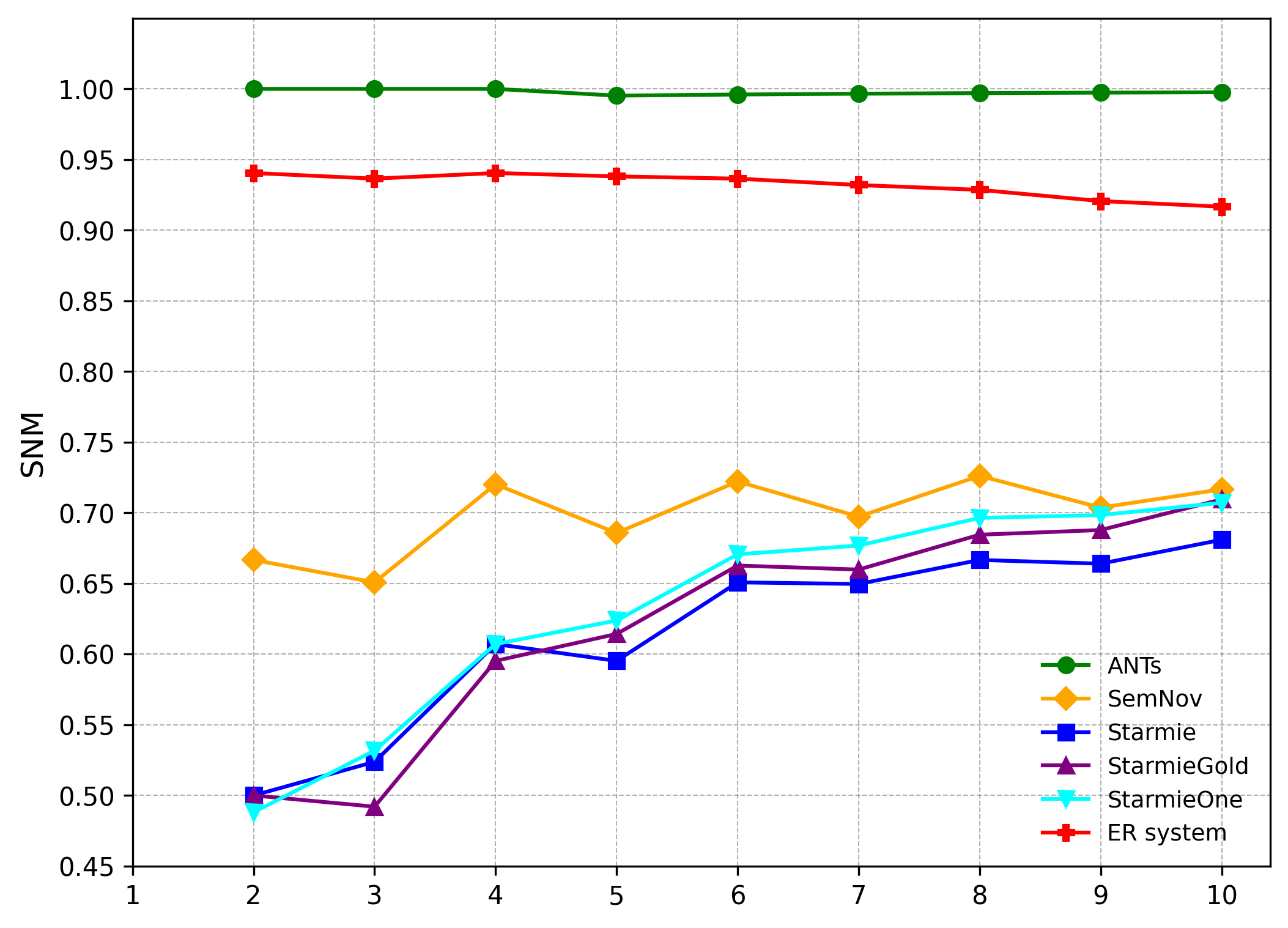}
    \caption{TUS (SNM)}
    \label{fig:snm_tusex}
  \end{subfigure}

  \vspace{0.6em}

  \begin{subfigure}[t]{0.48\linewidth}
    \centering
    \includegraphics[width=\linewidth]{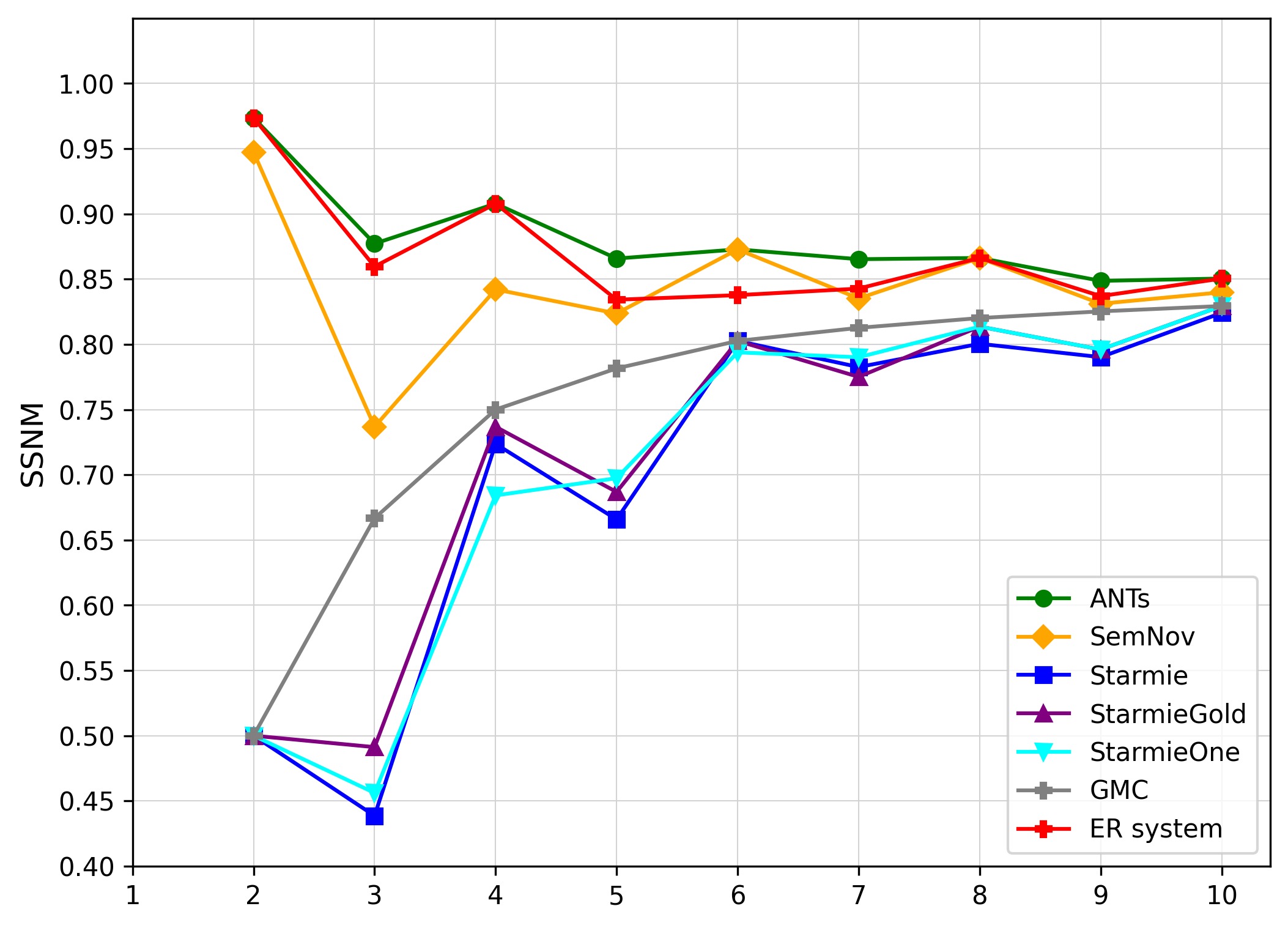}
    \caption{Ugen-v2 (SSNM)}
    \label{fig:ssnm_ugenex}
  \end{subfigure}
  \hfill
  \begin{subfigure}[t]{0.48\linewidth}
    \centering
    \includegraphics[width=\linewidth]{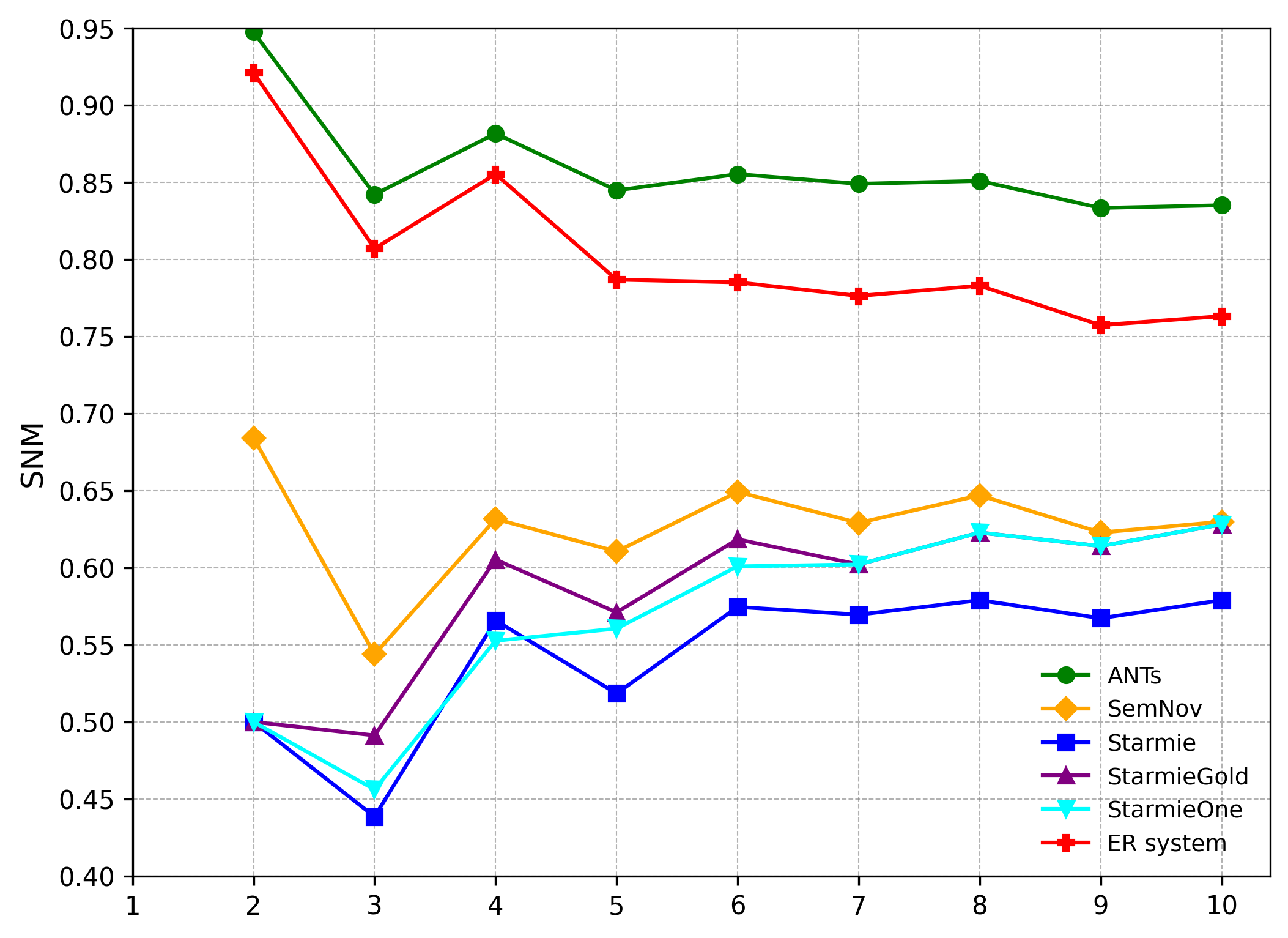}
    \caption{Ugen-v2 (SNM)}
    \label{fig:snm_ugenex}
  \end{subfigure}

  \caption{\small Comparison of SNM (right column) and SSNM (left column) across the Santos, TUS, and Ugen-v2 Small datasets.}
  \label{fig:snm_ssnm_grid_twocol}
\end{figure}
\subsection{Entity Resolution for Novel Table Search}\label{apnx:entityResultion}

\estimatename adopts an attribute-based strategy to reduce the time complexity of the novel table search problem (Definition~\ref{def:objective_main}). Alternatively, one could consider a tuple-based approximation approach. For instance, entity resolution (also known as record linkage or entity matching) techniques could be applied to estimate the degree of overlap, that is, the number of identical entities, between a query table and candidate tables in the data lake, and then rank the tables such that those with less overlap receive a higher score. Next, we study the application of entity resolution techniques in solving the novel table search problem. 
  
Entity resolution focuses on determining whether two tuples refer to the same real-world entity.
 In this line of work, methods are developed   to identify pairs of matching entities from two or more data sources, such as tables~\cite{DBLP:journals/vldb/LiLSDT23, DBLP:journals/corr/abs-2004-00584, DBLP:conf/edbt/BrunnerS20,DBLP:journals/pvldb/Thirumuruganathan21,DBLP:journals/pvldb/KondaDCDABLPZNP16}. Exhaustive entity resolution incurs quadratic time complexity with respect to the number of tuples. To improve scalability, blocking techniques are commonly adopted to cluster similar tuples and limit comparisons to candidate pairs within each block, effectively reducing the computational overhead during the matching phase.
   In performing entity resolution, it is necessary to determine the appropriate blocking and matching approaches to employ. When a query table arrives, it first undergoes a blocking phase, where similar tuples between the query table and each unionable table are identified and grouped into blocks. The subsequent matching step is then restricted to tuple pairs that fall within the same block.
  
\noindent \textbf{Blocking}. Recently, \cite{DBLP:journals/vldb/NeuhofFPNANK24} conducted an extensive experimental study of various blocking techniques across multiple well-known datasets, comparing their performance along several dimensions. We base our choice of blocking method on their findings to select the most appropriate technique for our setting. In selecting blocking techniques, we consider the following key characteristics. 
We favor self-supervised blocking strategies that require neither domain knowledge nor labeled data, as both are impractical in data lake settings due to the absence of metadata and the massive scale of tuples. Consequently, we operate in a \textit{schema-agnostic} setting that does not depend on determining suitable attributes for blocking; instead, each tuple is represented as tokens. Most importantly, the technique must be computationally efficient, particularly in the context of reranking unionable table search results, which is inherently an online process. For instance, we employed the Autoencoder component from the DeepBlocker framework, which has been shown to be effective in prior studies~\cite{DBLP:journals/pvldb/Thirumuruganathan21, DBLP:journals/vldb/NeuhofFPNANK24}. This approach aligns well with our setup, as it does not require labeled data. However, despite its effectiveness in achieving high precision, we found it to be computationally expensive. In our tests on the Santos dataset, it was proved to be extremely slow and was therefore excluded from our experiments. Specifically, we encountered two major issues: the process consistently required more than our available 512 GB of RAM due to its high memory consumption, and it was very slow, taking over two hours to complete the blocking step for a single query, which led us to halt its execution.
The most efficient and effective approach proposed by Neuhof et al. ~\cite{DBLP:journals/vldb/NeuhofFPNANK24} is Random Join, for which no publicly available implementation could be found. 
Furthermore, we focus on the techniques that does not require non-trivial fine tuning and configurations. Considering all these aspects, we restrict our analysis to two blocking techniques for which publicly available implementations exist:
\begin{enumerate}
    \item k-Nearest-Neighbor Join (\blocktwo)~\cite{DBLP:journals/vldb/SilvaALPA13}
\blocktwo was extremely slow; we terminated the KNN (k = 5) run after 16.85 hours, by which time it had processed only four query tables on the Santos dataset (averaging 4.2 hours per table).
    \item \blockone~\cite{DBLP:conf/icml/GuoSLGSCK20}  This technique was the fastest, required minimal parameter tuning, and was less memory-intensive among the available implementations, which is why we selected it.
\end{enumerate}

\iftechreport
\subsection{Normalization} \label{subsec:Normalization}

  We describe the syntactic normalization processes that were applied to attributes at two stages of our work. We apply this normalization process to attribute values prior to  domain size estimation and syntactic similarity computation. Note that attribute normalization was not performed during the fine-tuning or inference stages of the \Starmie model.

In the absence of any information about how the tables were created, we treat all tables and attributes uniformly; therefore, the semantics of the normalization function remain identical across all attributes. Given an arbitrary table $T \in$ \Tschema, for every tuple $t \in  T $ where $t= \langle v_1,\dots ,v_n \rangle$, and for every $v_k \in t$, The following steps are performed in the sequence outlined below:
\begin{compactitem}
\item[i)]\textit{Tokenization.}
$v_k$ is tokenized, i.e.,  splitted using spaces, periods, underscores, and dashes as delimiters. We used NLTK’s off-the-shelf regular-expression tokenizers.\footnote{\url{https://www.nltk.org/api/nltk.tokenize.regexp.html}}
\item [ii)]\textit{Lexical Standardization.}
The tokens are then converted to lowercase and processed using stemming~\cite{DBLP:journals/program/Porter80}.
\item [iii)]\textit{Reconstruction.}
Finally, the normalized tokens are merged back together, preserving their original order and using space character as separators.
\end{compactitem}

For example, applying three steps, the string “IT-Hardware Purchases” is normalized to “it hardwar purchas”. The input to the normalization function is a string, and the output is also a string. When computing the Jensen-Shannon distance, we represent the attribute as a multiset of normalized strings to preserve the frequencies; however, for Jaccard similarity computations, we represent the attribute as a set of normalized strings. 

\begin{table}[!htbp]
\caption{\small Each cell displays the average percentage (\%) of queries in each dataset that yield a blatant duplicate within the top $l$ results, average computed over the range $l \in [2,10]$ and across all queries in each dataset.}
\label{tab:baltant_dup_avg_extended}
\vspace{-1em}
\centering
{
\setlength{\tabcolsep}{.15pt} 
\renewcommand{\arraystretch}{1.25}  
\begin{tabular}{rcccccc>{\columncolor{white}}c} 
 	& \Starmie & \StarmieGold & \StarmieOne & \GMC & \ERNov & \semnov & \estimatename \\
\hline
\textit{Santos}  & 99.4 & 99 & 99 &81.9 & 8.6 &\textbf{0} & \textbf{0} \\
\textit{TUS}     & 100 & 100 & 100 & 99.5 & \textbf{0}&\textbf{0} & \textbf{0} \\
\textit{Ugen-v2} & 100 & 100 & 100 & 98.8 & \textbf{43.6} & {52.6} & {43.9} \\
\hline
\end{tabular}
}
\end{table}

\begin{table}[!htbp]
\caption{\small Each cell presents the average search novelty score across  queries of Ugen\_v2 small.} \label{tb:nscore_extended}

\centering 
{
\setlength{\tabcolsep}{2pt} 
\renewcommand{\arraystretch}{1.25}  
\begin{tabular}{l|cccccc>{\columncolor{lightblue}}c} 
 	& \Starmie  &\StarmieGold & \StarmieOne & \GMC & \ERNov & \semnov & \estimatename \\
 \hline
$l=2$   & 0.0 & 0.0 & 0.0& 0.0040 & 0.3715& \underline{0.3840}  & \textbf{0.3911} \\
\hline

$l=3$  & 0.0074 & 0.0076  & 0.0074 & 0.0066 & 0.2387 & \underline{0.2417}  & \textbf{0.2478} \\
\hline
\end{tabular}

}\end{table}

\subsection{Extended Results}\label{apndx: effectiveResults}
We extended our experiments to include two additional baselines, \StarmieOne and \StarmieGold. In this section, we revisit the previously reported results, now incorporating these two baselines into the comparison.
\StarmieGold is similar to \Starmie, but replaces its built-in alignment with ground-truth alignment. Attributes are represented using \Starmie embeddings and aligned based on ground-truth mappings. The unionability score is then computed as the sum of cosine similarities between aligned attribute pairs, and retrieved tables are ranked accordingly. We also introduce a variant called \StarmieOne, which
computes the unionability score as the average
of cosine similarities instead of the sum as done in \Starmie.  
Our analysis shows that the different variants of \Starmie{} have little to no impact on the evaluated metrics (see Figures~\ref{fig:snm_ssnm_grid_twocol} and~\ref{fig:extendedTime}, and Tables~\ref{tab:baltant_dup_avg_extended} and~\ref{tb:nscore_extended}).

\begin{figure} [!htbp]
    \centering
 \includegraphics[width=0.95\columnwidth]{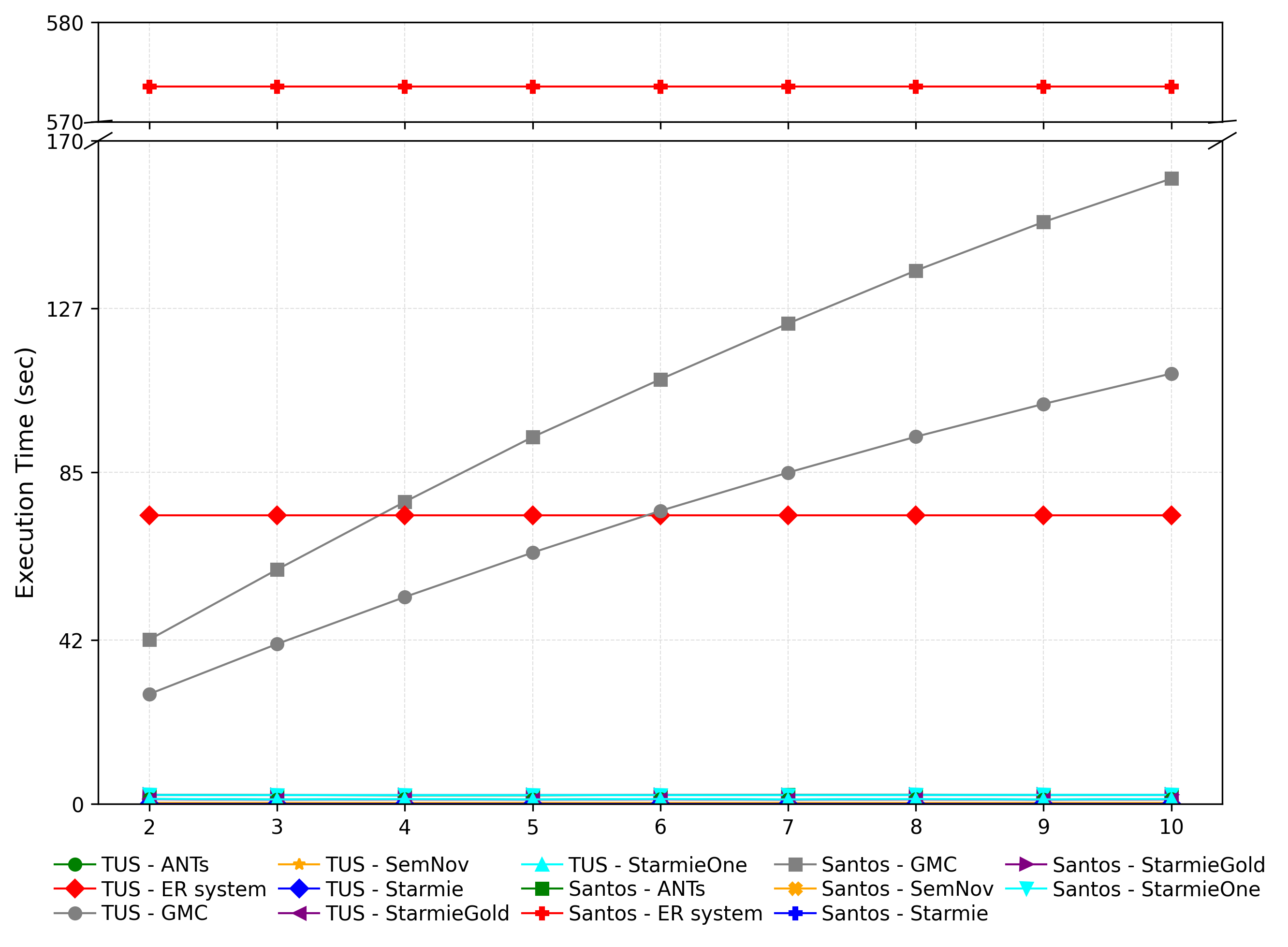}
    \caption{\small  The graph is the complete version of Figure~\ref{fig:exetime1}.}\label{fig:extendedTime}
\end{figure}
\subsection{Complete Results}\label{apndx: completeResults}
We  report the results for $\mathcal{F}$ values for all $l \in [2,10]$ across all datasets and systems in Table~\ref{tab:extendedofvalue}.
\begin{table}[!htbp]
\caption{\small Average percentage of queries for which \estimatename achieves equal or higher $\mathcal{F}$ values evaluated over $l$ values from 2 to 10. Cases where the percentage exceeds 50\% are highlighted. For readability, we omit the percentage sign and write each value as $X$ instead of $X\%$ in the table.}  \label{tab:extendedofvalue}
\vspace{-1em}
\centering 
{
\setlength{\tabcolsep}{3pt} 
\begin{tabular}{l|cccccccccc} 
 & 2	& 3	&4&5&6&7&8&9&10 \\
 \hline
$\textit{\textbf{Santos}}$\\
\Starmie &	\cellcolor{lightblue}68.6  & \cellcolor{lightblue}54.3  &\cellcolor{lightblue}54.3&48.6&\cellcolor{lightblue}54.3&\cellcolor{lightblue}51.4&48.6&45.7&48.6 \\
\StarmieGold&	\cellcolor{lightblue}68.6  & 48.6  &\cellcolor{lightblue}51.2&42.9&42.9&42.9&40&40&40 \\
\StarmieOne &	\cellcolor{lightblue}68.6  & 48.6  &\cellcolor{lightblue}54.3&48.6&\cellcolor{lightblue}57.1&\cellcolor{lightblue}51.4&48.6&48.6&\cellcolor{lightblue}51.4 \\
\GMC  &5.7 & 2.9 & 0&0&0&2.9&5.7&0&2.9\\
\ERNov  &\cellcolor{lightblue}65.7 & \cellcolor{lightblue}62.9& \cellcolor{lightblue}62.9&\cellcolor{lightblue}65.7&\cellcolor{lightblue}68.6&\cellcolor{lightblue}62.9&\cellcolor{lightblue}65.7&\cellcolor{lightblue}60&\cellcolor{lightblue}54.3\\
\semnov &\cellcolor{lightblue}94.3  & \cellcolor{lightblue}91.4  &\cellcolor{lightblue}91.4 &\cellcolor{lightblue}94.3&\cellcolor{lightblue}91.4&\cellcolor{lightblue}85.7&\cellcolor{lightblue}88.6&\cellcolor{lightblue}85.7&\cellcolor{lightblue}82.9 \\
\hline
$\textit{\textbf{TUS}}$\\
\Starmie &	\cellcolor{lightblue}69  & \cellcolor{lightblue}69  &\cellcolor{lightblue}66.7&\cellcolor{lightblue}66.7&\cellcolor{lightblue}66.7&\cellcolor{lightblue}66.7&\cellcolor{lightblue}66.7&\cellcolor{lightblue}66.7&\cellcolor{lightblue}66.7 \\
\StarmieGold &	\cellcolor{lightblue}69  & \cellcolor{lightblue}69   &\cellcolor{lightblue}69&\cellcolor{lightblue}61.9&\cellcolor{lightblue}59.5&\cellcolor{lightblue}57.1&42.9&35.7&38.1 \\
\StarmieOne &\cellcolor{lightblue}69   & \cellcolor{lightblue}69   &\cellcolor{lightblue}66.7&\cellcolor{lightblue}66.7&\cellcolor{lightblue}66.7&\cellcolor{lightblue}66.7 &\cellcolor{lightblue}69&\cellcolor{lightblue}71.4&\cellcolor{lightblue}69 \\
\GMC  &	14.3 & 2.4 & 2.4&0&2.4&4.8&4.8&2.4&2.4\\
\ERNov  &\cellcolor{lightblue}71.4 & \cellcolor{lightblue}64.3& \cellcolor{lightblue}61.9&\cellcolor{lightblue}66.7&\cellcolor{lightblue}81&\cellcolor{lightblue}78.6&\cellcolor{lightblue}88.1&\cellcolor{lightblue}85.7&\cellcolor{lightblue}73.8\\
\semnov &	\cellcolor{lightblue}83.3  & \cellcolor{lightblue}88.1  &\cellcolor{lightblue}88.1  &\cellcolor{lightblue}78.6&\cellcolor{lightblue}73.8&\cellcolor{lightblue}71.4&\cellcolor{lightblue}73.8&\cellcolor{lightblue}66.7&\cellcolor{lightblue}66.7\\
\hline
$\textit{\textbf{Ugen-v2 small}}$\\
\Starmie &5.3	 & 0 & 36.8&36.8&\cellcolor{lightblue}57.9&\cellcolor{lightblue}57.9&\cellcolor{lightblue}63.2&\cellcolor{lightblue}63.2&\cellcolor{lightblue}73.7\\
\StarmieGold &	\cellcolor{lightblue}52.3 & 0 & 36.8&31.6&\cellcolor{lightblue}57.9&\cellcolor{lightblue}57.9&\cellcolor{lightblue}63.2&\cellcolor{lightblue}63.2&\cellcolor{lightblue}68.4 \\
\StarmieOne &	5.3 & 0 & 26.3 &36.8&\cellcolor{lightblue}57.9&\cellcolor{lightblue}57.9&\cellcolor{lightblue}63.2&\cellcolor{lightblue}63.2&\cellcolor{lightblue}73.7\\
\GMC  &	0 & 0 & 31.6&36.8&\cellcolor{lightblue}57.9&\cellcolor{lightblue}57.9&\cellcolor{lightblue}63.2&\cellcolor{lightblue}63.2&\cellcolor{lightblue}73.7\\
\ERNov  &\cellcolor{lightblue}78.9 & \cellcolor{lightblue}89.5& \cellcolor{lightblue}84.2&\cellcolor{lightblue}84.2&\cellcolor{lightblue}84.2&\cellcolor{lightblue}89.5&\cellcolor{lightblue}84.2&\cellcolor{lightblue}89.5&\cellcolor{lightblue}84.2\\
\semnov & \cellcolor{lightblue}78.9  & \cellcolor{lightblue}78.9  &\cellcolor{lightblue}84.2 &\cellcolor{lightblue}78.9&\cellcolor{lightblue}94.7&\cellcolor{lightblue}84.2&\cellcolor{lightblue}84.2&\cellcolor{lightblue}84.2&\cellcolor{lightblue}89.5 \\
\end{tabular}
}
\end{table}



\subsection{\estimatename Versus \DUST}\label{sec:ANTsVersuDust}
We compare \estimatename with \DUST, a recent technique for unionable tuple search proposed by Khatiwada et al.\cite{KhatiwadaSM26}. We evaluate how novel the results returned by \DUST are by computing the search novelty score (Definition~\ref{df:nscore_resultset}), compare these scores to those of \estimatename. As reported in Section~\ref{sec:experiments}, computing the novelty score is prohibitively time-consuming, so we focus this analysis on the Ugen-v2 dataset. For both systems, we set the number of unionable tables in the initial search result to $k \!=\! 20$; For each query, we first obtain from \estimatename the most unionable table and denote its number of tuples by $m$. We then ask \DUST to retrieve the $m$ most unionable tuples for the same query. We compute and compare the novelty of the result tables using the \textit{nscore} metric. On average over all queries, the most novel tables returned by \estimatename contain $m \!=\! 23$ tuples. Table~\ref{tb:dust-ants_main} reports, for each method, the average novelty score, the average total number of retrieved tables and tuples required to attain that score, the resulting number of novel tuples, and the corresponding running time. While \DUST achieves a higher novelty score than \estimatename ($7.6$\% gain), it does so only after retrieving, on average, five times more tables and eighteen times more tuples than \estimatename, and incurring a substantially higher execution time. We therefore conclude that, in data-market settings where an acquisition cost is attached to each tuple, or in latency-sensitive scenarios, \estimatename is a much better choice.


\subsection{\estimatename for Machine Learning Task}\label{sec:ANTsMLTask}

The goal of data discovery is to prepare data for downstream tasks. In this section, we demonstrate how using our novelty-aware data discovery method, \estimatename, affects the performance of a machine learning model. We arbitrarily select a representative downstream task, i.e., rating prediction, and a freely available dataset, the IMDb Extensive Dataset.\footnote{\url{https://www.kaggle.com/datasets/simhyunsu/imdbextensivedataset}} We then train models under three configurations: no data discovery, relying only on the query table (\Baseline); augmenting the query table with tables returned by \Starmie and augmenting it with tables  returned by \Starmie and reranked by  \estimatename.\footnote{We first retrieve the top-$30$ tables using \Starmie and then apply \estimatename to rerank the result.} Next, we explain the experimental setup in detail.

\subsubsection{Data Preparation}
We use  movie metadata (title, year, genre, director, actors, duration, budget, etc.) where the average user rating for each movie is given by the \texttt{avg\_vote} attribute. The ML task is to train a regression model to predict \texttt{avg\_vote} from the metadata features. The IMDb movie table is randomly split into training (80\%) and test (20\%) datasets. From the training set, we randomly select 1\% of the tuples and partition them into three distinct query tables, each retaining all predictive attributes (features). We then construct a synthetic data lake from the remaining training tuples, following the procedure proposed by Nargesian et al.~\cite{DBLP:journals/pvldb/NargesianZPM18}. This process yields three query tables and $50$ unionable tables in the data lake. To ensure that the training dataset contains sufficient redundant information, we run our experiments using both the original data lake and a diluted variant with dilution degree $0.4$, where the query table is included in the data lake,  yielding a total of $101$ unionable tables. We rely on standard evaluation metrics for regression models namely, $R^2$ Scores (higher is better) and RMSE Scores (lower is better).\footnote{
We created and  evaluated all models using the standard  APIs provided by scikit-learn \url{https://scikit-learn.org}.}
\subsubsection{Model Selection} First, we select three well-known regression models: Light Gradient Boosting Machine (LGBM), Random Forest (RF), and XGBoost (XGB). For each query table, we train these models on the union of randomly selected tables from the non-diluted data lake (including the query table itself) and report performance averaged over the three queries. As shown in Table~\ref{tab:random_nondiluted_r2_rmse}, LGBM consistently outperforms the other models across all values of $k$ for both $R^2$ and RMSE. We hypothesize that this performance gain is due to the way LGBM internally handles \nullval, which is particularly advantageous in our setting, where the construction of the result table (Equation~\eqref{eq:resTable}) introduces  \nullval values. Therefore, we select LGBM as our regression model for comparing the performance of different table discovery techniques.
\begin{table}[!t]
\caption{R$^2$ and RMSE of all three models trained on the {Random} data selection from non-diluted data lake. The best value of each metric across models for each $k$ is shown in \textbf{boldface}. }
\label{tab:random_nondiluted_r2_rmse}
\centering

\setlength{\tabcolsep}{2pt}
\renewcommand{\arraystretch}{1.3}
\begin{tabular}{c
                ccc|
                ccc}
\hline
 & \multicolumn{3}{c}{R$^2$} & \multicolumn{3}{c}{RMSE} \\
 \hline
$k$ & LGBM & XGB & RF & LGBM & XGB & RF \\
\hline
2  & \textbf{0.4447} & 0.2714 & 0.2360 & \textbf{0.9194} & 1.0506 & 1.0752 \\
3  & \textbf{0.4671}& 0.2790 & 0.2524 & \textbf{0.9007} & 1.0454 & 1.0638 \\
4  & \textbf{0.4577} & 0.2694 & 0.2164 & \textbf{0.9086} & 1.0526 & 1.0878 \\
5  & \textbf{0.4554}& 0.2702 & 0.2077 & \textbf{0.9105} & 1.0516 & 1.0933 \\
6  & \textbf{0.4565} & 0.2697 & 0.2075 & \textbf{0.9095} & 1.0517 & 1.0930 \\
7  & \textbf{0.4366} & 0.2155 & 0.1567 & \textbf{0.9262} & 1.0923 & 1.1307 \\
8  & \textbf{0.4322} & 0.2553 & 0.2338 & \textbf{0.9299} & 1.0649 & 1.0802 \\
9  & \textbf{0.4335} & 0.2530 & 0.2358 & \textbf{0.9288} & 1.0666 & 1.0788 \\
10 & \textbf{0.4352} & 0.2559 & 0.2377 & \textbf{0.9274} & 1.0645 & 1.0774 \\
\hline
\end{tabular}
\end{table}

\begin{table*}[!t]
\caption{RMSE (lower is better) and $R^2$ (higher is better) by $k$ on the diluted dataset.  The values are averaged over the results for three query tables.
Gain is the relative change of \estimatename vs.\ Starmie. For each metric, the best value in a row is \textbf{boldfaced}. }
\label{tab:rmse_r2_merged_diluted}
\centering
\small
\setlength{\tabcolsep}{4pt}
\renewcommand{\arraystretch}{1.1}
\begin{tabular}{c|cccc|cccc}
\hline
\noalign{\vskip 0.25ex} 
 & \multicolumn{4}{c|}{RMSE} & \multicolumn{4}{c}{$R^2$} \\
$k$
 & \Baseline & \Starmie & \estimatename & Gain
 & \Baseline & \Starmie & \estimatename & Gain \\
\hline
\noalign{\vskip 0.1ex} 
 2 
 & 0.8975 & 0.8544 & \textbf{0.8373} & $-2.00\%$
 & 0.4711 & 0.5206 & \textbf{0.5396} & $+3.65\%$ \\
 3 
 & -- & 0.8553 & \textbf{0.8324} & $-2.68\%$
 & -- & 0.5196 & \textbf{0.5450} & $+4.89\%$ \\
 4 
 & -- & 0.8471 & \textbf{0.8364} & $-1.26\%$
 & -- & 0.5288 & \textbf{0.5406} & $+2.24\%$ \\
 5 
 & -- & 0.8507 & \textbf{0.8380} & $-1.49\%$
 & -- & 0.5248 & \textbf{0.5389} & $+2.69\%$ \\
 6 
 & -- & 0.8512 & \textbf{0.8383} & $-1.52\%$
 & -- & 0.5242 & \textbf{0.5385} & $+2.72\%$ \\
 7 
 & -- & 0.8524 & \textbf{0.8378} & $-1.71\%$
 & -- & 0.5229 & \textbf{0.5391} & $+3.10\%$ \\
 8 
 & -- & 0.8504 & \textbf{0.8381} & $-1.45\%$
 & -- & 0.5252 & \textbf{0.5388} & $+2.59\%$ \\
 9 
 & -- & 0.8514 & \textbf{0.8384} & $-1.53\%$
 & -- & 0.5240 & \textbf{0.5384} & $+2.74\%$ \\
 10 
 & -- & 0.8532 & \textbf{0.8405} & $-1.49\%$
 & -- & 0.5219 & \textbf{0.5361} & $+2.71\%$ \\
\hline
\end{tabular}
\end{table*}
\subsubsection{Results}
Tables~\ref{tab:rmse_r2_merged_diluted} and~\ref{tab:rmse_r2_merged} summarize the performance of LGBM on the diluted and non-diluted training sets. On the diluted dataset, where redundancy is guaranteed, \estimatename consistently outperforms both \Starmie and \Baseline for all values of $k$ and for both $R^2$ and RMSE. We assess significance using paired t-tests ($\textit{p-value}\!\leq\! 0.05$); improvements over \Baseline are significant;  all improvements over \Starmie are statistically significant except for $k{=}5$. On the non-diluted dataset, for both $\textit{R}^2$ and RMSE, \estimatename significantly outperforms \Baseline, but no significant difference is observed between \estimatename and \Starmie.

Overall, we conclude that in the presence of redundancy, reranking the results of \Starmie with \estimatename is clearly preferable. Even in the non-diluted setting, the reranking step is not detrimental; thus, the novelty-aware reranker can always be safely applied, even when the redundancy level of the dataset is unknown.

\begin{table*}[!t]
\caption{RMSE (lower is better) and $R^2$ (higher is better) by $k$ on the non-diluted dataset. The values are averaged over the results for three query tables. 
Gain is the relative change of \estimatename vs.\ Starmie. 
 For each metric, the best value in a row is \textbf{boldfaced}.}
\label{tab:rmse_r2_merged}
\centering
\small
\setlength{\tabcolsep}{4pt}
\renewcommand{\arraystretch}{1.1}
\begin{tabular}{c|cccc|cccc}
\hline
\noalign{\vskip 0.25ex} 
 & \multicolumn{4}{c|}{RMSE} & \multicolumn{4}{c}{$R^2$} \\
$k$
 & \Baseline & \Starmie & \estimatename & Gain
 & \Baseline & \Starmie & \estimatename & Gain \\
\hline
 2 
 & 0.8975 & 0.8378 & \textbf{0.8373} & $-0.06\%$
 & 0.4711 & 0.5391 & \textbf{0.5396} & $+0.09\%$ \\
 3 
 & -- & 0.8359 & \textbf{0.8324} & $-0.42\%$
 & -- & 0.5412 & \textbf{0.5450} & $+0.71\%$ \\
 4 
 & -- & \textbf{0.8359} & 0.8364 & $+0.06\%$
 & -- & \textbf{0.5411} & 0.5406 & $-0.10\%$ \\
 5 
 & -- & 0.8380 & \textbf{0.8380} & $0.00\%$
 & -- & \textbf{0.5389} & 0.5389 & $+0.01\%$ \\
 6 
 & -- & \textbf{0.8375} & 0.8383 & $+0.10\%$
 & -- & \textbf{0.5394} & 0.5385 & $-0.17\%$ \\
 7 
 & -- & \textbf{0.8378} & 0.8378 & $0.00\%$
 & -- & \textbf{0.5391} & 0.5391 & $-0.003\%$ \\
 8 
 & -- & 0.8388 & \textbf{0.8381} & $-0.08\%$
 & -- & 0.5380 & \textbf{0.5388} & $+0.14\%$ \\
 9 
 & -- & 0.8392 & \textbf{0.8394} & $+0.02\%$
 & -- & \textbf{0.5375} & 0.5373 & $-0.04\%$ \\
 10 
 & -- & 0.8396 & \textbf{0.8418} & $+0.26\%$
 & -- & \textbf{0.5371} & 0.5347 & $-0.44\%$ \\
\hline
\end{tabular}
\end{table*}

\subsection{Ablation Study: Impact of Alignment Quality}\label{subsec:alignmnetAbl}
In this study, we examine how alignment correctness affects the effectiveness of our methods by replacing the ground-truth alignment with an automatically generated one. 
Leveraging the codebase from a recent related work~\cite{KhatiwadaSM26}, we employ a clustering‐based approach to align attributes between a query table and data lake tables.\footnote{\url{https://github.com/northeastern-datalab/dust}} Similar to this work~\cite{KhatiwadaSM26}, our objective is to form clusters that each contain exactly one query column paired with its corresponding aligned attributes from the data lake tables. Based on domain knowledge, we enforce the constraint that no two columns from the same table may be clustered together (i.e., aligned to a single query attribute). For attribute  representations, we utilize the column‐level RoBERTa model, the best-performing model in~\cite{KhatiwadaSM26} and apply the Silhouette coefficient~\cite{rousseeuw1987silhouettes} to determine the optimal number of clusters. With a slight modification to their code, we adopt K-means clustering augmented with a \textit{cannot-link} constraint~\cite{DBLP:conf/icml/WagstaffCRS01} using Cosine distance, which yields improved results. The  K-means clustering algorithm’s codebase is freely available for download~\cite{behrouz_babaki_2017_831850}. We use this alignment method for \GMC, \semnov, \ERNov, and \estimatename.   \Starmie, however, includes its own built-in alignment procedure. We have alignment ground truth for all datasets and are therefore able to report the alignment algorithm’s precision, recall, and F-measure, as shown in Table~\ref{tb:dustalignemnt}. As anticipated, replacing manual alignment with an automatic alignment algorithm consistently harms performance. In terms of SSNM (Figure~\ref{fig:ssnm_vs_snm_all_dustAlign}), it reduces the performance of \estimatename, \semnov, and \GMC in 11 out of 12 (system, dataset) configurations (92\%). Similarly, it degrades the SNM of \estimatename and \semnov in 89\% of the configurations (Figure~\ref{fig:snm_dustalignment_filtered_}). However, even under these conditions, \estimatename remains the best-performing system in 48\% and 89\% of (dataset, $l$) settings for SSNM and SNM, respectively, when restricted to Ugen-v2 queries with more than 12 unionable tables.
Interestingly, \ERNov exhibits improved performance on the Santos dataset, for both SSNM and SNM.

\begin{table}[!htbp]
\caption{\small Effectiveness of the Alignment Algorithm.} \label{tb:dustalignemnt}

\centering \small
{
\setlength{\tabcolsep}{4.5pt} 
\renewcommand{\arraystretch}{1.1}  
\begin{tabular}{l|ccc} 
 	& Santos  &  TUS &Ugen-v2  \\
 \hline
Precision  & 0.80 &  0.88  & 0.96 \\

 Recall  & 0.81 &0.62 & 0.95  \\
  F-measure & 0.78 &0.70& 0.95  \\
\hline
\end{tabular}}
\end{table}
\begin{table}[!htbp]
\caption{\small  The average percentage of queries per dataset yielding a blatant duplicate within the top $l$ results ($l\!\in\![2,10]$), using automatic alignment. Lower values indicate better performance.  In all tables, \textbf{Bold} numbers show the best performance per row; \underline{underlin}\underline{ed} numbers indicate  the second best.}
\label{tab:baltant_dup_avg_auto_align}
\centering

{\setlength{\tabcolsep}{7pt} 
\renewcommand{\arraystretch}{1.1} 
\begin{tabular}{@{}r|ccccc} 
  & \Starmie  & \GMC & \ERNov & \semnov & \estimatename \\
\hline
\textit{Santos}          & 99.4\%  & 89.2\% & \underline{3.5\%} & {4.1\%} & {3.5\%} \\
\textit{TUS}             & 100\%   & 100\% & \textbf{2.4\%} & {\textbf{3.2\%}} & \textbf{5.3\%} \\
\textit{Ugen-v2}   & 100\%   & 69\% & \textbf{84\%} & {85\%} & \underline{84.8\%} \\
\hline
\textit{Ugen12}  & 100\%  & 22.2\% & \textbf{61\%} & \textbf{61\%} &\textbf{61\%}  \\
\textit{Ugen $l\!=\!2$}   &100\%  &100\%  &\textbf{47.4\%}  & \textbf{52.6\%} & \textbf{47.4\%}\\
\hline
\end{tabular}
}
\end{table}

\begin{figure}[!htbp]
\centering
\captionsetup[subfigure]{aboveskip=0pt, belowskip=-1pt}

\begin{subfigure}[t]{0.485\columnwidth}
  \includegraphics[width=\linewidth]{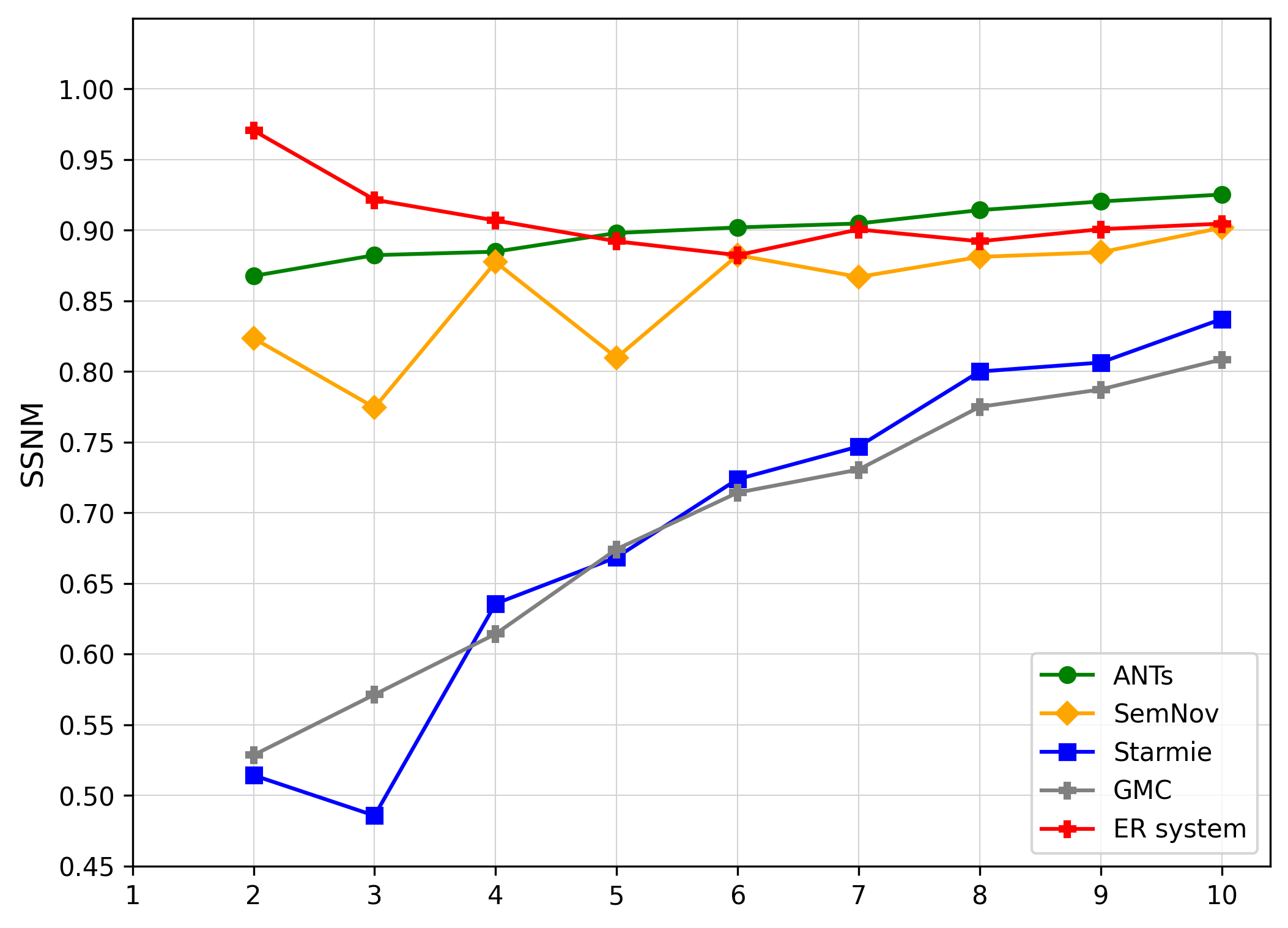}
  \caption{\small Santos}
  \label{fig:ssnm_santos_dustAlign}
\end{subfigure}
\hfill
\begin{subfigure}[t]{0.485\columnwidth}
  \includegraphics[width=\linewidth]{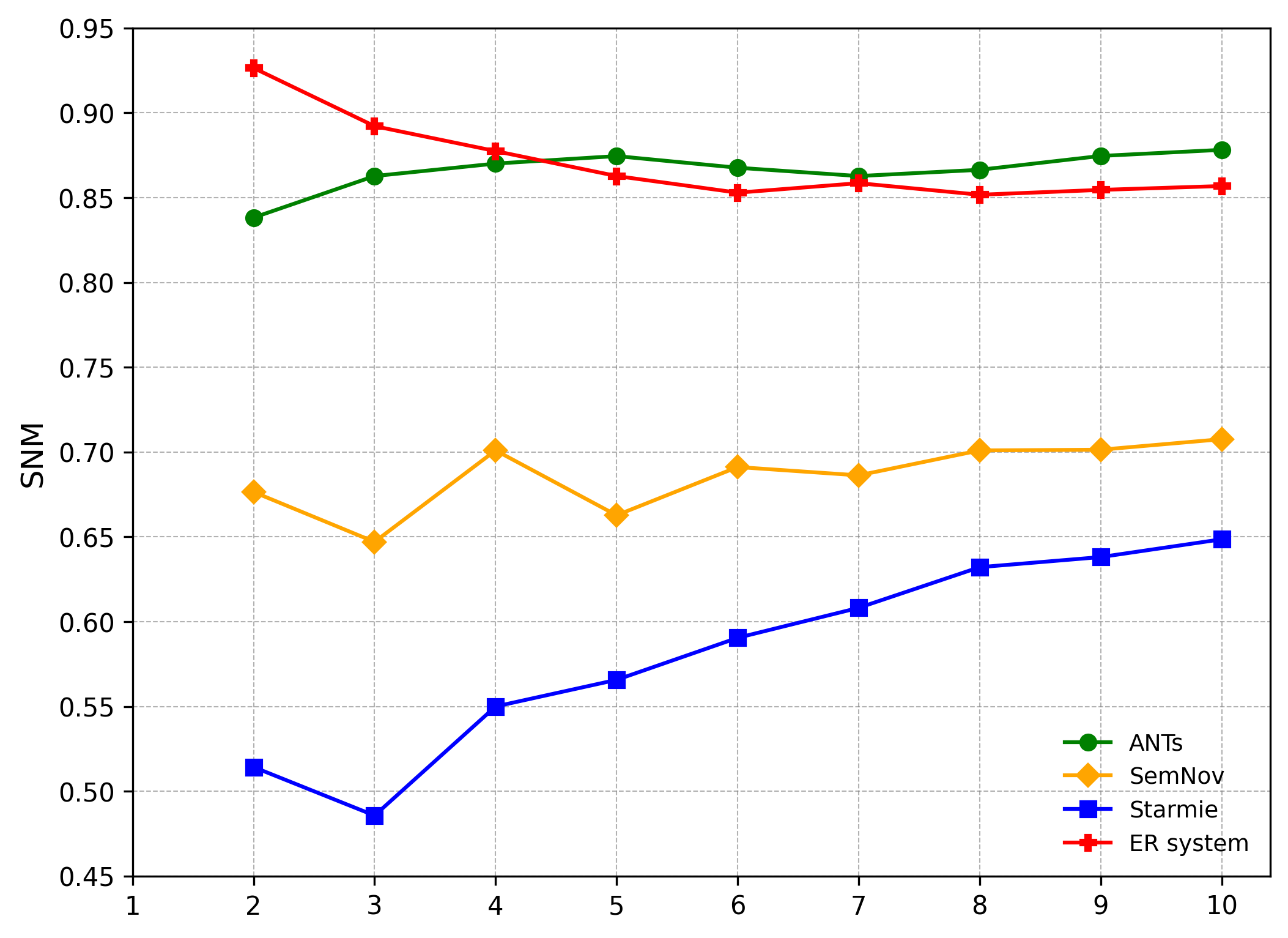}
  \caption{\small Santos}
  \label{fig:snm_santos_dustAlign}
\end{subfigure}

\begin{subfigure}[t]{0.485\columnwidth}
  \includegraphics[width=\linewidth]{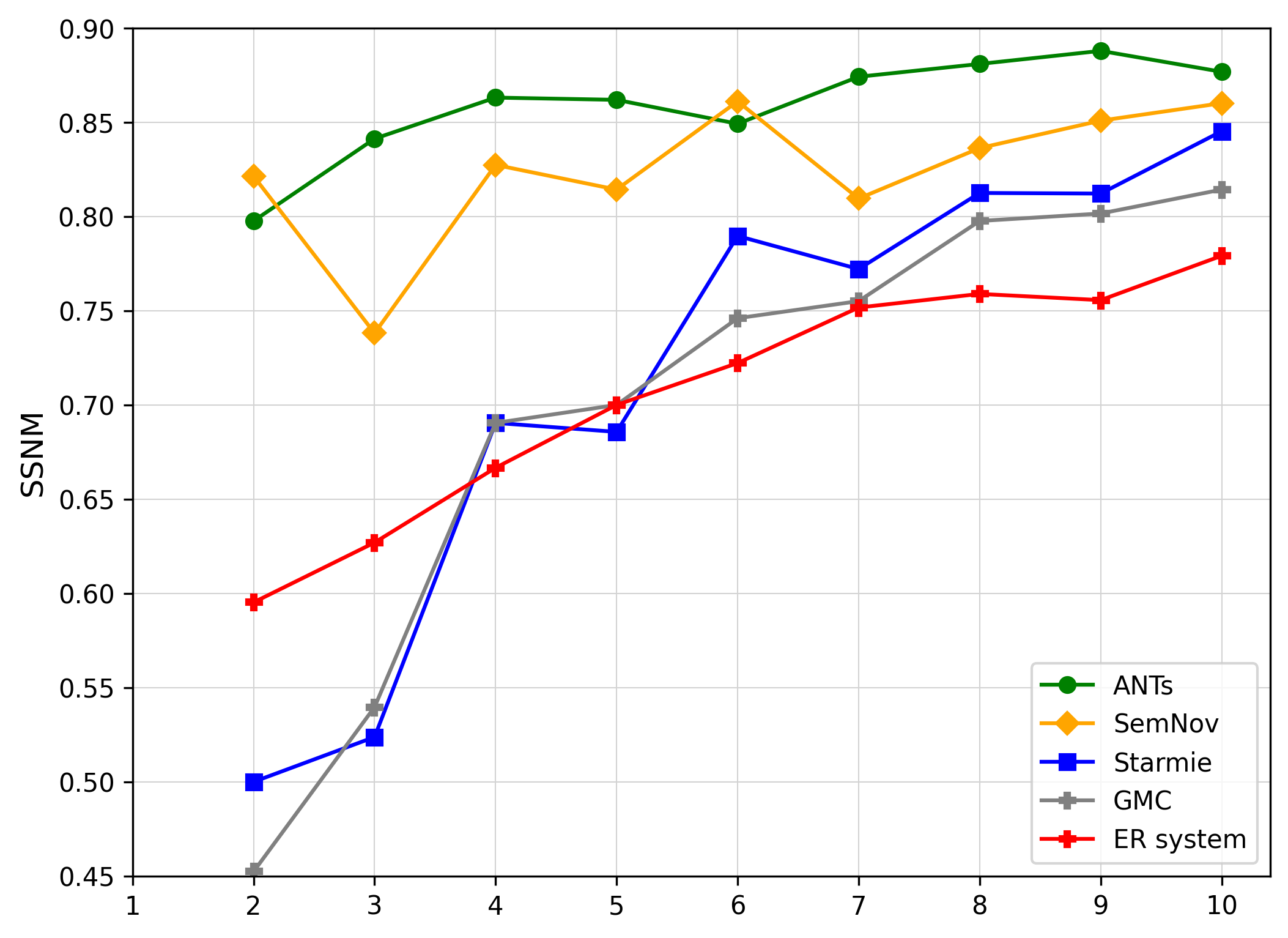}
  \caption{\small TUS}
  \label{fig:ssnm_tus_dustAlign}
\end{subfigure}
\hfill
\begin{subfigure}[t]{0.485\columnwidth}
  \includegraphics[width=\linewidth]{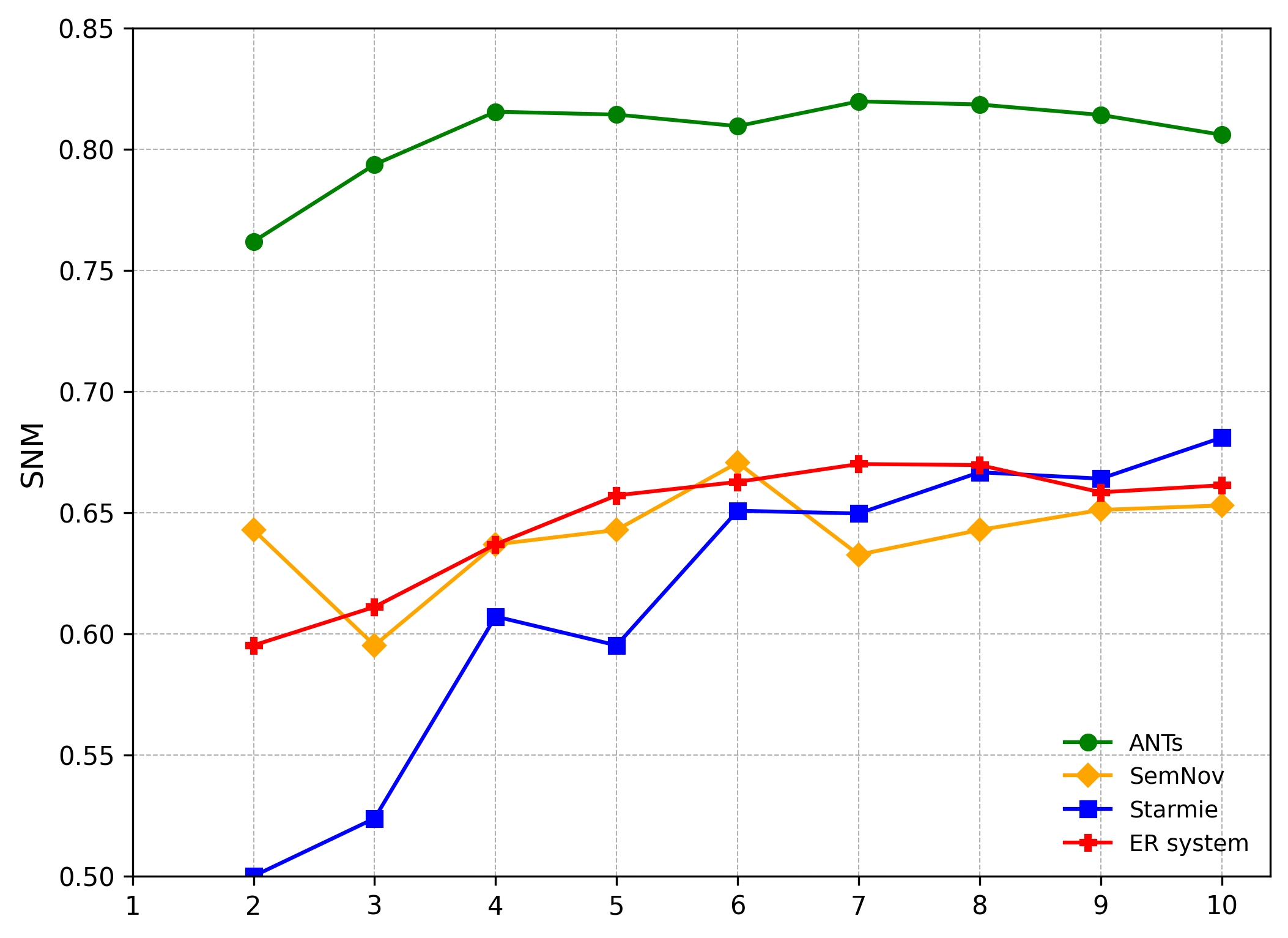}
  \caption{\small TUS}
  \label{fig:snm_tus_dustAlign}
\end{subfigure}

\begin{subfigure}[t]{0.485\columnwidth}
  \includegraphics[width=\linewidth]{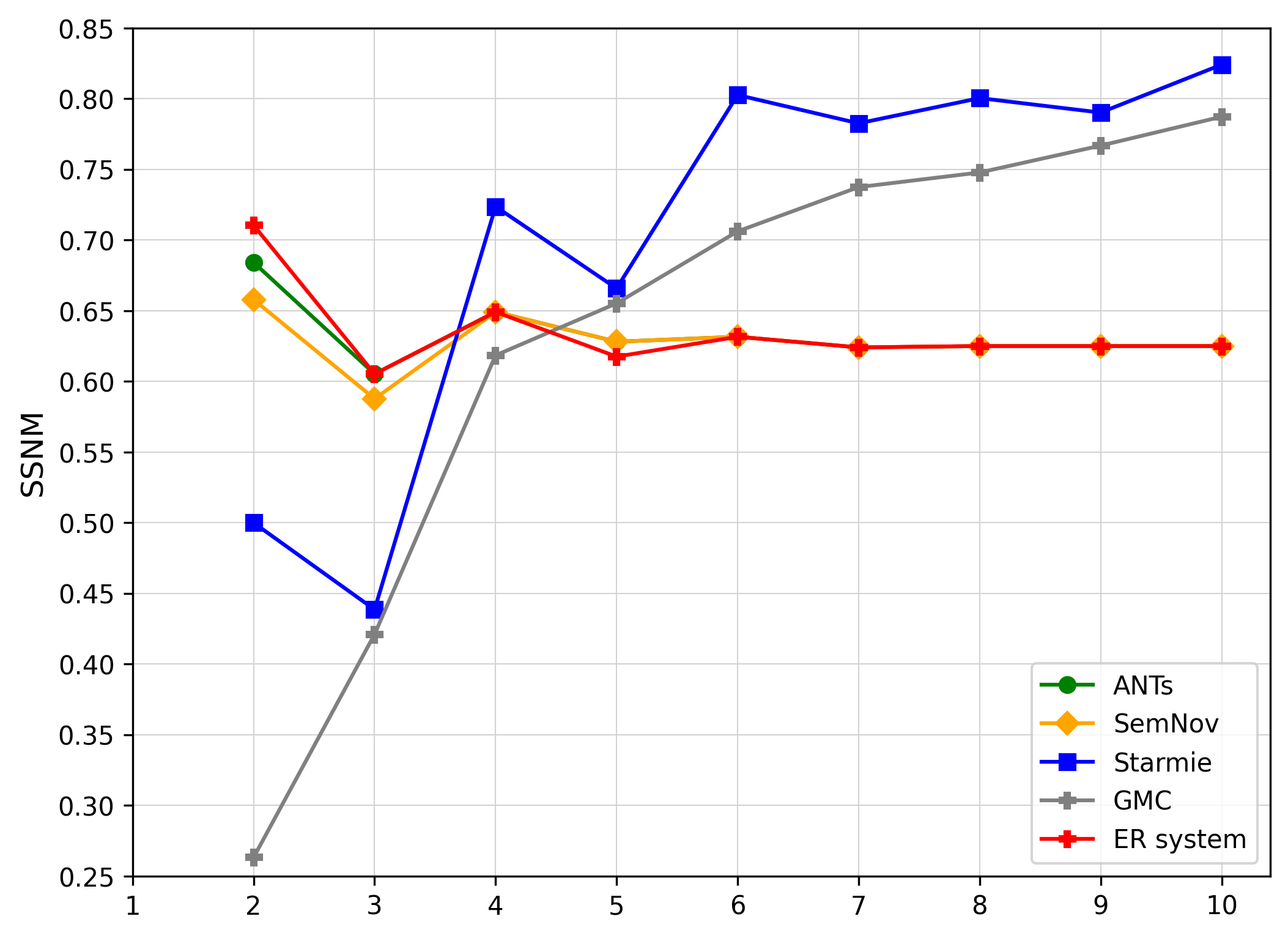} 
  \caption{\small Ugen-v2}
  \label{fig:ssnm_ugen_dustAlign}
\end{subfigure}
\hfill
\begin{subfigure}[t]{0.485\columnwidth}
  \includegraphics[width=\linewidth]{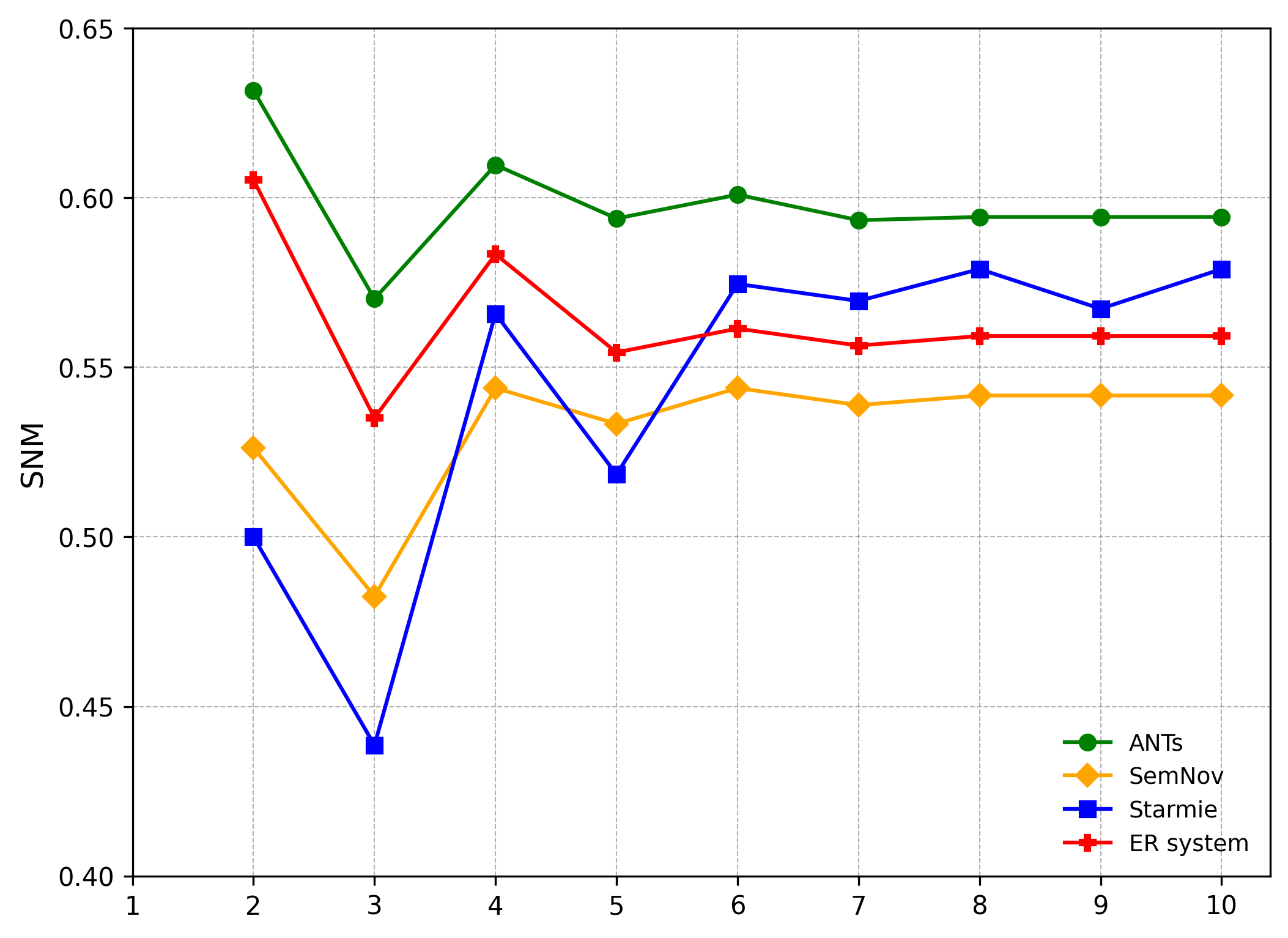} 
  \caption{\small Ugen-v2}
  \label{fig:snm_ugen_dustAlign}
\end{subfigure}

\caption{\small System performance (using the alignment algorithm) across datasets for SSNM (left) and SNM (right), for $l \in [2,10]$. Y-axis truncated.}
\label{fig:ssnm_vs_snm_all_dustAlign}
\end{figure}

\begin{figure}[!t]
  \centering

    \begin{subfigure}[t]{0.48\columnwidth}
    \centering
    \includegraphics[width=\linewidth]{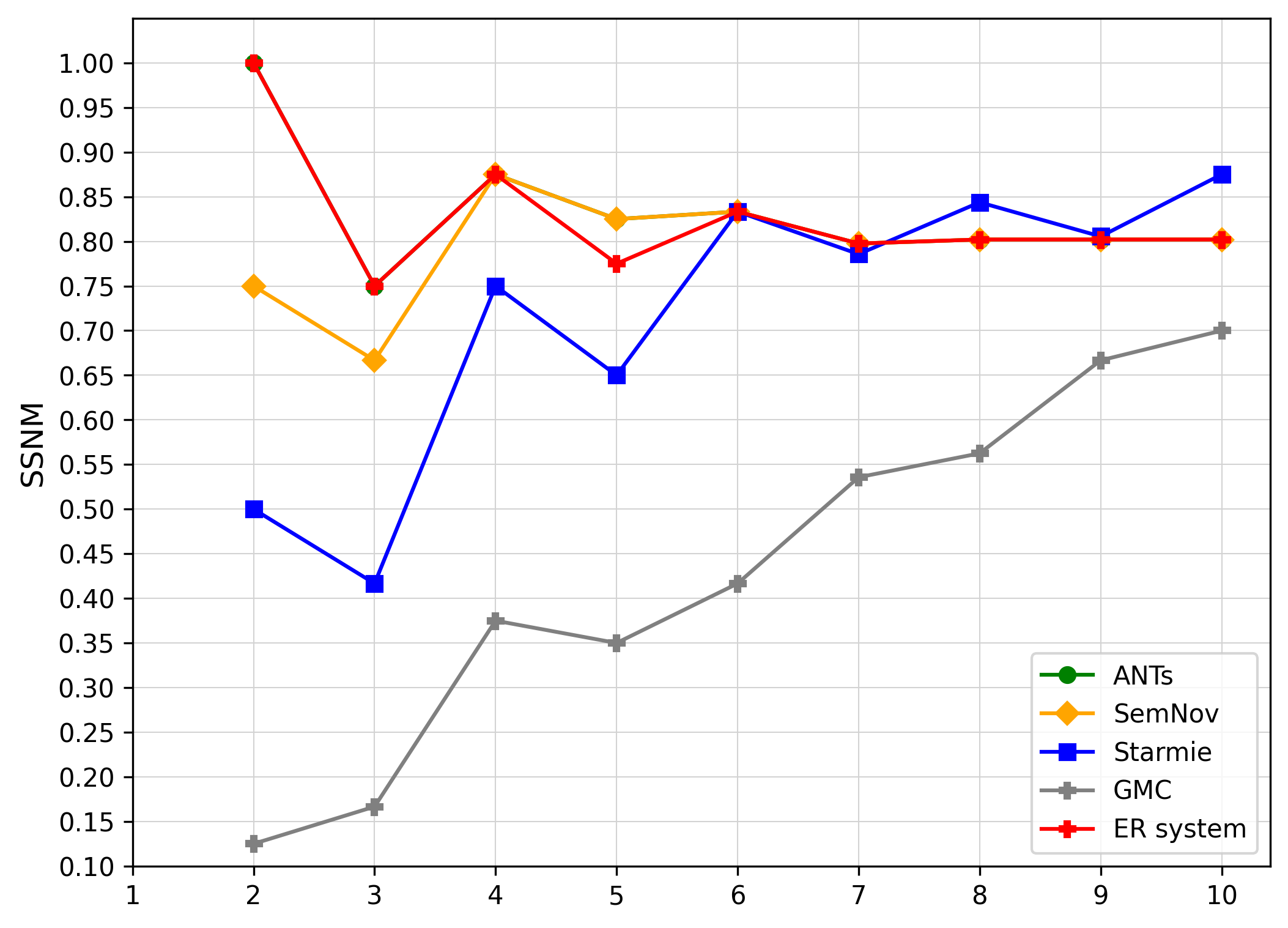}
    \caption{}
    \label{fig:ssnmfig3extendedand_filtered_dust}
  \end{subfigure}
  \hfill
  \begin{subfigure}[t]{0.48\columnwidth}
    \centering
    \includegraphics[width=\linewidth]{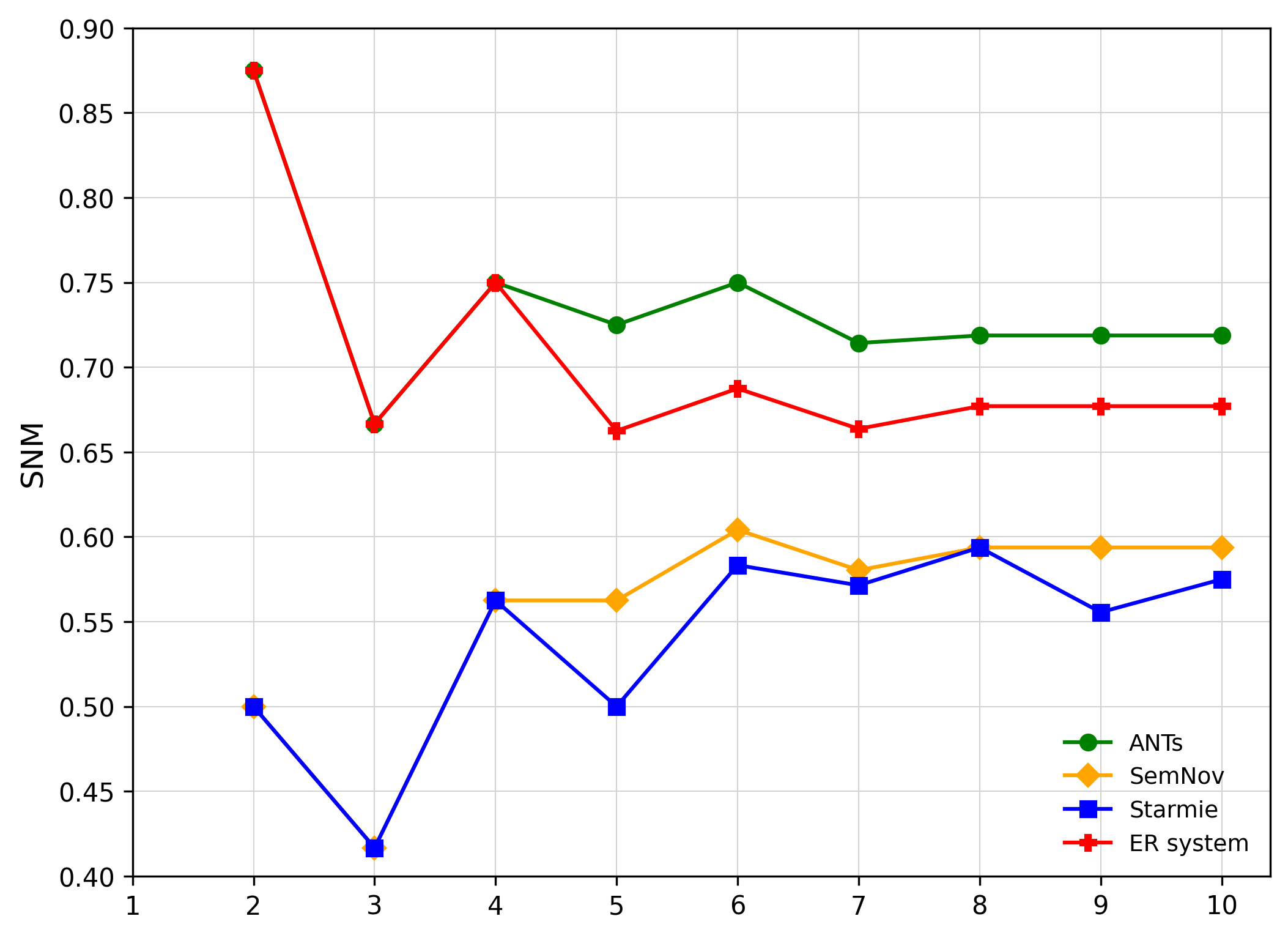}
    \caption{}
    \label{fig:snmfig3_filtered_dust}
  \end{subfigure}

  \caption{\small Results on Ugen-v2 queries with more than 12 unionable tables (Ugen12);   automatic alignment is used. }
  \label{fig:snm_dustalignment_filtered_}
\end{figure}

\begin{figure}[!t]
  \centering
   \begin{subfigure}[t]{0.48\columnwidth}
    \centering
    \includegraphics[width=\linewidth]{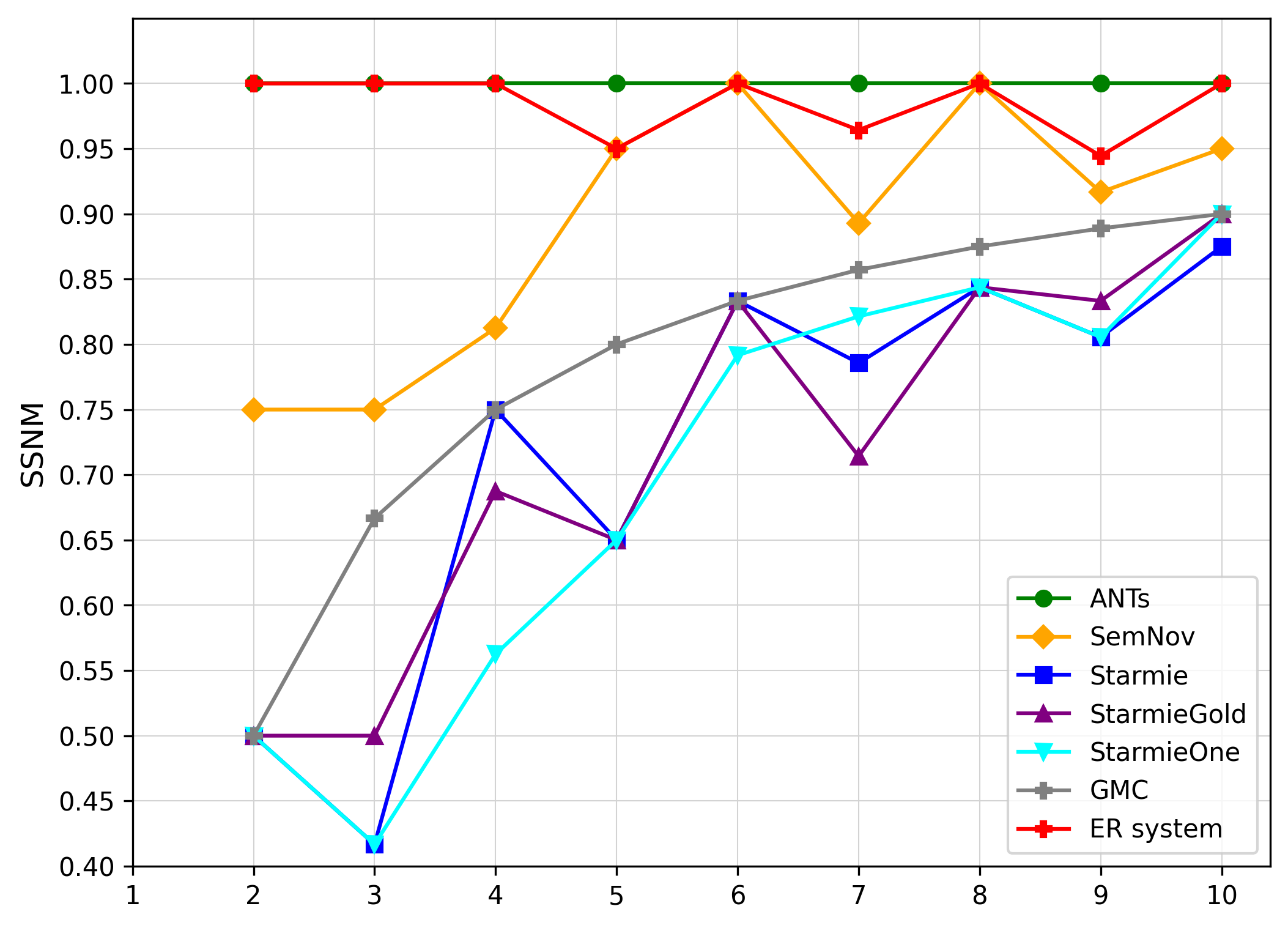}
    \caption{}
    \label{fig:ssnmfig3extendedand_filtered}
  \end{subfigure}
  \hfill
 \begin{subfigure}[t]{0.48\columnwidth}
    \centering
    \includegraphics[width=\linewidth]{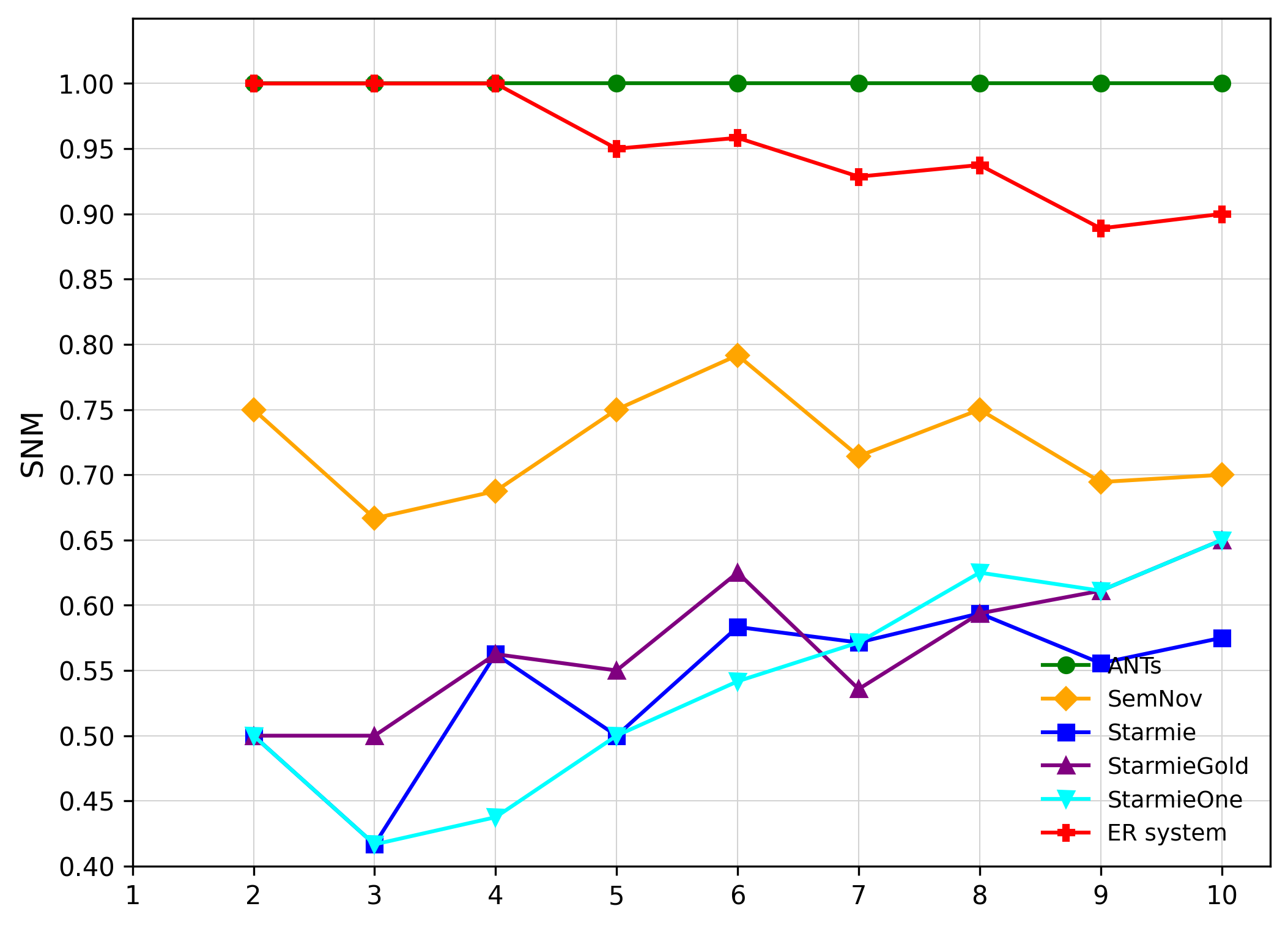}
    \caption{}
    \label{fig:snmfig3_filtered}
  \end{subfigure}

  \caption{\small Results on Ugen-v2 queries with more than 12 unionable tables (Ugen12).}
  \label{fig:snm_filtered_}
\end{figure}

\begin{figure}[!htbp]
\centering

\begin{subfigure}[t]{0.485\columnwidth}
  \includegraphics[width=\linewidth]{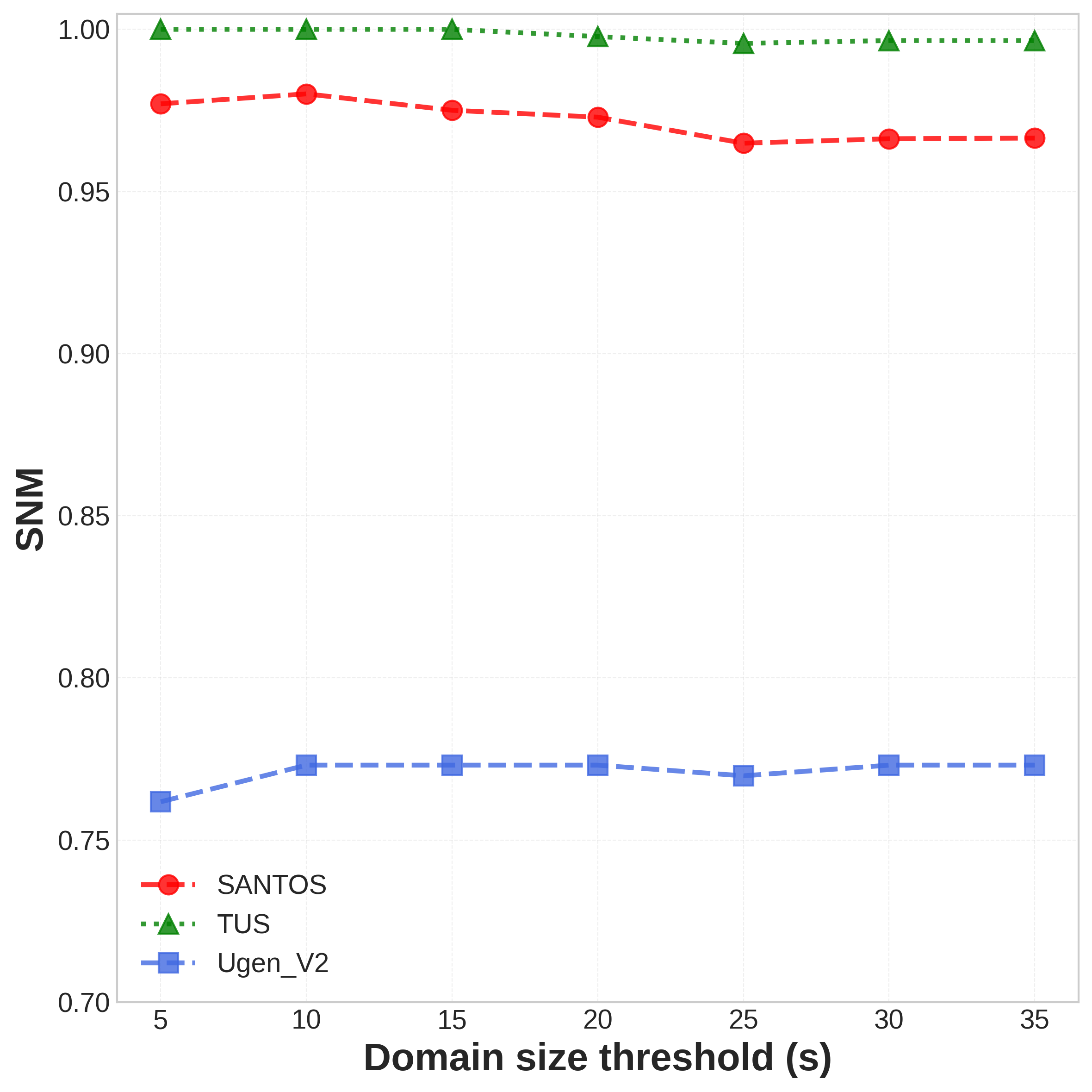}
  \caption{\small }
  \label{fig:snm_s}
\end{subfigure}
\hfill
\begin{subfigure}[t]{0.485\columnwidth}
  \includegraphics[width=\linewidth]{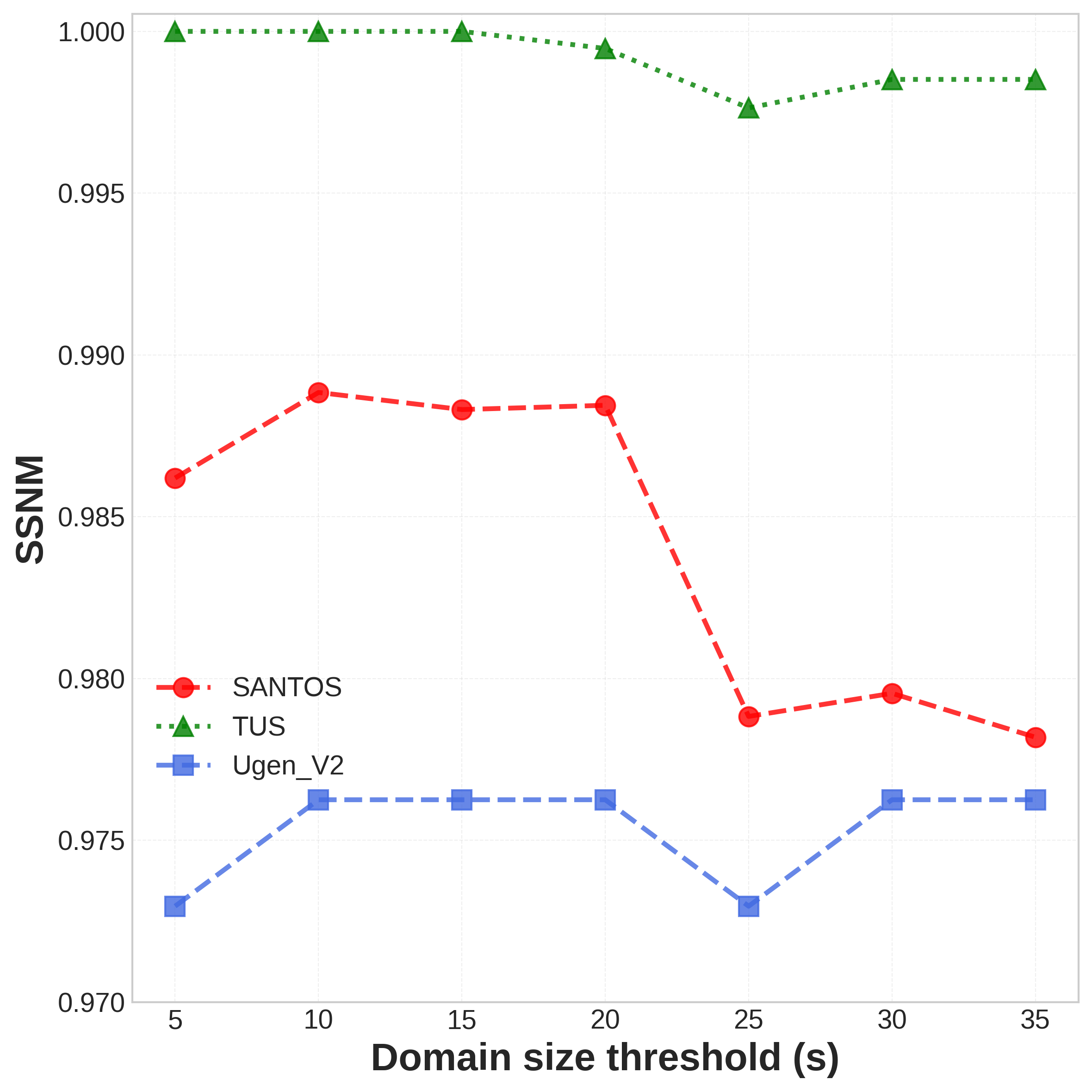}
  \caption{\small }
  \label{fig:ssnm_s}
\end{subfigure}

\begin{subfigure}[t]{0.485\columnwidth}
  \includegraphics[width=\linewidth]{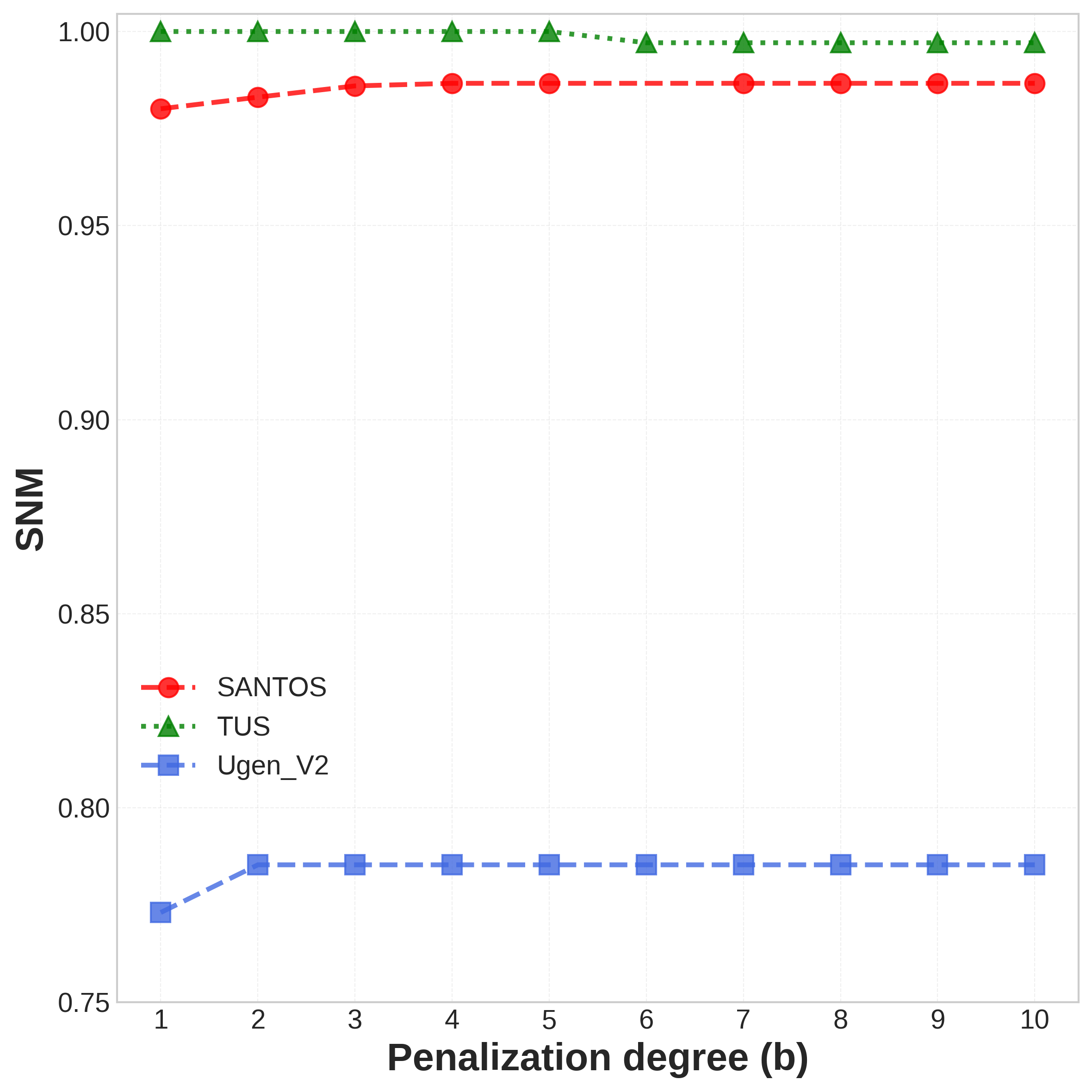}
  \caption{\small }
  \label{fig:snm_b}
\end{subfigure}
\hfill
\begin{subfigure}[t]{0.485\columnwidth}
  \includegraphics[width=\linewidth]{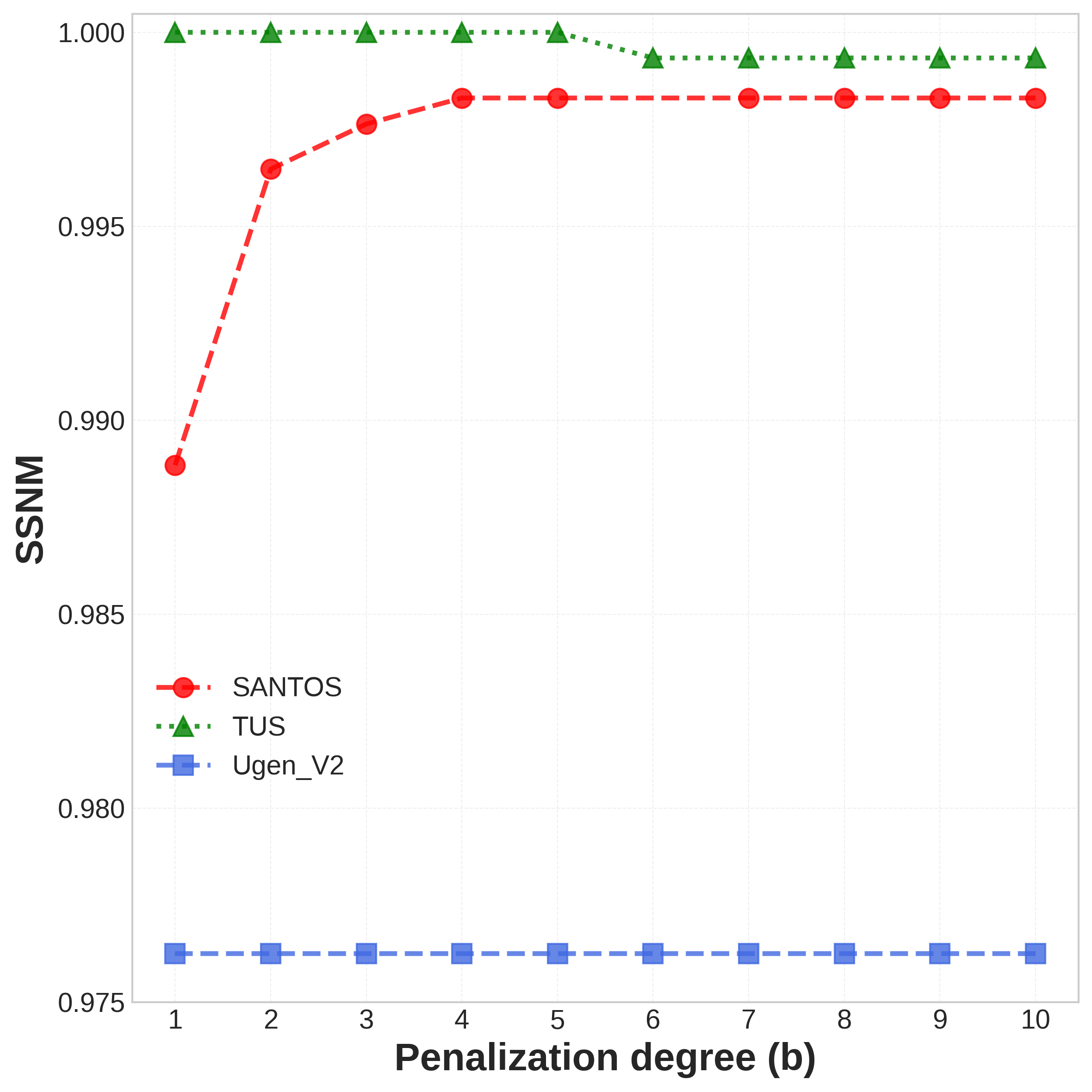}
  \caption{\small }
  \label{fig:ssnm_b}
\end{subfigure}

\caption{\small Effect of different values of $b$ and $s$ on \estimatename’s effectiveness, measured by SNM and SSNM. All values are averaged over all queries and all $l \in [2,10]$.}
\label{fig:ssnm_vs_snm_params}
\end{figure}

\subsection{Ablation Study: Hyperparameter Effects}\label{apnx:ablStudy_hyperparameter} 
There are two key parameters in our method: the domain-size threshold $s$ and the penalization degree $b$. In our main experiments, we set these values empirically based on a small-scale parameter exploration and we use $s = 20$ and $b=1$ for all datasets. In this ablation study, we investigate their impact on the effectiveness of \estimatename. We adopt a two-stage (sequential) hyperparameter-tuning strategy: we first tune $s$ and, conditional on its selected value, we then tune $b$. In the first stage, we fix the penalization degree at $b = 1$ and report SNM and SSNM for $s \in \{5, 10, 15, 20, 25, 30, 35\}$, averaged over all queries and all $l \in [2,10]$ across the three benchmarks. As shown in Figures~\ref{fig:snm_s} and \ref{fig:ssnm_s}, the value of $s$ that yields the best SNM and SSNM is $10$ for Santos, $5$, $10$, or $15$ for TUS, and $10$, $15$, or $20$ for Ugen-v2. This confirms that $s$ should be selected in a data-dependent manner. Next, we fix $s = 10$, its optimal value across all datasets, and report results for penalization degrees $b \in [1,10]$. As shown in Figures~\ref{fig:snm_b} and \ref{fig:ssnm_b}, for TUS both SNM and SSNM attain their maximum for $b \in [1,5]$. On Santos, SNM is relatively stable and less sensitive to $b$ than SSNM; both metrics reach their maximum for $b \in [4,10]$. For Ugen-v2, SNM is maximized for any $b \in [2,10]$, while SSNM is not noticeably affected by $b$.
 
\begin{figure}[!t]
  \centering

    \begin{subfigure}[t]{0.485\columnwidth}
    \centering
    \includegraphics[width=\linewidth]{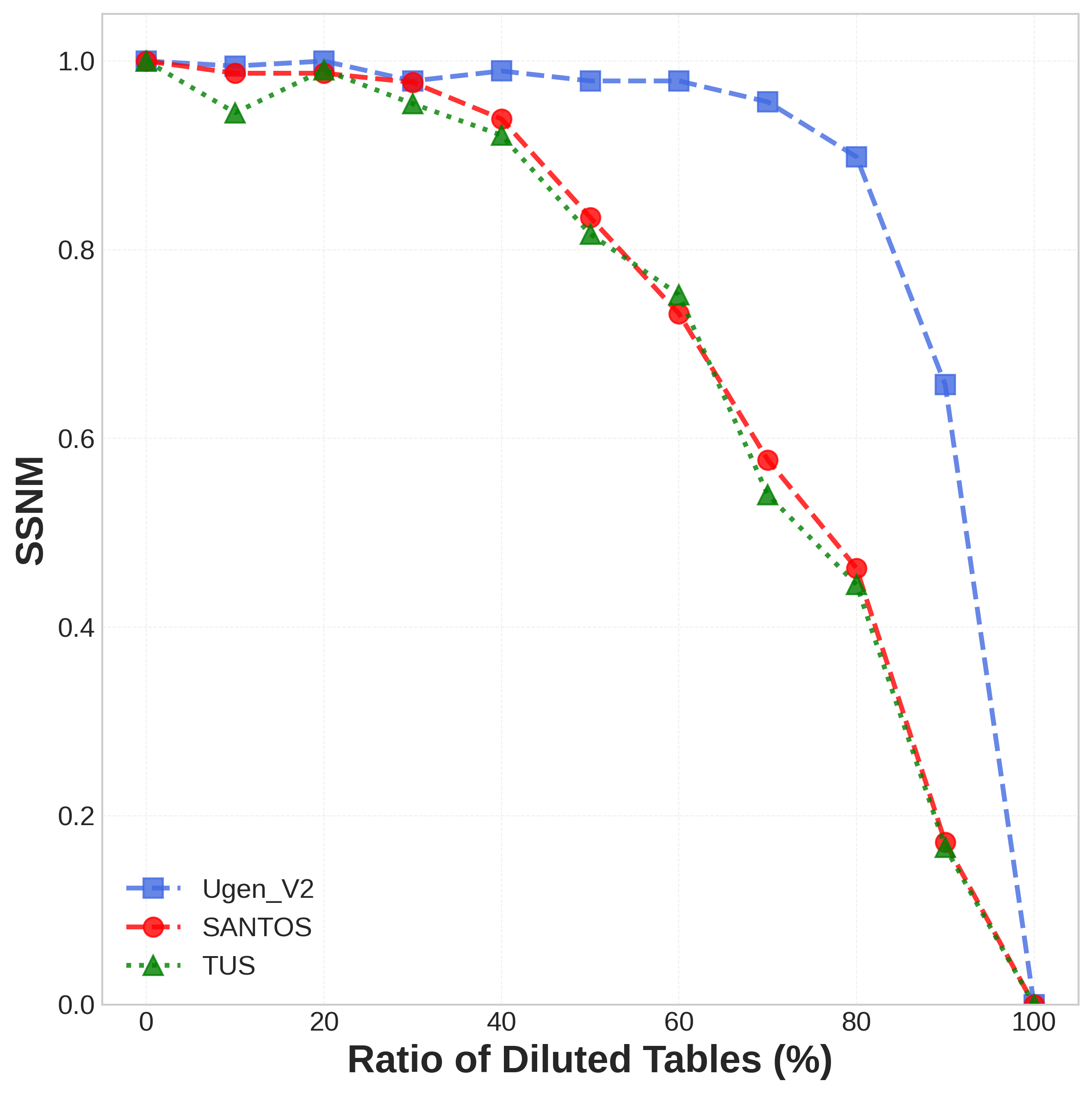}
    \caption{}
    \label{fig:ssnm_InitialSearch}
  \end{subfigure}
  \hfill
  \begin{subfigure}[t]{0.485\columnwidth}
    \centering
    \includegraphics[width=\linewidth]{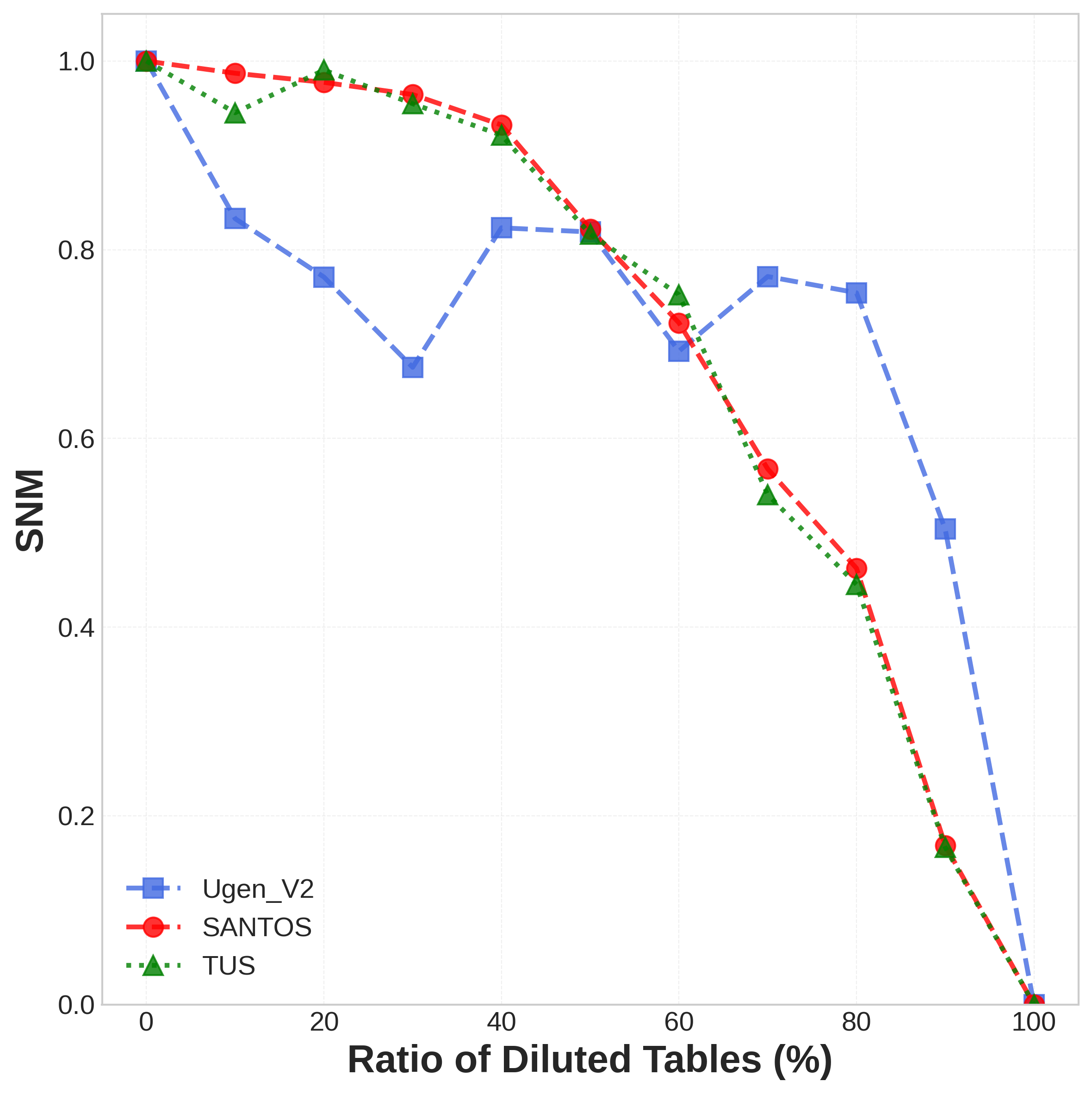}
    \caption{}
    \label{fig:snm_InitialSearch}
  \end{subfigure}

  \caption{\small Impact of initial search result's redundancy on \estimatename’ performance, averaged over all queries in each dataset.}
  \label{fig:InitialSearch}
\end{figure}
\subsection{Ablation Study: Effect of the Initial Search System}\label{subsection:intialresultimpact}
In this section, we examine how the novelty of the initial search results influences the effectiveness of our novelty-aware reranker. 
Specifically, we aim to answer the following question: \textit{How does the ratio of diluted to non-diluted tables, used as a proxy for redundancy in the initial results fed to \estimatename, affect the final SSNM and SNM scores?} To this end, we replace the initial search results with synthetic, randomly generated result sets, where we control the fraction of diluted tables returned by a hypothetical unionable table search system. We then feed these synthetic results to \estimatename and evaluate the novelty of the reranked lists. Ideally, we would report the novelty scores of the reranked result (Definition~\ref{df:nscore_resultset}), but computing them proved prohibitively time-consuming. Instead, we report SSNM and SNM for different values of the diluted tables ratio.  For each dataset we fix the size of the initial random result  to 15 tables and include all queries from that dataset. We then ask \estimatename to rerank these 15 tables and select 10 tables as the final output. As shown in Figure~\ref{fig:ssnm_InitialSearch}, increasing the ratio of diluted tables in the initial search result is clearly detrimental to \estimatename'performance on all datasets in terms of SSNM. As the redundancy of the initial result set increases, \estimatename' effectiveness decreases. This decrease is gradual at first, indicating that reranking with \estimatename is capable of handling moderate levels of redundancy. However, once the fraction of diluted tables exceeds 40\% on Santos and TUS, and approximately 70\% on Ugen-v2, the performance drop becomes much steeper. A similar  pattern holds for SNM on Santos and TUS (Figure~\ref{fig:snm_InitialSearch}). For Ugen-v2, however, the dataset is much sparser: most queries have only a small number of unionable tables and the target ratios cannot always be achieved. This explains the unusual increases observed around the 40\% and 70\% ratio.

}
 
\end{document}